\newtheorem{thm}{Theorem}[section]
\newtheorem{cor}[thm]{Corollary}
\newtheorem{lemma}[thm]{Lemma}
\theoremstyle{definition}
\newtheorem{defn}[thm]{Definition}
\theoremstyle{remark}
\newtheorem{rem}[thm]{Remark}
\numberwithin{equation}{section}
\newcommand{\sgn}{\mathrm{sgn}}
\newcommand{\RR}{\mathbb{R}}
\newcommand{\QQ}{\mathbb{Q}}
\newcommand{\PP}{\mathbb{P}}
\newcommand{\cA}{\mathcal{A}}
\newcommand{\cF}{\mathcal{F}}
\newcommand{\pp}{\textup{\texttt{+}}}
\newcommand{\mm}{\textup{\texttt{-}}}
\newcommand{\wind}{\textup{wind}}  
\renewcommand{\abs}[1]{\left\vert#1\right\vert}
\renewcommand{\set}[1]{\left\{#1\right\}}
\newcommand{\Rplus}{\mathbb{R}_{\geqslant 0}}
\newcommand{\RN}[1]{%
  \textup{\uppercase\expandafter{\romannumeral#1}}%
}
\title[Term structure shapes in the Svensson family]{Term structure shapes and their consistent dynamics in the Svensson family}
\author{Martin Keller-Ressel, Felix Sachse}
\thanks{The authors acknowledge funding from the DFG-Project `Shapes of the Term Structure of Interest Rates'}
\address{Department of Mathematics, TU Dresden, Germany}
\email{martin.keller-ressel@tu-dresden.de, felix.sachse@tu-dresden.de}
\begin{document}
\begin{abstract}
We examine the shapes attainable by the forward- and yield-curve in the widely-used Svensson family, including the Nelson-Siegel and Bliss subfamilies. We provide a complete classification of all attainable shapes and partition the parameter space of each family according to these shapes. Building upon these results, we then examine the consistent dynamic evolution of the Svensson family under absence of arbitrage. Our analysis shows that consistent dynamics further restrict the set of attainable shapes, and we demonstrate that certain complex shapes can no longer appear after a deterministic time horizon. Moreover a single shape (either inverse of normal curves) must dominate in the long-run.
\end{abstract}
\maketitle

\section{Introduction}
The Nelson-Siegel and the Svensson family are parametric interpolation families for yield curves and forward curves, which are widely used by national banks and other financial institutions \cite{BIS05}. The Nelson-Siegel family \cite{NS87} expresses the forward curve as a linear combination of three basis functions, commonly associated to \emph{level}, \emph{slope} and \emph{curvature} in the form 
     \begin{align}\label{eq:nelson}
        \phi_{NS}(x) = \beta_0 + \beta_1\exp\left(-\frac{x}{\tau}\right) + \frac{\beta_2}{\tau}x\exp\left(-\frac{x}{\tau}\right).
    \end{align}
Svensson \cite{Sve94} argues that this family is not flexible enough to reproduce more complex shapes with multiple \emph{humps} and \emph{dips}\footnote{We use `hump' and `dip' synonymous with `local maximum' and `local minimum'.}, as they are frequently encountered in the market, and adds another curvature term with a different time-scale $\tau_2 \neq \tau_1$, resulting in 
    \begin{align}\label{eq:svensson}
        \phi_{S}(x) = \beta_0 + \beta_1\exp\left(-\frac{x}{\tau_1}\right) + \frac{\beta_2}{\tau_1}x\exp\left(-\frac{x}{\tau_1}\right) + \frac{\beta_3}{\tau_2}x\exp\left(-\frac{x}{\tau_2}\right).
    \end{align}
    A further interpolation family, due to Bliss \cite{Bli97}, is obtained by setting $\beta_2 = 0$ in the Svensson parametrization. Here, we are interested in the term structure shapes\footnote{In our terminology the term structure's \emph{shape} is characterized by the number and types of its local extrema; see also Table~\ref{tab:shape}.} that can be represented by the Svensson family of curves and by its subfamilies (Nelson-Siegel, Bliss). The shape of the term structure is a fundamental economic indicator and it encodes important information on market preferences for short-term vs. long-term investments, on expectations of central bank decisions and on the general economic outlook. Decreasing (`\texttt{inverse}') shapes of the term structure, for example, have been identified as signals of economic recessions, see \cite{bauer2018economic}. On the other hand, complex shapes with multiple local extrema have frequently been observed in both US and Euro area markets, see e.g. \cite{gurkaynak2007us} or \cite[App.~B]{diez2020yield}. Two exemplary yield curves from the Euro area are shown in \cref{fig:examples}. 
    
    Despite the wide adoption of the Svensson family, some of its basic properties relating to term-structure shapes have never been systematically studied, such as these simple questions:
    \begin{description}
    \item[Q1] Which shapes, exactly, can be represented by the Svensson family, and which cannot? 
    \item[Q2] What is the shape of the forward curve and the yield curve for a given parameter vector $\beta = (\beta_0 ,\beta_1, \beta_2, \beta_3)$?
    \item[Q3] Under a consistent dynamic evolution\footnote{Consistent in the sense of \cite{bjork1999interest, filipovic1999note,Fil00}.} of the Svensson family, which shapes are observed with strictly positive probability?
    \end{description}

 We will refer to the the first question as the \emph{classification problem}, the second question as the \emph{segmentation problem} (asking for a  segmentation of the parameter space according to shape) and the third question as the \emph{consistent dynamics problem}. While partial answers to these problems can certainly be given by numerical experiment, we are aiming for more general and systematic answers: We want to understand how the parameter space $\Theta = \set{\beta: (\beta_0,\beta_1,\beta_2,\beta_3) \in \RR^4}$ of the Svensson family decomposes into regions $R_1, \dotsc, R_k$, such that each region is associated to a certain shape of the forward curve or the yield curve, and how frequently these regions are visited by the corresponding stochastic factor process under a consistent dynamic simulation. Despite the apparent simplicity of questions \textbf{Q1}-\textbf{Q3}, it seems that such systematic results are currently missing from the literature\footnote{In models different from Svensson's, questions \textbf{Q1}-\textbf{Q3} have been answered for one-dimensional affine short rate models in \cite{keller-ressel2008yield, keller-ressel2018correction} and for the two-dimensional Vasicek model in \cite{diez2020yield, KR21, KRS23}.}, and we intend to close this gap with this paper.
 
 \begin{table}[hbtp]
\begin{center}
\begin{tabular}{p{3cm}p{5cm}p{3cm}} 
\toprule
Shape of the term structure & Description & Sign sequence of derivative\\ 
\midrule 
\texttt{normal (n)} & strictly increasing & $\pp$\\
\texttt{inverse (i)} & strictly decreasing & $\mm$\\
\texttt{humped (h)} & single local maximum & $\pp \mm$\\
\texttt{dipped (d)} & single local minimum & $\mm \pp$\\
\texttt{hd} & hump-dip, i.e. local maximum followed by local minimum & $\pp \mm \pp$\\
\texttt{dh}, \texttt{hdh}, etc. & further sequences of multiple `dips'  and `humps'  & $\dotsc$ \\
\bottomrule
\end{tabular}
\end{center}
\caption{Shapes of the term structure; reproduced from \cite{KRS23}.\label{tab:shape}}
\end{table}

\begin{figure}
        
            \centering
            \begin{subfigure}{0.8\textwidth}
            \includegraphics[width=\textwidth]{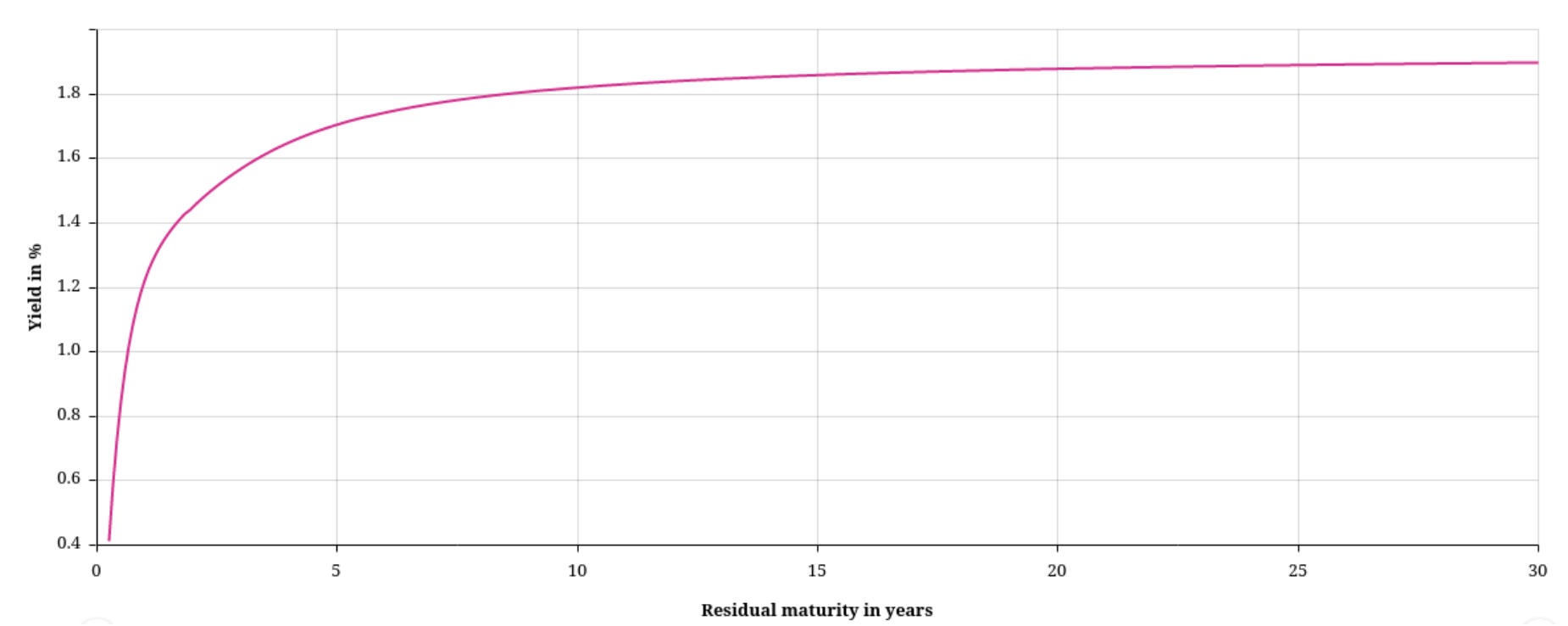}
            \caption{A \texttt{normal} yield curve}
            \label{fig:yield_curve_normal}
            \end{subfigure}

                 \begin{subfigure}{0.8\textwidth}
 
            \includegraphics[width=\textwidth]{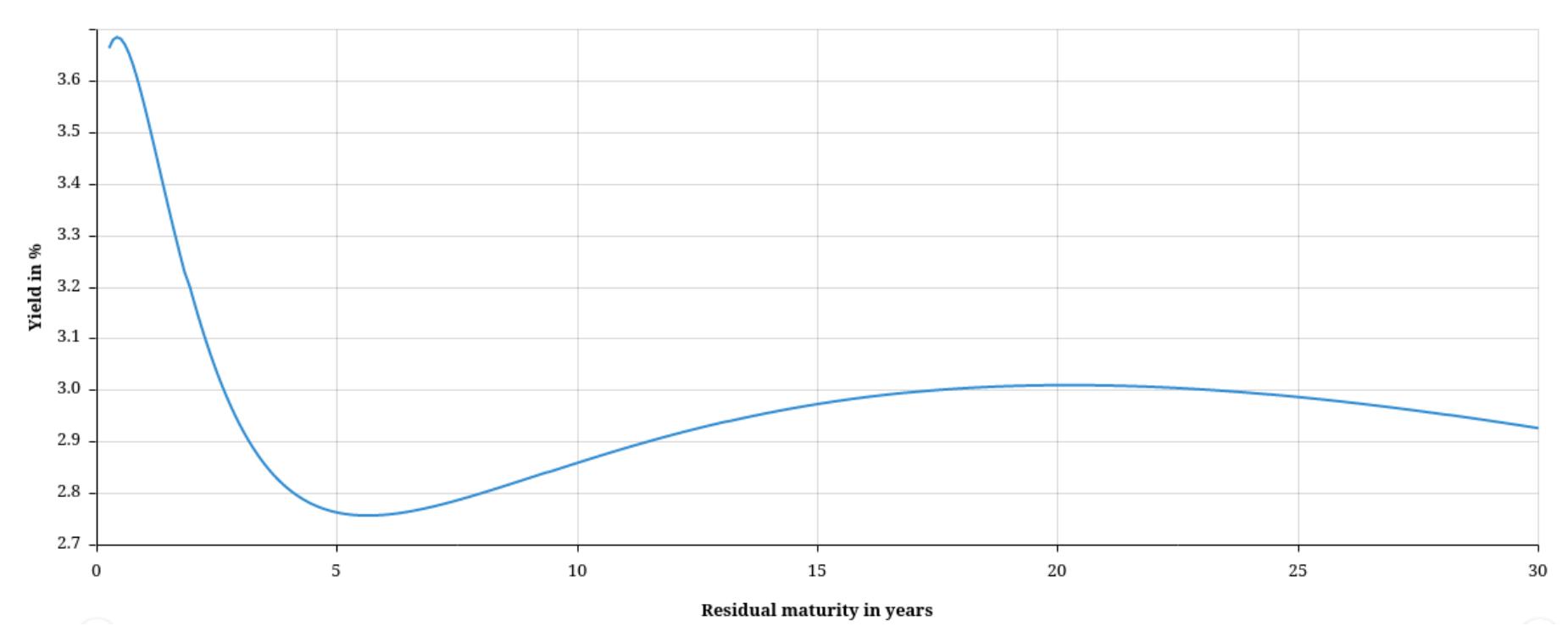}
            \caption{A yield curve of shape \texttt{hdh}}
            \label{fig:yield_curve_hdh}
            \end{subfigure}
            \caption{Two differently shaped yield curves for AAA-rated bonds in the Euro area. Plots generated through the ECB's website \protect\url{https://www.ecb.europa.eu} for (A) Sep 16, 2022 and (B) Sep 27, 2023.} \label{fig:examples}
\end{figure}

 \section{Overview of results}
 Before providing mathematical details and methods, we give an overview of our results. First, as a simple observation, the Svensson family (with $\beta_3 \neq 0$) can be reparametrized as
    \begin{align}\label{eq:svensson_2}
        \phi_{S}(x) = \beta_0 + \beta_3 \left( \gamma_\RN{2} \exp\left(-\frac{x}{\tau_1}\right) + \frac{\gamma_\RN{1}}{\tau_1}x\exp\left(-\frac{x}{\tau_1}\right) + \frac{1}{\tau_2}x\exp\left(-\frac{x}{\tau_2}\right) \right),
    \end{align}
 where we have set
 \begin{equation}\label{eq:gamma_coord}
 \gamma = (\gamma_\RN{1}, \gamma_\RN{2}) = \left(\frac{\beta_2}{\beta_3}, \frac{\beta_1}{\beta_3}\right) \in \RR^2.
 \end{equation}
As an additive shift, $\beta_0$ has no influence on the term structure's shape. Moreover, the scaling factor $\beta_3$ will only influence the shape through its sign. Thus, to answer questions \textbf{Q1}-\textbf{Q3}, the effective parameter space can be reduced to the two-dimensional space
 \[\Theta' = \set{\gamma: \gamma \in \RR^2}.\]
Furthermore, it will turn out that two \emph{phase transitions} in the Svensson family's behaviour will take place, depending on the ratio
\begin{equation}\label{eq:ratio}
 r = \frac{\tau_1}{\tau_2}
 \end{equation}
 of time-scales. 
 \begin{defn}\label{def:regime}
 We distinguish the following regimes:
             \begin{itemize}
                                 \item $r > 1$: {\bf scale-regular} (sr),
                    \item $r \in [1/3,1)$: {\bf weakly scale-inverted} (wsi), and
                    \item $r \in (0,1/3)$: {\bf strongly scale-inverted} (ssi).
                \end{itemize}
                \end{defn}
                
\subsection{The classification and segmentation problems}
Our first main results answer the question \textbf{Q1} of \emph{shape classification}. We start with a basic result, which is easy to obtain from arguments based on Tchebycheff systems, see Sec.~\ref{sub:TP}: 
\begin{thm}\label{thm:basic}The forward curve and the yield curve have
\begin{enumerate}
\item at most a single local extremum in the Nelson-Siegel family,
\item at most two local extrema in the Bliss family, and
\item at most three local extrema in the Svensson family.
\end{enumerate}
\end{thm}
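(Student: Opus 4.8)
The plan is to reduce the counting of local extrema of each curve to counting the real zeros of its derivative, and then to bound the latter using the Tchebycheff-system structure of exponential polynomials. The decisive analytic fact is that a function of the form $\sum_j p_j(x)\, e^{\lambda_j x}$, with distinct real rates $\lambda_j$ and polynomial coefficients $p_j$, spans an extended complete Tchebycheff system; consequently, unless it vanishes identically, it has at most $\big(\sum_j (\deg p_j + 1)\big) - 1$ real zeros, counted with multiplicity. This bound is exactly the standard inductive Rolle estimate: divide by $e^{\lambda_1 x}$, differentiate $\deg p_1 + 1$ times to annihilate the first summand, and account for the loss of at most one zero per differentiation.

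I would first treat the forward curve. Differentiating the Svensson form shows that $\phi_S'$ is a linear combination of the four functions $e^{-x/\tau_1}$, $x\,e^{-x/\tau_1}$, $e^{-x/\tau_2}$, $x\,e^{-x/\tau_2}$, i.e. an exponential polynomial with two distinct rates and polynomial parts of degree $1$ each, so the dimension count gives at most $(2+2)-1 = 3$ zeros. For Nelson--Siegel there is a single rate, and $\phi_{NS}'$ equals $e^{-x/\tau}$ times an affine polynomial, yielding at most $2 - 1 = 1$ zero. In the Bliss case $\beta_2 = 0$ removes the term $x\,e^{-x/\tau_1}$, leaving the rate $-1/\tau_1$ with a constant coefficient (degree $0$) and the rate $-1/\tau_2$ with an affine coefficient (degree $1$), for a bound of $(1+2)-1 = 2$. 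Since each interior local extremum forces a zero of the derivative, parts (1)--(3) follow for the forward curve; the degenerate cases in which $\phi'$ vanishes identically give constant curves with no extrema and are covered trivially.

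It then remains to transfer these bounds to the yield curve $y(x) = \tfrac1x \int_0^x \phi(u)\,du$. Here I would introduce $G(x) := x\,\phi(x) - \int_0^x \phi(u)\,du = x^2\, y'(x)$, so that on $(0,\infty)$ the zeros of $y'$ coincide with those of $G$. A direct computation gives $G(0) = 0$ and $G'(x) = x\,\phi'(x)$, so on $(0,\infty)$ the zeros of $G'$ are precisely the zeros of $\phi'$. If $G$ had $k$ positive zeros, then together with the boundary zero at $x = 0$ Rolle's theorem produces at least $k$ zeros of $G'$, hence at least $k$ zeros of $\phi'$, on $(0,\infty)$. Thus the number of positive zeros of $y'$ never exceeds the number of positive zeros of $\phi'$, and the bounds $1, 2, 3$ carry over verbatim to the yield curve.

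The only genuinely substantive ingredient is the zero bound for exponential polynomials; once it is in place, the computation for the forward curve is purely mechanical. The point requiring the most care is the yield-curve step, where the averaging operator could a priori manufacture additional oscillations: the identity $G' = x\,\phi'$ together with the boundary zero $G(0) = 0$ is exactly what rules this out, so this variation-diminishing property of averaging is the crux of reducing the yield curve to the forward curve.
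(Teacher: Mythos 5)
Your proposal is correct, and its two halves relate to the paper differently. For the forward curve you use exactly the paper's idea: the derivative is a linear combination of a Tchebycheff system, hence has at most $n-1$ zeros. The paper establishes the Tchebycheff property by computing Wronskians and invoking the ECT-criterion of Karlin--Studden (its Lemma on the systems $(f_1,f_2)$, $(f_1,f_3,f_4)$, $(f_1,f_2,f_3,f_4)$), whereas you justify the same zero bound by the classical Rolle induction for exponential polynomials (divide by $e^{\lambda_1 x}$, differentiate away the first block); these are two standard proofs of the same fact, yours being somewhat more elementary and self-contained, the paper's having the advantage that the Wronskian computations are reused later (they drive the entire envelope analysis in Section 5). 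Your dimension counts ($4$, $3$, $2$ basis functions, hence $3$, $2$, $1$ zeros) match the paper's, and your handling of the degenerate case $\phi'\equiv 0$ is fine since the systems are linearly independent. The genuine divergence is the yield-curve step: the paper simply cites Theorem~3.4 of \cite{KR21} as a black box, while you prove the transfer directly via $G(x) = x^2 y'(x)$, using $G(0)=0$, $G'(x) = x\,\phi'(x)$, and Rolle between the $k$ positive zeros of $G$ together with the boundary zero to produce $k$ zeros of $\phi'$. This argument is valid (note it bounds the number of zeros of $y'$ by the number of zeros of $\phi'$, which suffices here because the Tchebycheff bound controls zeros of $\phi'$, not merely sign changes), and it makes the theorem self-contained at essentially no cost; what the cited theorem buys the paper instead is the finer information --- matching initial slope and preservation of extremum type when the counts agree --- which is needed elsewhere (e.g.\ in the segmentation results and Lemma~A.2) but not for this statement.
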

Note that this result gives only necessary, but no sufficient conditions for a particular shape to appear. To give a more refined picture, we call a term structure shape (in the sense of Table~\ref{tab:shape}) \emph{attainable}, if there exists a parameter $\beta \in \Theta$ such that the corresponding yield curve or forward curve in the Svensson family exhibits this shape. The same terminology also applies in the Bliss family, the Nelson-Siegel family, or under other parameter restrictions, e.g. when the sign of $\beta_3$ is restricted.

                                \begin{thm}\label{thm:main}The following statements hold for both the yield curve and the forward curve in the Svensson family:
                \begin{enumerate}[(a)]
                              \item In the scale-regular regime the shapes \texttt{normal, inverse, dipped and humped} are attainable for any $\beta_3 \in \RR$. If $\beta_3 > 0$ also the shapes \texttt{hd} and \texttt{hdh} are attainable; if $\beta_3 < 0$ also the shapes \texttt{dh} and \texttt{dhd}.
                              \item In the weakly scale-inverted regime the shapes \texttt{inverse}, \texttt{humped} and \texttt{dh} are attainable if $\beta_3 > 0$. The shapes \texttt{normal}, \texttt{dipped} and \texttt{hd} are attainable if $\beta_3 < 0$. 
\item In the strongly scale-inverted regime the shapes \texttt{inverse}, \texttt{humped}, \texttt{dh} and \texttt{hdh} are attainable if $\beta_3 > 0$. The shapes \texttt{normal}, \texttt{dipped}, \texttt{hd}, \texttt{dhd} are attainable if $\beta_3 < 0$. 
\end{enumerate}
 \end{thm}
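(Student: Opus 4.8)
The plan is to reduce the entire classification to counting the sign changes of the forward curve's derivative on $(0,\infty)$, since the shape in the sense of Table~\ref{tab:shape} is precisely the sign sequence of that derivative. Starting from the reparametrisation \eqref{eq:svensson_2} I would differentiate and factor out the positive quantity $e^{-x/\tau_2}$ to obtain
\begin{equation*}
\phi_S'(x) = \frac{\beta_3}{\tau_2}\, e^{-x/\tau_2}\, h(x), \qquad h(x) = e^{-\delta x}(A + Bx) + (C + Dx), \quad \delta := \frac{1}{\tau_1} - \frac{1}{\tau_2},
\end{equation*}
where $A,B,C,D$ are explicit affine functions of $\gamma=(\gamma_\RN{1},\gamma_\RN{2})$ (in particular $B$ is a negative multiple of $\gamma_\RN{1}$, while $D<0$ is constant). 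Two structural observations drive everything. First, since $e^{-x/\tau_2}>0$, the sign sequence of $\phi_S'$ equals $\sgn(\beta_3)$ times that of $h$; replacing $\beta_3$ by $-\beta_3$ therefore reverses every sign, which is exactly the reflection exchanging \texttt{hd} with \texttt{dh} and \texttt{hdh} with \texttt{dhd} that relates the two halves of each statement. Second, the regime of \eqref{eq:ratio} is encoded in the sign of $\delta$: one has $\delta<0$ in the scale-regular regime ($r>1$) and $\delta>0$ in both scale-inverted regimes ($r<1$).

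This immediately produces the regime asymmetry. As $x\to\infty$, if $\delta>0$ the exponential term vanishes and $h(x)\to C+Dx\to-\infty$ with sign $-$, regardless of $\gamma$; hence for $\beta_3>0$ the forward curve is eventually decreasing and its sign sequence must end with $-$, which rules out every shape ending in an increasing piece (\texttt{normal}, \texttt{dipped}, \texttt{hd}, \texttt{dhd}) and leaves precisely \texttt{inverse}, \texttt{humped}, \texttt{dh} and possibly \texttt{hdh}; the case $\beta_3<0$ is the mirror image. If instead $\delta<0$, the term $e^{-\delta x}(A+Bx)$ dominates at infinity and its sign $\sgn(-\gamma_\RN{1})$ is free, so both end behaviours occur and the full symmetric list \texttt{normal, inverse, dipped, humped} is available for either sign of $\beta_3$. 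For the yield curve I would pass to $\Psi(x):= x\phi_S(x)-\int_0^x\phi_S(u)\,du$, which satisfies $\Psi'(x)=x\,\phi_S'(x)$ and $\Psi(0)=0$, so that $\sgn y'(x)=\sgn\Psi(x)$; since $\Psi$ is an exponential--polynomial of the same class and $\Psi'$ vanishes exactly where $\phi_S'$ does, a short forward--yield lemma shows that yield and forward curve have the same attainable shapes, and it suffices to analyse $h$.

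To decide attainability within these lists I would count the zeros of $h$ by repeated differentiation: $h''(x)=\delta\,e^{-\delta x}(\delta Bx + \delta A - 2B)$ has at most one zero in $(0,\infty)$, so two applications of Rolle's theorem recover the bound of Theorem~\ref{thm:basic} (and its Bliss specialisation $B=0$, i.e.\ $\gamma_\RN{1}=0$, and the single-scale Nelson--Siegel reduction). The simple shapes with at most one sign change are realised by choosing $\gamma$ so that $h(0^+)=A+C$ and $h(\infty)$ have the required signs, which is an explicit and non-degenerate condition; the two-extremum shapes \texttt{hd}/\texttt{dh} are realised whenever $h$ attains two sign changes, for which the single convexity change of $h$ must fall inside $(0,\infty)$. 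The delicate point is the three-sign-change shape \texttt{hdh} (and its reflection \texttt{dhd}), which is exactly where the wsi/ssi dichotomy lives.

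The main obstacle, and the only place where the precise threshold $r=1/3$ of Definition~\ref{def:regime} enters, is showing that three genuine sign changes of $h$ are attainable if and only if $r\notin[1/3,1)$. A crude zero count cannot see this; instead I would locate the organising centre of the three-sign-change region, namely a triple zero $h(x_0)=h'(x_0)=h''(x_0)=0$, and track whether $x_0$ lies in $(0,\infty)$. Solving $h''(x_0)=0$ and $h'(x_0)=0$ gives $e^{-\delta x_0}=D/B$ (which forces $\gamma_\RN{1}>0$) and pins the tangency location; imposing $h(x_0)=0$ as well, the dimensionless coordinate $s:=\delta x_0$ is found to satisfy the clean relation $2+s=\delta\tau_2=\tfrac1r-1$, that is $s=\tfrac1r-3$. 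Since $\delta<0$ in the scale-regular regime and $\delta>0$ otherwise, the tangency abscissa $x_0=s/\delta$ is positive exactly when $r>1$ or $r<1/3$ and leaves the domain precisely on $[1/3,1)$; this is what makes \texttt{hdh}/\texttt{dhd} appear in the sr and ssi regimes but vanish in the wsi regime. I expect the remaining work to be a continuity/monotonicity argument certifying that the three-sign-change region is non-empty exactly on one side of this triple-zero transition, together with the explicit parameter constructions for the shorter lists; establishing that $s=\tfrac1r-3$ governs the transition, and that no further degeneracy intervenes, is the crux of the proof.
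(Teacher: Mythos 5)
Your reduction and your crux computation are essentially sound, and they do check out against the paper: factoring $\phi_S'(x)=\frac{\beta_3}{\tau_2}e^{-x/\tau_2}h(x)$ with $h(x)=e^{-\delta x}(A+Bx)+(C+Dx)$ is correct, the regime dichotomy is indeed $\sgn\delta$, and your triple-zero system gives $e^{-\delta x_0}=D/B$ and $x_0=\tau_2-2/\delta=\tau_2(3\tau_1-\tau_2)/(\tau_1-\tau_2)$, which is \emph{exactly} the cusp abscissa $x_*^{\rm f}$ of \eqref{eq:cusp_time_fw}; your condition $h=h'=h''=0$ is equivalent to the envelope equations \eqref{eq:envelope} together with $W(a_{\rm f},b_{\rm f},c_{\rm f})(x_0)=0$, so you have rediscovered the envelope's cusp, and your threshold $s=\tfrac1r-3$ reproduces the paper's $r\in(0,1/3)\cup(1,\infty)$ criterion from \cref{lem:Wabcf}. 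However, the proof as proposed has genuine gaps at precisely the points you defer. First, the equivalence \qq{three sign changes attainable iff the triple zero lies in $(0,\infty)$} is asserted, not proved, and the hard direction is the exclusion: in the wsi regime you must rule out that some $\gamma$ produces three transversal zeros of $h$ \emph{without} any nearby triple-zero organizing centre (zeros can also be created or destroyed at the boundary $x=0$ or at infinity, not only by collision). The paper closes exactly this hole with the winding-number formula of \cref{thm:winding}, where the absence of a cusp caps the winding number and hence $E(\gamma)$. Second, nothing in your sketch excludes \texttt{dh} and \texttt{dhd} for $\beta_3>0$ in the scale-regular regime: there $\delta<0$, the terminal sign of $h$ is free ($\sgn(-\gamma_\RN{1})$), so your end-behaviour argument is silent, yet the theorem asserts an asymmetry (\texttt{hd}, \texttt{hdh} attainable; \texttt{dh}, \texttt{dhd} not). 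In the paper this comes from the geometry: the envelope starts on $r_{nd}$, stays off $\ell_0$, and has its cusp in $Q_h$ (\cref{lem:directions} and \cref{tab:regimes}), so the higher-winding regions carry only one polarity. An elementary sign-sequence proof of this exclusion is missing from your plan and is not a routine Rolle count.

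The yield-curve transfer is the other substantive gap. Your function $\Psi(x)=x\,\phi_S(x)-\int_0^x\phi_S(u)\,du$ with $\sgn y'=\sgn\Psi$ and $\Psi'=x\,\phi_S'$ only gives the \emph{one-sided} comparison already in \cref{thm:fw_to_yield}: the yield curve has no more extrema than the forward curve. It does not show that the yield curve actually \emph{attains} \texttt{hdh} or \texttt{dhd} when the forward curve does, which is what \cref{thm:main} claims for both curves; averaging can strictly destroy complexity, so sameness of the attainable lists must be proved, not invoked as a \qq{short lemma}. (Also, $\Psi$ contains $x^2e^{-x/\tau_i}$ terms, so it is not of the same exponential-polynomial class as $h$, and your clean algebra does not repeat verbatim.) The paper handles this by running the whole envelope analysis a second time with the yield basis \eqref{eq:abc_explicit_y}: the yield Wronskian \eqref{eq:large_Wronskian_y} admits no closed-form zero, and its single-zero property in the regimes $r\in(0,1/3)\cup(1,\infty)$ is obtained indirectly by comparing initial and terminal signs with the forward Wronskian via \cref{lem:fw_to_yield}(e),(f), yielding $x_*^{\rm y}>x_*^{\rm f}$. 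Your route could in principle be completed by carrying out the analogous triple-zero and exclusion analysis for the yield combination $a_{\rm y}+b_{\rm y}\gamma_\RN{1}+c_{\rm y}\gamma_\RN{2}$, but that is a second, messier analysis, not a corollary; as it stands, the proposal proves \cref{thm:basic}, the scale-inverted end-behaviour restrictions, and the location of the \texttt{hdh}/\texttt{dhd} phase transition, but not the full attainability lists of \cref{thm:main}.
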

In this attainability result, it is not necessary to distinguish between yield curve and forward curve. The \emph{segmentation of the parameter space}, however, will differ between these cases. We illustrate regime (a) in \cref{fig:scale_regular}, regime (b) in \cref{fig:wsi}, and regime (c) in \cref{fig:ssi}. All analytic results on the segmentation problem can be found in \cref{sec:proofs}. Note that the case $\beta_3 = 0$ excluded in \cref{thm:main} corresponds to the Nelson-Siegel family and is treated in \cref{sub:NS}. Further results on the Bliss family ($\beta_2 = 0$) can be found in \cref{sub:Bliss}. Finally, we remark, that even when a particular shape is attainable, it may be attainable only for a very narrow range of parameters. See for example the tiny spot in \cref{fig:b3g0_y} that corresponds to three local extrema (\texttt{hdh}).

\begin{rem}\label{rem:polarity}
 Note that the segmentation of $\Theta'$ is independent of $\beta_3$, but changing the sign of $\beta_3$ will reverse the `polarity' of each region, in the sense that local minima and local maxima of the forward curve are interchanged. 
 \end{rem}
 \begin{figure}
        
            \centering
            \begin{subfigure}{0.85\textwidth}
            \includegraphics[width=\textwidth]{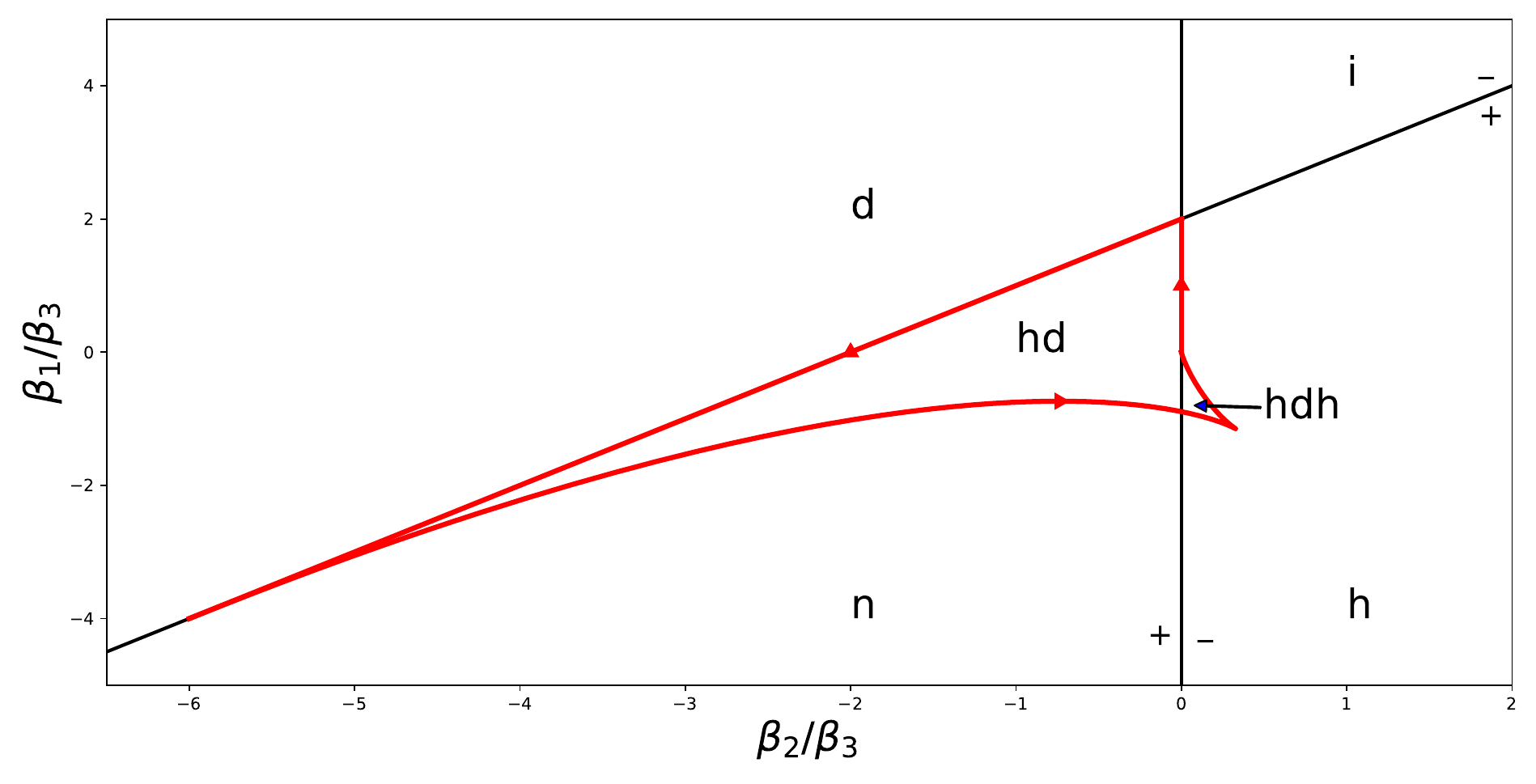}
            \caption{Shapes of the forward curve}
            \label{fig:b3g0}
            \end{subfigure}

                 \begin{subfigure}{0.85\textwidth}
 
            \includegraphics[width=\textwidth]{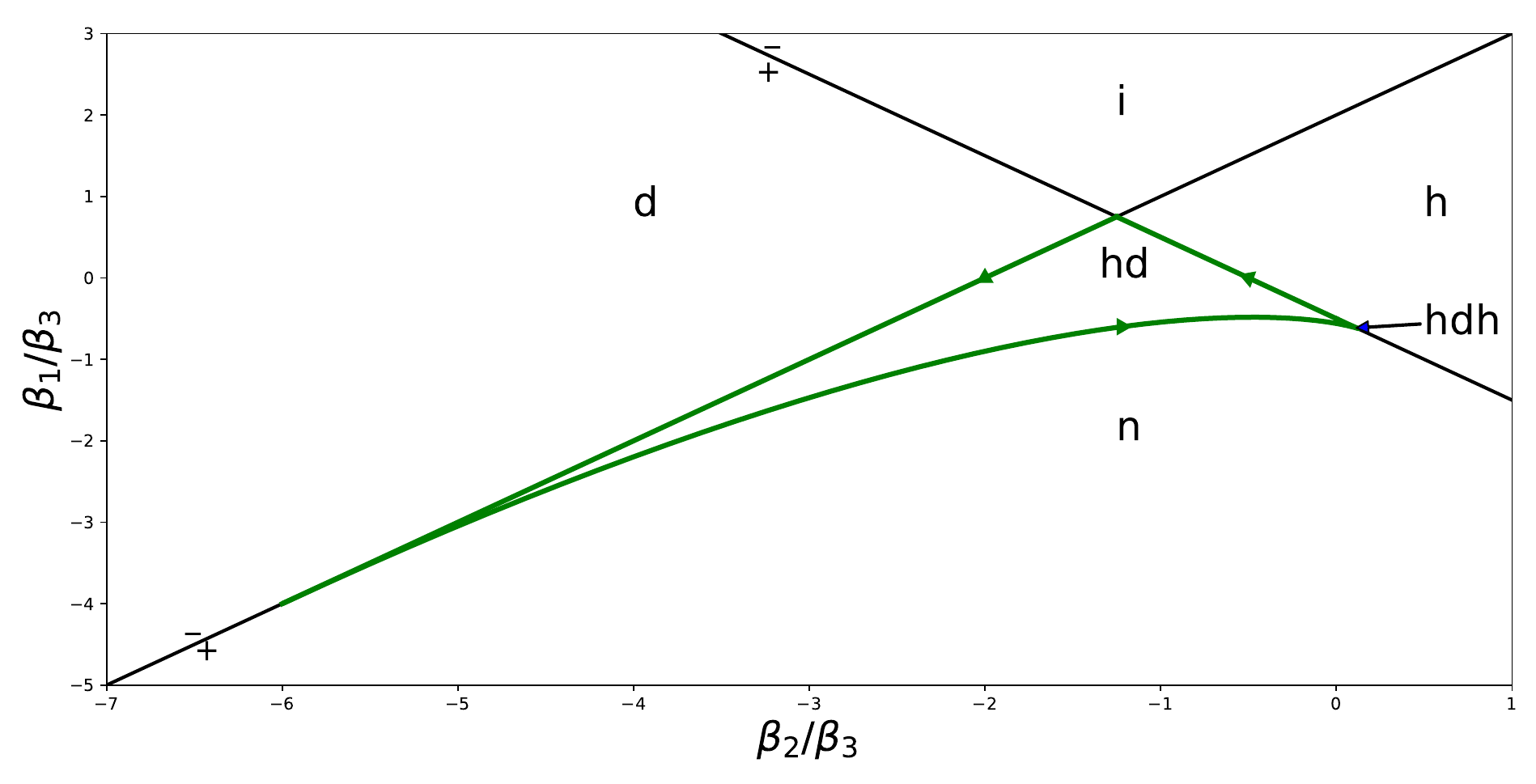}
            \caption{Shapes of the yield curve}
            \label{fig:b3g0_y}
            \end{subfigure}
            \caption{Shapes of the forward curve (A) and the yield curve (B) for different regions of the parameter space $\Theta'$ of the Svensson family in the scale-regular regime $r > 1$. Red and green curves indicate the augmented envelope $\hat \eta$ (see Def.~\ref{defn:augmented}). Parameters used are $\tau_1 = 1, \tau_2 = 1/2$ and region labels (see Table~\ref{tab:shape}) correspond to the case $\beta_3 > 0$ . For $\beta_3 < 0$ the plot stays the same, but region labels must be changed, see Remark~\ref{rem:polarity}} \label{fig:scale_regular}
\end{figure}

\subsection{Empirical observations}
The ECB publishes yield curves and forward curves, modelled with the Svensson family, on a daily basis. To complement our theoretical results, we present a brief overview of term structure regimes and shapes observed for AAA-rated bonds in the Euro Area from September 06th, 2004 to September 24th, 2024.
	
	In \cref{fig:regimes} we show how the value of $r$, see \eqref{eq:ratio}, changes over time. The plot shows that every regime from \cref{def:regime} is attained. More precisely, the regimes are attained with the following relative frequencies.
	\begin{center}
		\begin{tabular}{rrrrrrrr}
			$\beta_3 > 0$:$\quad$ 	&sr: & 2.2\%, & wsi: & 2.0\%, & ssi: & 6.7\%,\\
			$\beta_3 < 0$:$\quad$	&sr: & 46.0\%, & wsi: & 35.0\%, & ssi: & 8.1\%.
		\end{tabular}
	\end{center}
	\begin{figure}
			\includegraphics[width=\textwidth]{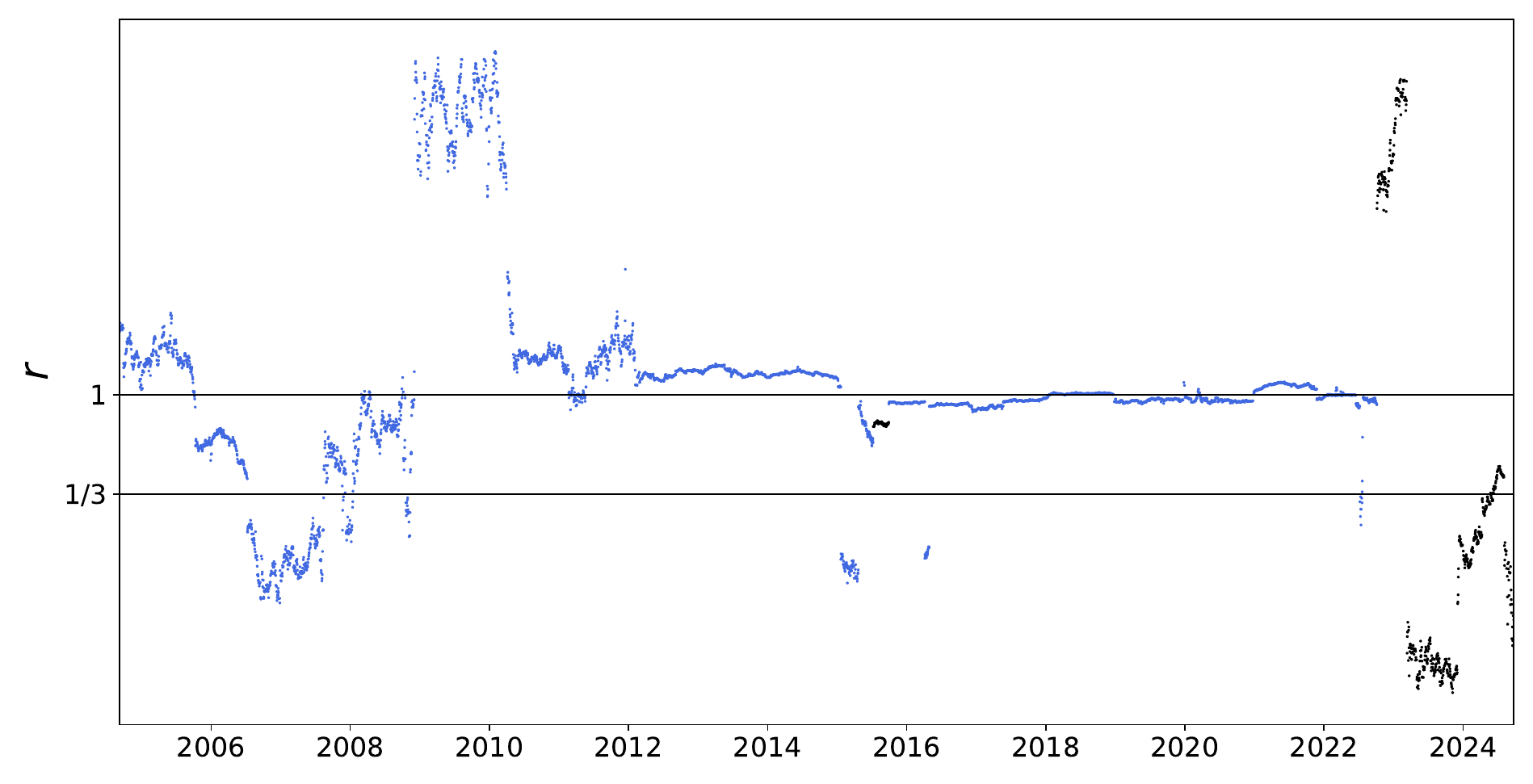}
			\caption{Values of $r = \tau_1/\tau_2$ visiting different regimes (see Def.~\ref{defn:augmented}) in the Svensson family for Euro Area AAA-rated bond data over time. Dots are colored black if $\beta_3 > 0$ and blue if $\beta_3 < 0$.}
			\label{fig:regimes}
	\end{figure}
	
	In \cref{fig:shapes} we show how the shape of the yield curve and the forward curve change over time. Every shape with at most 3 local extrema, except the \texttt{inverse} shape, is attained. More precisely, the yield curve attains the shapes with the following relative frequencies
	\begin{center}
	\begin{tabular}{rrrrrrrr}
		\texttt{n}:  & 19.2\%, &\texttt{h}:  & 7.9\%, &\texttt{hd}:  & 13.0\%, &\texttt{hdh}:  & 5.0\%\\
		\texttt{i}:  & 0.0\%, &\texttt{d}:  & 25.0\%, &\texttt{dh}:  & 29.4\%, &\texttt{dhd}:  & 0.5\%,
	\end{tabular}
	\end{center}
	and for the forward curve the values change to
	\begin{center}
	\begin{tabular}{rrrrrrrr}
	    	\texttt{n}:  & 11.6\%, &\texttt{h}:  & 10.1\%, &\texttt{hd}:  & 18.1\%, &\texttt{hdh}:  & 5.3\%\\
	    	\texttt{i}:  & 0.0\%, &\texttt{d}:  & 13.6\%, &\texttt{dh}:  & 39.7\%, &\texttt{dhd}:  & 1.6\%.
	\end{tabular}
	\end{center}
	\begin{figure}
		\includegraphics[width=\textwidth]{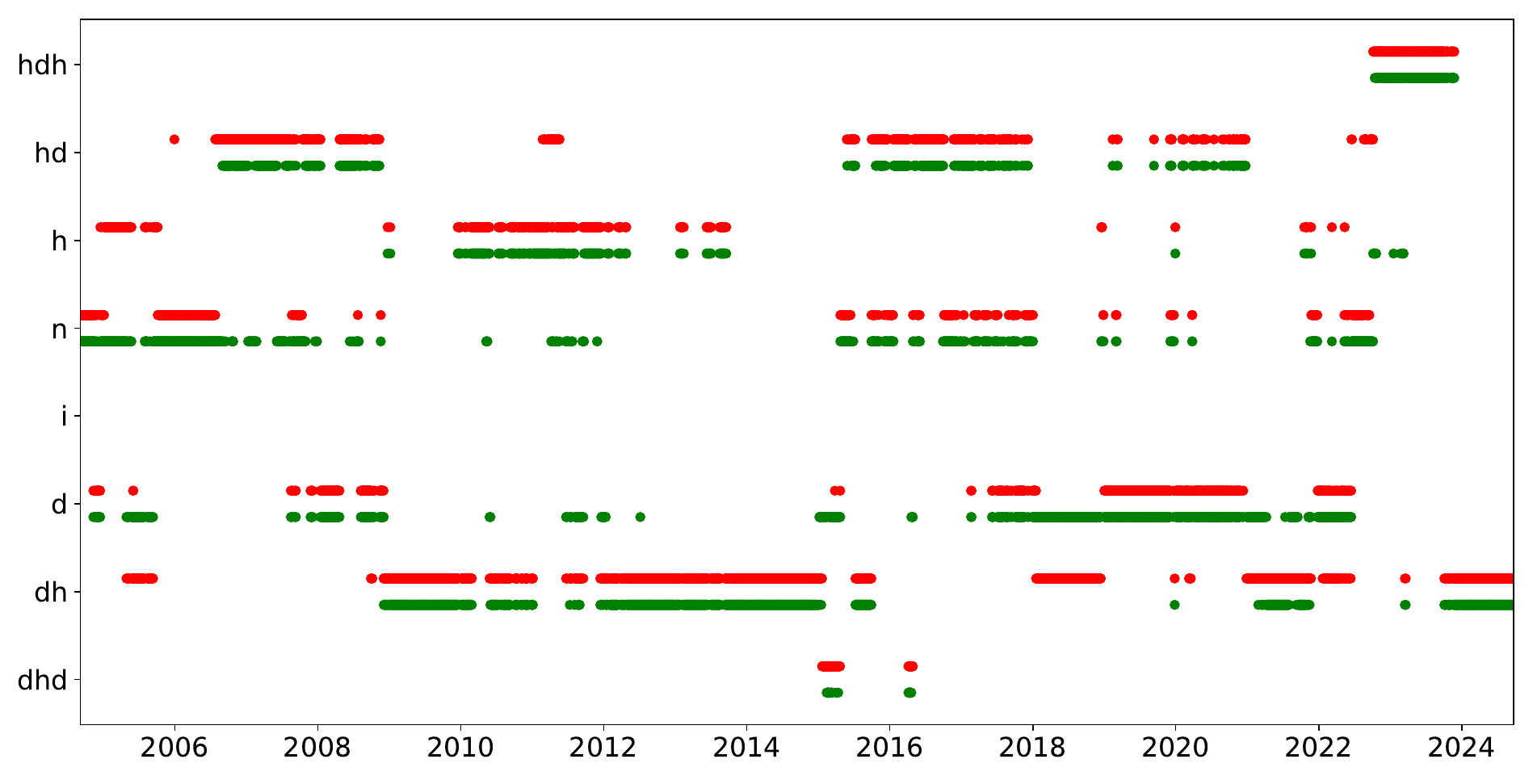}
		\caption{Shapes of the forward curve (red) and yield curve (green) for Euro Area AAA-rated bond data over time. Abbreviations are explained in Table~\ref{tab:shape}}
		\label{fig:shapes}
	\end{figure}

 \subsection{Consistent dynamic evolution of term-structure shapes}\label{sub:consistency_preview}
 The question of the \emph{consistency} of a family of parametrized forward rate curves with the arbitrage-free dynamic evolution of interest rates has been initiated by Bj\"ork and Christensen in \cite{bjork1999interest} and has been answered in the particular cases of the Nelson-Siegel and the Svensson family by Filipovic in \cite{filipovic1999note} and \cite{Fil00}. For the consistency problem, we consider a dynamic version of the Svensson family \eqref{eq:svensson}, where its extended parameter vector $\tilde \beta = (\beta_0, \beta_1, \beta_2, \beta_3, \tau_1, \tau_2)$ is `set in motion' by replacing it by an Ito process $(\tilde \beta(t))_{t \ge 0}$ of the form
 \begin{equation}\label{eq:ito}
 d\tilde \beta(t) = b_t\,dt + \sigma_t dW_t,
 \end{equation}
 where $W$ is a $d$-dimensional ($d \ge 1$) Brownian motion on a standard filtered probability space $(\Omega, \cA, (\cA_t)_{t \geq 0}, \QQ)$, and $b, \sigma$ are progressively measurable processes of suitable dimensions (see~\cite{Fil00} for details). The process $\tilde \beta(t)_{t \geq 0}$ is called \emph{consistent} with the Svensson family, if zero-coupon bond prices, discounted with the money-market account, are $\QQ$-martingales, ensuring absence of arbitrage in the resulting interest rate model. It turns out, that the scope of Ito-processes \eqref{eq:ito}, which fulfill the requirement of consistency, is substantially restricted, and can be essentially reduced to a single-factor short rate model with time-dependent coefficients (again, we refer to \cite{Fil00} and \cite[Ch.~9]{Fil09} for details). Here, in answering question \textbf{Q3}, we are interested in the restrictions on the shapes of term-structure curves imposed by the consistency requirement, and obtain the following result:
 \begin{thm}\label{thm:main_consistency}
 In the consistent dynamic Svensson-model with initial forward curve given by \eqref{eq:svensson} the following holds:
 \begin{enumerate}
\item  If $\beta_2 > 0$ then there is a time-horizon
\begin{equation}\label{eq:T_dagger_fw}
 T^{\rm f}_\dagger = \max\left(-\tfrac{5}{2} + \log(\tfrac{4 \beta_3}{\beta_2}), 0\right),
 \end{equation}
such that the forward curve attains each of the shapes \texttt{inverse}, \texttt{humped} and \texttt{hdh} with strictly positive probability at any time $t < T_\dagger$, and each of the shapes \texttt{inverse}, \texttt{humped} with strictly positive probability at any time $t > T_\dagger$. All other shapes are almost surely never attained.
\item  If $\beta_2 < 0$ then there is a time-horizon
\begin{equation}\label{eq:T_star_fw}
 T^{\rm f}_\star = \max\left(\log\left(\tfrac{6 \beta_3}{|\beta_2|}\right), 0\right),
 \end{equation}
such that the forward curve attains each of the shapes \texttt{normal}, \texttt{dipped} and \texttt{hd} with strictly positive probability at any time $t < T_\star$, and each of the shapes \texttt{normal}, \texttt{dipped} with strictly positive probability at any time $t > T_\star$. All other shapes are almost surely never attained.
 \end{enumerate}
 \end{thm}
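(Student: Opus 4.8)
The plan is to combine Filipovic's characterization of consistent Svensson dynamics with the segmentation of the reduced parameter space $\Theta'$ established earlier in the paper. First I would invoke the consistency analysis of \cite{Fil00, Fil09} to pin down the admissible dynamics \eqref{eq:ito}, and I expect the following structural facts to emerge from matching basis functions in the Musiela SPDE $df = (\partial_x\phi_S + \alpha^{\rm HJM})\,dt + \sigma\,dW$. The time-scales are frozen, and for genuinely stochastic dynamics the HJM drift $\alpha^{\rm HJM}(t,x) = \sigma(t,x)\int_0^x\sigma(t,u)\,du$ (with $\sigma(t,x) = \sigma(t)e^{-x/\tau_1}$, the only volatility direction not moving $\tau_1,\tau_2$) produces an $e^{-2x/\tau_1}$ contribution that lies in the Svensson span only if $\tau_2 = \tau_1/2$, forcing the scale-regular value $r = 2$; we normalise $\tau_1 = 1$, $\tau_2 = 1/2$, matching \cref{fig:scale_regular}. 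Matching the standalone $e^{-x/\tau_2}$ terms arising from the shift $\partial_x\phi_S$ and from $\alpha^{\rm HJM}$ then forces $\sigma(t)^2 = \beta_3(t)/(\tau_1\tau_2)$, which is admissible only for $\beta_3 > 0$, explaining why both cases of the theorem presuppose $\beta_3 > 0$. Finally, matching the $\tfrac{x}{\tau_1}e^{-x/\tau_1}$ and $\tfrac{x}{\tau_2}e^{-x/\tau_2}$ coefficients yields the deterministic ODEs $\dot\beta_2 = -\beta_2/\tau_1$ and $\dot\beta_3 = -\beta_3/\tau_2$, so that $\beta_2(t) = \beta_2(0)e^{-t/\tau_1}$ and $\beta_3(t) = \beta_3(0)e^{-t/\tau_2}$ are deterministic, while the slope coefficient $\beta_1(t)$ remains the single stochastic factor, a time-inhomogeneous Gaussian (Ornstein--Uhlenbeck) process.

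Next I would pass to the shape-determining coordinates $\gamma = (\gamma_\RN{1}, \gamma_\RN{2})$ of \eqref{eq:gamma_coord}. Since $\beta_2, \beta_3$ are deterministic and $\tau_2 = \tau_1/2$, the first coordinate evolves deterministically as
\[
\gamma_\RN{1}(t) = \frac{\beta_2(t)}{\beta_3(t)} = \gamma_\RN{1}(0)\,e^{t(1/\tau_2 - 1/\tau_1)} = \gamma_\RN{1}(0)\,e^{t/\tau_1},
\]
so that $\sgn(\gamma_\RN{1}(t)) = \sgn(\beta_2(0))$ is preserved for all $t$; this is precisely the dichotomy $\beta_2 > 0$ versus $\beta_2 < 0$ in the statement. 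The second coordinate $\gamma_\RN{2}(t) = \beta_1(t)/\beta_3(t)$ is, for every fixed $t > 0$, a non-degenerate Gaussian random variable, because $\sigma(s) > 0$ on $(0,t)$ makes its variance strictly positive; hence $\gamma_\RN{2}(t)$ has full support on $\RR$. The point of this reduction is that, at a fixed time, the attained shape is a deterministic function of $(\gamma_\RN{1}(t), \gamma_\RN{2}(t))$ with $\gamma_\RN{1}(t)$ frozen and $\gamma_\RN{2}(t)$ ranging over all of $\RR$ with a density. Consequently a shape occurs with strictly positive probability at time $t$ if and only if the corresponding region of the segmentation meets the vertical line $\{\gamma_\RN{1} = \gamma_\RN{1}(t)\}$ in a set of positive length, whereas shapes appearing only along the one-dimensional region boundaries have probability zero.

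It then remains to read off the vertical slices of the forward-curve segmentation (\cref{fig:b3g0} and the analytic description in \cref{sec:proofs}) for $r = 2$. For $\gamma_\RN{1} > 0$ I expect every slice to meet the \texttt{inverse} and \texttt{humped} regions in intervals, while the small \texttt{hdh} region (the ``tiny spot'' noted after \cref{thm:main}) is met only for $0 < \gamma_\RN{1} < \gamma^\ast$, where $\gamma^\ast = 4e^{-5/2}$ is the value at which this region pinches off. Since $\gamma_\RN{1}(t) = \gamma_\RN{1}(0)e^{t/\tau_1}$ is increasing, the \texttt{hdh} shape is available precisely while $\gamma_\RN{1}(t) < \gamma^\ast$; solving $\gamma_\RN{1}(0)e^{t} = \gamma^\ast$ (with $\tau_1 = 1$) gives $t < T^{\rm f}_\dagger$ with
\[
T^{\rm f}_\dagger = \max\!\left(\log\frac{\gamma^\ast}{\gamma_\RN{1}(0)},\,0\right) = \max\!\left(-\tfrac{5}{2} + \log\tfrac{4\beta_3}{\beta_2},\,0\right),
\]
recovering \eqref{eq:T_dagger_fw}; the $\max(\cdot,0)$ covers the case $\gamma_\RN{1}(0) \ge \gamma^\ast$, where \texttt{hdh} never appears. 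The case $\beta_2 < 0$ is entirely analogous on the slices with $\gamma_\RN{1} < 0$: the \texttt{normal} and \texttt{dipped} regions are met for all such $\gamma_\RN{1}$, while the \texttt{hd} region pinches off at $\gamma_\RN{1} = -6$, yielding $T^{\rm f}_\star = \max(\log(6\beta_3/|\beta_2|), 0)$ as in \eqref{eq:T_star_fw}. That all remaining shapes have probability zero follows because, for $\gamma_\RN{1}$ of the relevant sign and $\beta_3 > 0$, no slice meets any other shape region in an interval.

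The main obstacles I anticipate are twofold. The first is making rigorous the reduction to the explicit dynamics of the first paragraph: although the algebra of matching basis functions is routine, one must invoke the full consistency/invariance machinery of \cite{Fil00, Fil09} to justify that these are the \emph{only} admissible dynamics and that the factor process does not leave the manifold. The second, and more delicate, obstacle is the exact determination of the pinch-off values $\gamma^\ast = 4e^{-5/2}$ and $\gamma_\RN{1} = -6$: these require analysing the boundary curves of the \texttt{hdh} and \texttt{hd} regions from \cref{sec:proofs} and solving for the extremal value of $\gamma_\RN{1}$ along the relevant discriminant locus, equivalently the value of $\gamma_\RN{1}$ at which a double root of $\phi_S'$ merges with a further root at the edge of the region. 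By contrast, establishing full support of $\gamma_\RN{2}(t)$, and hence the ``strictly positive probability'' conclusion, is comparatively straightforward once the Gaussian law of $\beta_1(t)$ is in hand.
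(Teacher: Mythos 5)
Your proposal is correct and follows essentially the same route as the paper: it invokes Filipovi\'c's characterization of the unique consistent dynamics (with $\tau_2 = \tau_1/2$, $\beta_3 > 0$, deterministic $\gamma_\RN{1}(t) = \tfrac{\beta_2}{\beta_3}e^{t/\tau_1}$ and full-support Gaussian $\gamma_\RN{2}(t)$), reduces positive shape probability to whether the vertical slice at $\gamma_\RN{1}(t)$ meets a segmentation region, and obtains the thresholds from the pinch-off abscissae, which are exactly the paper's cusp location $\eta^{\rm f}(x_*^{\rm f}) = \left(4e^{-5/2}, -14e^{-5/2}\right)$ (computed from \eqref{eq:cusp_time_fw} and \eqref{eq:envelope_explicit} with $x_*^{\rm f} = \tfrac{5\tau_1}{2}$) and the contact point $\eta^{\rm f}(0) = (-6,-4)$. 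The only cosmetic difference is that you sketch the HJM drift-matching behind the consistent dynamics and frame the pinch-off values as a discriminant-locus computation, whereas the paper simply cites \cite{Fil00} and reads both constants off the explicit envelope formula; the values and threshold times you state agree with \eqref{eq:T_dagger_fw} and \eqref{eq:T_star_fw}.
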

This result shows that more complex initial shapes, like \texttt{hdh} or \texttt{hd} are lost after a deterministic time-horizon under any consistent dynamic evolution of the Svensson family. An analogous result for the yield curve is given by \cref{thm:consistency_yield}, the results are illustrated in \cref{fig:shape-flow_chart_y}. As a direct corollary of \cref{thm:main_consistency}, we obtain:
\begin{cor}\label{cor:trapped}
The forward curve in the consistent dynamic Svensson-model can evolve only within one of the disjoint sets of shapes
 \[\{\texttt{inverse}, \texttt{humped}, \texttt{hdh}\}, \qquad \{\texttt{normal}, \texttt{dipped},  \texttt{hd}\},\]
 but cannot transition from one set to the other, almost surely with respect to $\mathbb{Q}$.
\end{cor}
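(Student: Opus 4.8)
\emph{Proof proposal.} The plan is to read the corollary off Theorem~\ref{thm:main_consistency} by isolating the single quantity that decides between the two shape-sets, namely the sign of the curvature coefficient $\beta_2(t)$, and then showing that this sign is almost surely frozen along any consistent evolution.

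First I would record the two disjoint collections $\cA = \set{\texttt{inverse}, \texttt{humped}, \texttt{hdh}}$ and $\cB = \set{\texttt{normal}, \texttt{dipped}, \texttt{hd}}$, and note that Theorem~\ref{thm:main_consistency} associates $\cA$ to the case $\beta_2 > 0$ and $\cB$ to the case $\beta_2 < 0$. The crucial ingredient is the theorem's clause that all other shapes are \emph{almost surely never attained}: this is the pathwise statement that, for $\QQ$-a.e.\ $\omega$ and every $t$, the shape of the forward curve lies in the set dictated by $\sgn \beta_2(t)$. It is precisely this clause that upgrades the distributional ``positive probability'' assertions to a statement about whole trajectories, and hence makes a \emph{trapping} conclusion meaningful.

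Second — the substantive step — I would invoke the structure of the consistent dynamics established in the proof of Theorem~\ref{thm:main_consistency}. Consistency forces the Svensson dynamics to collapse to a single-factor short-rate model with time-dependent coefficients, in which the lone source of randomness enters only through the level and slope coefficients, while $\beta_2(t)$ (and $\beta_3(t)$) is a \emph{deterministic} function of $t$. Its explicit form, obtained as the solution of a linear ODE, shows that $\sgn \beta_2(t) = \sgn \beta_2(0)$ for all $t \ge 0$, i.e.\ $\beta_2(t)$ never crosses zero. The degenerate initial value $\beta_2(0) = 0$ corresponds to the Bliss subfamily and is excluded here.

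Combining the two steps: if $\beta_2(0) > 0$ then $\beta_2(t) > 0$ for all $t$, so part~(1) confines the forward curve to $\cA$ for all time, $\QQ$-a.s.; if $\beta_2(0) < 0$, part~(2) confines it to $\cB$. Since $\cA \cap \cB = \emptyset$, no sample path can meet both collections, which is exactly the asserted trapping. I do not expect the packaging to present any difficulty; the only genuine content is the deterministic, sign-preserving evolution of $\beta_2(t)$, which is already required to establish the ``almost surely never attained'' clause of Theorem~\ref{thm:main_consistency} and may therefore be taken as given here.
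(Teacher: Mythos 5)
Your proposal is correct and takes essentially the same route as the paper: the corollary is read off Theorem~\ref{thm:main_consistency} together with the fact that the consistent dynamics \eqref{eq:SDE} force $\beta_2(t)=\beta_2\,e^{-t/\tau_1}$ (and $\beta_3(t)=\beta_3\,e^{-2t/\tau_1}$) to be deterministic and sign-preserving, so the admissible shape set dictated by $\sgn\beta_2(t)$ is frozen pathwise. (Only a cosmetic slip: the randomness enters solely through $\beta_1(t)$, since $\beta_0$ is constant in \eqref{eq:SDE}.)
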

Since the above results are statements about sets of probability zero (and their complements) it does not matter whether the risk-neutral measure $\QQ$ or any other equivalent measure $\PP$ is considered. In addition to these results, we show in Section~\ref{sub:frequency} how to calculate the exact risk-neutral probability of each attainable shape under consistent dynamic evolution of the Svensson family. Of particular interest is the following ergodicity result, which shows that almost surely, a \emph{single shape} will dominate the dynamic evolution in the long-run.
 
         	\begin{thm}\label{thm:ergodic}  Let $\mathsf{S}_t(\omega)$ denote the shape of either the yield- or the forward-curve at time $t > 0$ in the consistent dynamic Svensson-model.
	\begin{enumerate}
	\item If $\beta_2 > 0$, then
	\begin{equation}\label{eq:ergodic1}
	\lim_{t \to \infty }\mathbf{1}\set{\mathsf{S}_t(\omega) \text{ is } \texttt{inverse}} = 1, \qquad \QQ-\text{a.s.}
	\end{equation}
		\item If $\beta_2 < 0$, then
	\begin{equation}\label{eq:ergodic2}
		\lim_{t \to \infty }\mathbf{1}\set{\mathsf{S}_t(\omega) \text{ is } \texttt{normal}} = 1, \qquad \QQ-\text{a.s.}
	\end{equation}

	\end{enumerate}
         	\end{thm}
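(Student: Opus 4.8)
The plan is to combine the \emph{deterministic} shrinkage of the attainable-shape set from \cref{thm:main_consistency} with a \emph{pathwise} control of the single stochastic factor driving the consistent model. I treat the case $\beta_2 > 0$ throughout; the case $\beta_2 < 0$ follows verbatim after interchanging humps and dips and the labels \texttt{inverse} and \texttt{normal}, which is precisely the polarity reversal of \cref{rem:polarity}. By \cref{cor:trapped} the shape process $\mathsf S_t$ remains in $\{\texttt{inverse},\texttt{humped},\texttt{hdh}\}$ for all $t$, and by \cref{thm:main_consistency} the shape \texttt{hdh} is almost surely absent once $t > T^{\mathrm f}_\dagger$. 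Since these shapes exhaust the outcomes, it suffices to show that $\{\mathsf S_t \text{ is } \texttt{humped}\}$ occurs only on a (random) bounded set of times, almost surely; the indicator in \eqref{eq:ergodic1} is then eventually $1$.

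First I would reduce the shape to the sign of a single scalar. For $t > T^{\mathrm f}_\dagger$ the only attainable shapes are \texttt{inverse} and \texttt{humped}; as \texttt{inverse} is monotone decreasing and \texttt{humped} is the unique remaining (single-maximum) option, the shape is \texttt{humped} precisely when the curve is increasing at the short end, i.e. $\left.\partial_x f(t,t+x)\right|_{x=0} > 0$. Recalling from the consistency analysis that the consistent dynamic Svensson model is essentially a one-factor Gaussian model -- the curvature coefficients $\beta_2(t),\beta_3(t)$ evolving deterministically, preserving their sign and decaying to $0$, and the level/slope coefficient having the form $\beta_1(t) = m(t) + \xi_t$ with $m(t)$ deterministic and $\xi_t$ a scalar mean-zero (Ornstein--Uhlenbeck-type) Gaussian process of variance $v(t)$ -- the short-end slope $\tfrac{1}{\tau_1}(\beta_2(t)-\beta_1(t)) + \tfrac{1}{\tau_2}\beta_3(t)$ is an affine function of $\xi_t$. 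Hence $\{\mathsf S_t \text{ is } \texttt{humped}\} = \{\xi_t < \theta_t\}$ with the explicit deterministic threshold $\theta_t = \beta_2(t) + \tfrac{\tau_1}{\tau_2}\beta_3(t) - m(t)$. The same reduction applies to the yield curve, either through its own consistency statement \cref{thm:consistency_yield} or because the short-end slopes of the yield and forward curves have the same sign.

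The heart of the matter is then a race between deterministic rates. As $t \to \infty$ the curvature coefficients $\beta_2(t),\beta_3(t)$ vanish while the deterministic slope $m(t)$ diverges with sign $\sgn(\beta_2) = +$ (this is the deterministic ``aging out'' of the hump, and is what locks the long-end slope in \cref{cor:trapped}), so that $\theta_t \to -\infty$. On the probabilistic side $\xi_t$ is Gaussian and recurrent, and the lower law of the iterated logarithm gives $\liminf_{t\to\infty}\xi_t/\sqrt{2v(t)\log\log t} \ge -1$ almost surely. The analytic task is therefore to verify the purely deterministic relation $\theta_t + \sqrt{2v(t)\log\log t} \to -\infty$; by the iterated-logarithm bound this forces $\xi_t \ge \theta_t$ for all sufficiently large $t$, almost surely. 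Combined with the first paragraph, a.s. there is a finite (random) time after which the curve is never \texttt{humped}, hence \texttt{inverse}, which is \eqref{eq:ergodic1}; the case $\beta_2 < 0$ is symmetric, giving \eqref{eq:ergodic2}.

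The main obstacle is exactly this quantitative separation of rates: from the consistency equations I must extract the growth of the deterministic slope $m(t)$ (equivalently the recession rate of $\theta_t$) and the growth of the factor variance $v(t)$, and check that the former strictly dominates $\sqrt{v(t)\log\log t}$. If instead $m(t)$ grew only at the rate of the Gaussian fluctuations, the factor would re-enter the region $\{\xi_t < \theta_t\}$ infinitely often and the conclusion would fail, so establishing a strict gap is the crux. A secondary, more technical point is upgrading the event-wise estimate to the genuinely pathwise ``finitely often'' assertion over the uncountable time index; I would do this through the sample-path regularity of the Gaussian factor encoded in the iterated-logarithm bound above, rather than through a naïve Borel--Cantelli argument along a discrete time grid.
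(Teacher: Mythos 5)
Your proof architecture is in fact the same as the paper's: reduce the shape for large $t$ to a one-dimensional Gaussian threshold event (the sign of the short-end slope, which in the paper appears as the position of $\gamma_\RN{2}(t)$ relative to $I_0(t)$, i.e.\ relative to the line $\ell_0$, via \cref{lem:bounds_f}), and then win a race between a deterministic drift and the Gaussian fluctuation. However, the quantitative claims on which your deferred ``crux'' rests are wrong in the coordinates you chose. From the consistent dynamics \eqref{eq:SDE} one computes explicitly $m(t) = \E{\beta_1(t)} = e^{-t/\tau_1}\left(\beta_1 + \tfrac{\beta_2}{\tau_1}t + 2\beta_3(1-e^{-t/\tau_1})\right)$ and $\xi_t = \tfrac{\sqrt{2\beta_3}}{\tau_1}e^{-t/\tau_1}W_t^*$, so $m(t)\to 0$ (it does \emph{not} diverge), $v(t) = \tfrac{2\beta_3}{\tau_1^2}\,t\,e^{-2t/\tau_1}\to 0$ (it does not grow), and your threshold $\theta_t = \beta_2(t) + \tfrac{\tau_1}{\tau_2}\beta_3(t) - m(t) \sim -\tfrac{\beta_2}{\tau_1}\,t\,e^{-t/\tau_1} \to 0^-$, not $-\infty$. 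Consequently the deterministic relation you set out to verify, $\theta_t + \sqrt{2v(t)\log\log t}\to-\infty$, is simply false: both terms vanish and the sum tends to $0$. The comparison only works after normalizing away the common decay factor $e^{-t/\tau_1}$ (equivalently, passing to the $\gamma$-coordinates, as the paper does): then the threshold recedes like $-\tfrac{\beta_2}{\tau_1}t + O(1)$ while the fluctuation is $\tfrac{\sqrt{2\beta_3}}{\tau_1}W_t^*$, and the correct \emph{ratio} criterion $\theta_t/\sqrt{2v(t)\log\log t}\to-\infty$ holds. In fact no law of the iterated logarithm is needed at this point: the paper writes the inverse-shape event as $\set{(W_t^*-\alpha_t)/t \ge -\beta_2/\sqrt{2\beta_3}}$ with $\alpha_t$ bounded and concludes by the strong law of large numbers $W_t^*/t\to 0$; your LIL bound is an unnecessarily sharp tool for the same purpose, and it also disposes of your worry about upgrading to a pathwise statement over the uncountable time index (the SLLN is already a pathwise, eventual statement).

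A second, more minor, inaccuracy concerns the yield curve: your dichotomy ``\texttt{humped} iff short-end slope positive'' fails there, since by \cref{thm:consistency_yield} the shapes $\set{\texttt{n},\texttt{i},\texttt{h}}$ \emph{all} persist for $t\ge T^{\rm y}_\dagger$ when $\beta_2>0$, and \texttt{normal} also has positive short-end slope; the event $\set{\text{slope}>0}$ is then \texttt{humped or normal}, so your identity $\set{\mathsf{S}_t \text{ is }\texttt{humped}} = \set{\xi_t<\theta_t}$ is wrong for the yield curve. The conclusion is nevertheless salvageable with the same estimate, because a nonpositive short-end slope still forces \texttt{inverse} among $\set{\texttt{n},\texttt{i},\texttt{h}}$ (by \cref{thm:fw_to_yield}(1) the yield and forward curves share the initial slope sign), which is also why the paper can treat the yield-curve case as verbatim analogous via \cref{lem:bounds_y}. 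In summary: right skeleton, but the rate comparison you flagged as the main obstacle is stated in a normalization in which it is false, and carrying out your plan literally would stall exactly there.
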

	
	The above results are illustrated in \cref{fig:shape-flow_chart_fw} for the forward curve and in \cref{fig:shape-flow_chart_y} for the yield curve.
	
	    \begin{figure}
		\begin{tikzpicture}
  \matrix[row sep=1mm,column sep=0.3cm] {

     \node[right] (h1) {\footnotesize $\phantom{t=0}$}; & & \node[above] (h2) {\footnotesize $t = T_\dagger^{\rm f}$}; & & \node (h3) {}; & \\
      & \node (n1)  {$\{\texttt{i},\texttt{h},\texttt{hdh}\}$}; &  & \node (n2) {$\{\texttt{i},\texttt{h}\}$}; &  & \node (n3)  {$\{\texttt{i}\}$};\\
 & & \node (h2up) {}; & & & \\
    \node[left] (n0)  {Calibration}; &  & & & &  \\
 & & \node (h5down) {}; & & & \\    
        & \node (n4)  {$\{\texttt{n},\texttt{d},\texttt{hd}\}$};  &  & \node (n5) {$\{\texttt{n},\texttt{d}\}$};  & & \node (n6)  {$\{\texttt{n}\}$};\\
           \node[below right] (h4) {\footnotesize $t=0$}; & & \node[below] (h5) {\footnotesize $t = T_*^{\rm f}$}; & & \node[below] (h6) {\footnotesize $t \to \infty$}; & \\
  };
  \graph {
  (n0) ->["{\footnotesize $\beta_2 > 0$}"] (n1) -> (n2) -> (n3);
  (n0) ->["{\footnotesize $\beta_2 < 0$}" below left] (n4) -> (n5) -> (n6);
  (h1) --[dashed] (h4); 
  (h2) --[dashed] (h2up); 
  (h5) --[dashed] (h5down); 
  (h3) --[dashed] (h6); 
  };
\end{tikzpicture}
    	\caption{This diagram shows the shapes that are attained with strictly positive probability for the forward curve in the consistent Svensson model as time $t$ progresses, illustrating Thm.~\ref{thm:main_consistency}, Cor.~\ref{cor:trapped} and Thm.~\ref{thm:ergodic}. The branch $\beta_2 > 0$ or $\beta_2 < 0$ is selected when the model is calibrated to the initial term structure. Note that each of the times $T_\dagger^{\rm f}, T_*^{\rm f}$ may be zero, in which case the corresponding part of the diagram collapses.}
    	\label{fig:shape-flow_chart_fw}
    \end{figure}
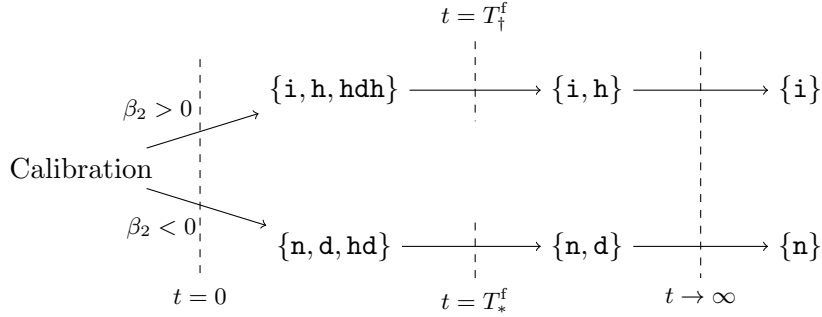

The remainder of the paper is structured as follows: In Section~\ref{sec:preliminaries} we discuss some preliminary results on the relation between yield- and forward-curves, on the Nelson-Siegel subfamily and on total positivity in the Svensson family. In Section~\ref{sec:envelope}, we review our main mathematical tool, the \emph{envelope method}, which was introduced in \cite{KRS23} in the context of the two-factor Vasicek model. In Section~\ref{sec:proofs}, we provide the proofs of our main results on the classification and the segmentation problem (\textbf{Q1},\textbf{Q2}), as well as some refinements of the theorems presented above. In Section~\ref{sec:consistency}, we do the same for the consistency problem (\textbf{Q3}). Some auxiliary results are relegated to Appendix~\ref{app}.

\section{Preliminaries and basic results}\label{sec:preliminaries}
\subsection{Yield curve vs. forward curve}Let $f(x,\beta)$ denote the forward curve in dependency on time-to-maturity $x$ and the parameter vector $\beta$. Then the associated yield curve is given by
\begin{equation}\label{eq:forward_to_yield}
y(x,\beta) = \frac{1}{x}\int_0^x f(y,\beta)dy, 
\end{equation}
and it can be considered as a running average (in time-to-maturity) of the forward curve. Consequently, the shape of the yield curve will be `smoother' than the shape of the forward curve. In fact, we have the following result from \cite{KR21}:\footnote{The result was shown in \cite{KR21} in the context of the two-factor Vasicek model, but only uses the relation \eqref{eq:forward_to_yield} between forward- and yield-curve}

\begin{thm}[{See \cite[Thm.~3.4]{KR21}}]\label{thm:fw_to_yield}
The following holds for the relation between yield curve and forward curve:
\begin{enumerate}
\item the initial slope of yield- and forward curve has the same sign;
\item the number of local extrema of the yield curve is less or equal to the number of local extrema of the forward curve;
\item if the number of local extrema is the same, the also the type (dip/hump) coincides.
\end{enumerate}
\end{thm}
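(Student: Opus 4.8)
The plan is to work throughout with the auxiliary function that governs the sign of $y'$. Writing $F(x) = \int_0^x f(s,\beta)\,ds$ so that $y(x,\beta) = F(x)/x$, differentiation gives
\[
y'(x,\beta) = \frac{x f(x,\beta) - F(x)}{x^2} =: \frac{g(x)}{x^2},
\]
so on $(0,\infty)$ the sign of $y'$ agrees with the sign of $g(x) = x f(x,\beta) - F(x)$. First I would settle part (1) by a Taylor expansion at $x = 0$: inserting $f(x,\beta) = f(0,\beta) + f'(0,\beta)x + O(x^2)$ into the running average yields $y(x,\beta) = f(0,\beta) + \tfrac12 f'(0,\beta)\,x + O(x^2)$, hence $y'(0,\beta) = \tfrac12 f'(0,\beta)$. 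In particular the two initial slopes vanish together and otherwise carry the same sign.

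The core of the argument is a sign-change count for part (2). The function $g$ has the two crucial properties $g(0) = 0$ and
\[
g'(x) = f(x,\beta) + x f'(x,\beta) - f(x,\beta) = x\, f'(x,\beta),
\]
so on $(0,\infty)$ the sign of $g'$ coincides with the sign of $f'$. Consequently $g$ and $f$ share exactly the same critical points on $(0,\infty)$, of exactly the same type, and between consecutive critical points $g$ is strictly monotone (here I would use that a Svensson curve is real-analytic, so that even-order zeros of $f'$ do not spoil strict monotonicity). If $f$ has $n$ local extrema $0 < c_1 < \dots < c_n$, these partition $(0,\infty)$ into the $n+1$ intervals $(0,c_1), (c_1,c_2), \dots, (c_n,\infty)$, on each of which $g$ is monotone and therefore changes sign at most once.

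The decisive observation, and the step I expect to require the most care, is that the condition $g(0) = 0$ removes the sign change on the first interval: since $g$ emanates from $g(0) = 0$ and is strictly monotone on $(0,c_1)$, it keeps a constant sign there. Thus sign changes of $g$ can occur only on the remaining $n$ intervals, giving at most $n$ of them. Because the local extrema of $y$ are precisely the sign changes of $y'$, i.e.\ of $g$, on $(0,\infty)$, this shows that $y$ has at most $n$ local extrema, which is part (2). The delicate bookkeeping lies in handling degenerate (even-order) zeros of $f'$ and in confirming that no extra sign change hides near the endpoints; analyticity of the Svensson curves makes all zeros isolated and of finite order, which I would invoke to keep the count rigorous.

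Finally, part (3) follows by combining the previous two steps with the elementary fact that the local extrema of a smooth function alternate strictly between maxima and minima. The type of the first extremum of $f$ is fixed by the sign of $f'(0,\beta)$ (a local maximum if the curve is initially increasing, a minimum otherwise), and likewise for $y$ by the sign of $y'(0,\beta)$; by part (1) these signs agree, so the first extrema coincide in type. If $y$ and $f$ carry the same number of extrema, then the two alternating type-sequences start identically and have equal length, hence agree term by term, which is exactly the assertion.
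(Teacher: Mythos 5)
Your proof is correct, but note that there is no internal proof in the paper to compare it against: the paper imports this statement verbatim from \cite{KR21} (Thm.~3.4 there), remarking only that the proof uses nothing beyond the averaging relation \eqref{eq:forward_to_yield}. Your route via $g(x) = x f(x,\beta) - F(x)$ with $g(0)=0$ and $g'(x) = x f'(x,\beta)$ is essentially the standard mechanism behind the cited result: since $g(x) = x\bigl(f(x,\beta) - y(x,\beta)\bigr)$, it is equivalent to the observation $y'(x) = \bigl(f(x,\beta)-y(x,\beta)\bigr)/x$, i.e.\ that extrema of the yield curve occur exactly where it crosses the forward curve. Your counting step is sound: $g$ is strictly monotone between consecutive sign changes of $f'$ (real-analyticity of the Svensson curves makes zeros of $f'$ isolated, so even-order zeros of $f'$ do not destroy strict monotonicity of $g$), and $g(0)=0$ indeed suppresses any sign change on $(0,c_1)$, yielding at most $n$ sign changes of $g$, hence of $y'$. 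The one place to tighten is part (3): you anchor the type of the first extremum to the sign of $f'(0,\beta)$ and invoke part (1), which is loose in the degenerate case $f'(0,\beta)=0$ (attainable in the Svensson family, since $\phi_S'(0) = (\beta_2-\beta_1)/\tau_1 + \beta_3/\tau_2$ can vanish). Your own machinery closes this gap without extra work: on $(0,c_1)$ the sign of $g$, hence of $y'$, coincides with the sign of $f'$ there, because $g$ starts at $0$ and is strictly monotone with $g' = x f'$; so the initial sign of $y'$ on $(0,\varepsilon)$ — not merely at $x=0$ — matches that of $f'$, the first extrema of $f$ and $y$ have the same type, and the strict alternation of maxima and minima then forces the two type sequences to agree term by term. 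With that small repair made explicit, the argument is complete and matches in spirit the proof the paper delegates to \cite{KR21}.
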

For this reason, many results obtained for the forward curve have immediate consequences for the yield curve. Other results, for example on the segmentation of the parameter space, have to be shown separately for yield curve and forward curve. However, usually the same method can be used for both with minor adaptations. Using the Svensson-parametrization \eqref{eq:svensson} for the forward curve, one immediately sees that the corresponding yield curve 
    \begin{align}\label{eq:svensson_yield}
        \Phi_S(x) = &\beta_0 + \frac{\beta_1 \tau_1}{x}\left(1 - e^{-x/\tau_1}\right) + \beta_2\left(\frac{\tau_1}{x}\left(1 - e^{-x/\tau_1}\right) - e^{-x/\tau_1}\right) + \\ &+ \beta_3 \left(\frac{\tau_2}{x}\left(1 - e^{-x/\tau_2}\right) - e^{-x/\tau_2}\right) \notag
    \end{align}
will still be a linear function of the parameter vector $\beta$ with only the basis functions transformed by \eqref{eq:forward_to_yield}. Generally, most of our results will be obtained first for the forward curve and then be transferred to the yield curve; some useful Lemmas are given in Appendix~\ref{app}.

\subsection{The Nelson-Siegel family}\label{sub:NS}
For the Nelson-Siegel family, questions \textbf{Q1} and \textbf{Q2} can be solved directly, using elementary means. Looking at the forward curve, it is sufficient to take the time-derivative of \eqref{eq:nelson} to obtain
        \begin{align*}
            \phi_{NS}'(x) = \frac{1}{\tau}\left(\beta_2\left(1 - \frac{x}{\tau}\right) - \beta_1\right)\exp\left(-\frac{x}{\tau}\right).
        \end{align*}
Clearly, this derivative is zero at $x \in (0,\infty)$ if and only if
        \begin{align*}
            x = \tau\left(1 - \frac{\beta_1}{\beta_2}\right)
        \end{align*}
and, if it has a zero, it also changes sign at this point. Together with \cref{thm:fw_to_yield} this leads to the following simple result: 
        \begin{thm}\label{thm:Shapes_basic}
\label{item:Nelson-Siegel-forward}In the Nelson-Siegel family only the shapes \texttt{normal}, \texttt{inverse}, \texttt{humped} and \texttt{dipped} are attainable for the forward curve. The same holds true for the yield curve. The parameter space segmentation is given in Table~\ref{tab:Nelson-Siegel}.
    \end{thm}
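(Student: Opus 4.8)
The plan is to extract everything about the forward curve's shape directly from the derivative already computed in the text. Since $\exp(-x/\tau) > 0$ for all $x \in (0,\infty)$, the sign of $\phi_{NS}'(x)$ coincides there with the sign of the affine function
\[
g(x) = \beta_2\left(1 - \frac{x}{\tau}\right) - \beta_1 = (\beta_2 - \beta_1) - \frac{\beta_2}{\tau}\,x .
\]
An affine function has at most one zero, and changes sign at it, so $\phi_{NS}'$ changes sign at most once on $(0,\infty)$. This immediately rules out every shape with two or more interior extrema and leaves only \texttt{normal}, \texttt{inverse}, \texttt{humped}, \texttt{dipped} as candidates for the forward curve.

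To pin down the exact segmentation of the $(\beta_1,\beta_2)$-plane (and hence Table~\ref{tab:Nelson-Siegel}), I would split into cases according to $\beta_2$. If $\beta_2 = 0$, then $g \equiv -\beta_1$ is constant, giving \texttt{normal} for $\beta_1 < 0$, \texttt{inverse} for $\beta_1 > 0$ (and a degenerate flat curve for $\beta_1 = 0$). If $\beta_2 \neq 0$, the unique root of $g$ is $x^* = \tau\,(1 - \beta_1/\beta_2)$, which lies in $(0,\infty)$ precisely when $1 - \beta_1/\beta_2 > 0$. When $x^* \le 0$ the sign of $\phi_{NS}'$ is constant on $(0,\infty)$, producing \texttt{normal} or \texttt{inverse}; when $x^* > 0$ there is exactly one interior extremum, whose type is fixed by the slope $-\beta_2/\tau$ of $g$: for $\beta_2 > 0$ the derivative passes from $\pp$ to $\mm$ (\texttt{humped}), while for $\beta_2 < 0$ it passes from $\mm$ to $\pp$ (\texttt{dipped}). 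Bookkeeping of the boundary cases then yields the claimed partition and simultaneously exhibits parameters realizing each of the four shapes, establishing attainability for the forward curve.

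For the yield curve, the classification part is immediate from \cref{thm:fw_to_yield}: item~(2) bounds the number of local extrema of the yield curve by that of the forward curve, which we have just shown to be at most one, so again only \texttt{normal}, \texttt{inverse}, \texttt{humped}, \texttt{dipped} can occur. Attainability of \texttt{normal} and \texttt{inverse} is inherited directly, since the running average in \eqref{eq:forward_to_yield} of a monotone function is monotone of the same direction (equivalently, by \cref{thm:fw_to_yield}(1) the initial slope sign is preserved and no extremum can appear); for \texttt{humped} and \texttt{dipped} I would produce explicit parameters, e.g.\ a sufficiently pronounced hump so that the average in \eqref{eq:forward_to_yield} still turns over, and verify the single sign change directly on the Nelson--Siegel yield curve obtained by setting $\beta_3 = 0$ in \eqref{eq:svensson_yield}.

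The one genuinely computational step, and the main obstacle, is the yield-curve segmentation itself. Unlike the forward curve, the derivative of \eqref{eq:svensson_yield} (at $\beta_3 = 0$) is no longer a positive factor times an affine term but involves terms of the form $(1 - e^{-x/\tau})/x$, so locating its sign change as a function of $(\beta_1,\beta_2)$ amounts to analysing a transcendental rather than a linear expression. The boundedness of the number of extrema is handed to us for free by \cref{thm:fw_to_yield}, so the remaining effort is precisely to describe the region boundaries explicitly; this is where the bulk of the work for Table~\ref{tab:Nelson-Siegel} lies.
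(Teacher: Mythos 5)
Your forward-curve argument coincides with the paper's: the sign of $\phi_{NS}'$ is carried by the affine factor $g$, which has at most one transversal zero, and the case analysis on $\beta_2$ and $x^* = \tau(1-\beta_1/\beta_2)$ settles classification, attainability and the forward column of Table~\ref{tab:Nelson-Siegel}; the reduction of the yield-curve \emph{classification} to \cref{thm:fw_to_yield}(2) is likewise exactly the paper's step. One slip here: you assert that the bookkeeping ``yields the claimed partition'' without performing it, and in fact your own (correct) $\beta_2 = 0$ case --- \texttt{normal} for $\beta_1 < 0$ --- already contradicts the printed forward column, which places \texttt{normal} at $\beta_2 \ge 0,\ \beta_1 \ge \beta_2$. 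Completing the bookkeeping gives \texttt{normal} on $\{\beta_2 \le 0,\ \beta_1 \le \beta_2\}$ and \texttt{inverse} on $\{\beta_2 \ge 0,\ \beta_1 \ge \beta_2\}$, i.e.\ the printed \texttt{normal}/\texttt{inverse} rows of the forward column are transposed: in the printed \texttt{normal} region one has $g(x) = (\beta_2-\beta_1) - \beta_2 x/\tau \le 0$, so the curve is decreasing there, and the initial-slope consistency required by \cref{thm:fw_to_yield}(1) with the yield column $\beta_1 \le -|\beta_2|$ confirms the corrected assignment. This is a typo in the table rather than an error in your analysis, but a proof must state the corrected regions instead of claiming agreement with the printed ones.

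The genuine gap is the yield-curve \emph{segmentation}, which you defer as ``the bulk of the work'' and propose to attack by locating the sign change of a transcendental expression; you never carry this out, and it is also the wrong route, as is the ad hoc ``sufficiently pronounced hump'' construction for attainability. The missing idea is that once \cref{thm:fw_to_yield}(2) caps the number of extrema of $\Phi_{NS}$ at one, the shape is completely determined by the signs of $\Phi_{NS}'$ at the two ends of $(0,\infty)$, and both signs are elementary, so no transcendental root need ever be located. The initial sign is $\sgn(\beta_2 - \beta_1)$ by \cref{thm:fw_to_yield}(1). For the terminal sign, setting $\beta_3 = 0$ in \eqref{eq:svensson_yield} gives
\[
\Phi_{NS}(x) = \beta_0 + (\beta_1 + \beta_2)\,\frac{\tau}{x} + O\left(e^{-x/\tau}\right),
\]
where the remainder stays exponentially small after differentiation, whence $\sgn \Phi_{NS}'(x) = -\sgn(\beta_1 + \beta_2)$ for large $x$ when $\beta_1 + \beta_2 \neq 0$ (the boundary case is a direct check: $\beta_1 = -\beta_2$ gives $\Phi_{NS} = \beta_0 - \beta_2 e^{-x/\tau}$, which is monotone). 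Since at most one sign change is available, equal endpoint signs force monotonicity and opposite signs force exactly one extremum of the indicated type; intersecting the four sign patterns yields precisely the yield column of Table~\ref{tab:Nelson-Siegel}: \texttt{normal} iff $\beta_1 \le -|\beta_2|$, \texttt{inverse} iff $\beta_1 \ge |\beta_2|$, \texttt{humped} iff $\beta_2 > |\beta_1|$, \texttt{dipped} iff $\beta_2 < -|\beta_1|$. These regions are nonempty, so attainability for the yield curve comes for free, and your quantitative averaging argument via \eqref{eq:forward_to_yield} becomes unnecessary.
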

    
         \begin{table}
            \begin{tabular}{ccc}
                \toprule
                shape & region (forward curve) & region (yield curve) \\
                \midrule
                $\texttt{normal}$ & $\beta_2 \ge 0,\ \beta_1 \ge \beta_2$ & $\beta_1 \le -\abs{\beta_2}$\\
                $\texttt{inverse}$ & $\beta_2 \le 0,\ \beta_1 \le \beta_2$ & $\beta_1 \ge \abs{\beta_2}$\\
                $\texttt{humped}$& $\beta_2 > 0,\ \beta_1 < \beta_2$ & $\beta_2 > \abs{\beta_1}$\\
                $\texttt{dipped}$ & $\beta_2 < 0,\ \beta_1 > \beta_2$ & $\beta_2 < -\abs{\beta_1}$\\
                \bottomrule
            \end{tabular}
            \caption{Parameter space segmentation for the Nelson-Siegel family}
            \label{tab:Nelson-Siegel}
        \end{table}

\subsection{Basic results on the Svensson family}\label{sub:TP}
For the Svensson family, the direct approach used for the Nelson-Siegel family runs into problems. Due to the additional basis function it is no longer possible to determine the zeroes of $\phi_S'$ explicitly, as it was in the Nelson-Siegel model. Nevertheless, some basic results can be obtained from the theory of Tchebycheff systems; see \cite{KS66, Kar68} for general results and \cite{KR21} for an application to interest rate models. From \cite{KS66} we recall that a family $(u_1, \dotsc, u_n)$ of continuous functions on a closed interval $[a,b]$ is a Tchebycheff system if 
\begin{equation}\label{eq:Phi}
\det \begin{pmatrix} u_{1}(x_1) & u_{2}(x_1) & \dotsc &  u_{n}(x_1)\\ \vdots & \vdots && \vdots\\  u_{1}(x_n) &  u_{2}(x_n) & \dotsc &  u_{n}(x_n)\end{pmatrix}  > 0
\end{equation}
holds for all $a \le x_1 < x_2 < \dotsm < x_n \le b$. When the family is defined on $[0,\infty)$ it is called a Tchebycheff system if it is a Tchebycheff system on $[0,A]$ for every $A > 0$. The key property of Tchebycheff systems is the following:
\begin{lemma}\label{lem:T}Let $(u_1, \dotsc, u_n)$ be a Tchebycheff system on $[0,\infty)$ and let $\bm{\beta} = (\beta_1, \dots, \beta_n) \in \RR^n \setminus \set{0}$. Then the function 
\[u(x) = \beta_1 u_1(x) + \dotsm \beta_n u_n(x)\]
has at most $n-1$ zeroes in $[0,\infty)$. 
\end{lemma}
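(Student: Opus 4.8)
The plan is to argue by contradiction, reducing the claim to the nonvanishing of a single determinant of the form \eqref{eq:Phi}. Suppose, contrary to the assertion, that $u$ possesses at least $n$ distinct zeroes in $[0,\infty)$, and list $n$ of them as $0 \le x_1 < x_2 < \dotsm < x_n$. The first step is to record what it means for each $x_i$ to be a zero: since $u = \sum_{j=1}^n \beta_j u_j$, the conditions $u(x_1) = \dotsm = u(x_n) = 0$ read as the homogeneous linear system
\begin{equation*}
\sum_{j=1}^n u_j(x_i)\,\beta_j = 0, \qquad i = 1, \dotsc, n,
\end{equation*}
which expresses precisely that the nonzero vector $\bm{\beta}$ lies in the kernel of the $n \times n$ matrix $M = \big(u_j(x_i)\big)_{i,j=1}^n$.

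Next I would invoke the Tchebycheff property to show that $M$ is nonsingular. Because $(u_1, \dotsc, u_n)$ is a Tchebycheff system on $[0,\infty)$, by convention it is one on $[0,A]$ for every $A > 0$; choosing $A \ge x_n$ places all of the $x_i$ in $[0,A]$. The ordering $x_1 < \dotsm < x_n$ then identifies $\det M$ with exactly the quantity appearing in \eqref{eq:Phi}, so $\det M > 0$. In particular $M$ is invertible and its kernel is trivial.

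The two steps combine immediately: $M\bm{\beta} = 0$ together with $\det M \neq 0$ forces $\bm{\beta} = 0$, contradicting the hypothesis $\bm{\beta} \in \RR^n \setminus \set{0}$. Hence $u$ cannot have $n$ zeroes, i.e. it has at most $n-1$, as claimed.

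I do not anticipate a serious obstacle here; once the definition of a Tchebycheff system is unwound, the argument is pure linear algebra. The one point deserving a little care is the passage from the interval $[0,\infty)$ to a finite interval $[0,A]$ containing all the prescribed zeroes, which is handled cleanly by the stated convention. It is also worth noting that \emph{zero} here means simply a point at which $u$ vanishes, with no multiplicity counting required: the counting argument uses only distinct vanishing points, which is all that the statement asks for.
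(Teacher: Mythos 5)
Your proof is correct: the reduction of the zero conditions to the homogeneous system $M\bm{\beta}=0$ with $M=\bigl(u_j(x_i)\bigr)$, the identification of $\det M$ with the determinant in \eqref{eq:Phi} after choosing $A \ge x_n$, and the resulting contradiction with $\bm{\beta}\neq 0$ is precisely the standard argument. The paper itself states this lemma without proof, citing the classical theory of Tchebycheff systems (\cite{KS66}), and your linear-algebra argument is exactly the proof given in that literature, so there is nothing to add.
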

Differentiating \eqref{eq:svensson} with respect to $x$, we find that 
        \begin{equation}\label{eq:svensson_diff}
            \phi_{S}'(x) = (\beta_2 - \beta_1) f_1(x) - \beta_2 f_2(x) + \beta_3 f_3(x) - \beta_3 f_4(x),
        \end{equation}
where
    \begin{equation*}
\Big(f_1(x), f_2(x), f_3(x), f_4(x)\Big)=\left(\tfrac{1}{\tau_1}e^{-\frac{x}{\tau_1}}, \tfrac{x}{\tau_1^2} e^{-\frac{x}{\tau_1}}, \tfrac{1}{\tau_2}e^{-\frac{x}{\tau_2}}, \tfrac{x}{\tau_2^2}e^{-\frac{x}{\tau_2}}\right).
\end{equation*}
It is not difficult to show that these basis functions $(f_1, f_2, f_3, f_4)$ of the differentiated forward curve in the Svensson family (as well as their counterparts for the yield curve) form a Tchebycheff system; see \cref{lem:tscheb}. This immediately leads to a simple proof of \cref{thm:basic}:
\begin{proof}[Proof of \cref{thm:basic}]
The derivative $\phi'_S$ of the Svensson curve $\phi_S$ is a linear combination of the functions $(f_1, f_2, f_3, f_4)$. The derivative of the Bliss curve ($\beta_2 = 0)$ is a linear combination of $(f_1, f_3, f_4)$. The derivative of the Nelson-Siegel curve ($\beta_3 = 0$) is a linear combination of $(f_1, f_2)$. By~\cref{lem:tscheb} $(f_1, f_2, f_3, f_4)$, $(f_1, f_3, f_4)$, and $(f_1, f_2)$ are Tchebycheff systems. Hence, in case of the forward curve, the claim follows by applying \cref{lem:T}. For the yield curve, it then suffices to apply \cref{thm:fw_to_yield}.
\end{proof}

\section{The envelope method}\label{sec:envelope}
We review the \emph{envelope method} for the classification of term structure shapes (\textbf{Q1}) and the segmentation of the parameter space $\Theta'$ (\textbf{Q2}). This method was introduced in \cite{KRS23} in the context of the two-factor Vasicek model, but can be applied in far more general settings. 

\subsection{Enveloping a family of lines}
Assuming for now that $\beta_3 \neq 0$ we differentiate \eqref{eq:svensson_2} and write it in the form
       \begin{equation}\label{eq:abc}
            \phi_S'(x) = \beta_3 \Big\{a_{\rm f}(x) + b_{\rm f}(x)\gamma_I + c_{\rm f}(x)\gamma_{II}\Big\}
                   \end{equation}
 with coefficients given by 
 \begin{equation}\label{eq:abc_explicit}
         \begin{split}
            a_{\rm f}(x) &= f_3(x) - f_4(x) = \frac{1}{\tau_2}\left(1-\frac{x}{\tau_2}\right)\exp\left(-\frac{x}{\tau_2}\right)\\
            b_{\rm f}(x) &= f_1(x) - f_2(x) = \frac{1}{\tau_1}\left(1-\frac{x}{\tau_1}\right)\exp\left(-\frac{x}{\tau_1}\right)\\
            c_{\rm f}(x) &= - f_1(x) = -\frac{1}{\tau_1}\exp\left(-\frac{x}{\tau_1}\right).
        \end{split}
        \end{equation}
        Note that the same structure \eqref{eq:abc} holds for the yield curve, but with the different basis functions 
         \begin{equation}\label{eq:abc_explicit_y}
         \begin{split}
		a_{\rm y}(x)&= \frac{1}{x^2}\left((x+\tau_2)e^{-x/\tau_2} - \tau_2\right)+ \frac{1}{\tau_2}e^{-x/\tau_2}\\
		b_{\rm y}(x)&= \frac{1}{x^2}\left((x+\tau_1)e^{-x/\tau_1} - \tau_1\right)+ \frac{1}{\tau_1}e^{-x/\tau_1}\\
		c_{\rm y}(x)&=  \frac{1}{x^2}\left((x+\tau_1)e^{-x/\tau_1} - \tau_1\right).
	\end{split}
\end{equation}
We will now drop the subscripts $\rm f$ and $\rm y$ and simply use $(a, b, c)$ to denote either of $(a_{\rm f}, b_{\rm f}, c_{\rm f})$ or $(a_{\rm y}, b_{\rm y}, c_{\rm y})$. Later, when it becomes necessary to distinguish between the two cases, we add superscripts and subscripts again.
It is easy to verify
that $c(x) \neq 0$ for all $x \in (0,\infty)$; hence, for any $x \in (0,\infty)$, 
 \begin{equation}\label{eq:line}
            \ell_x = \{ (\gamma_I, \gamma_{II})\in \RR^2\ |\ a(x) + b(x)\gamma_I + c(x)\gamma_{II} = 0\}
\end{equation}
defines a line in the $\gamma$-plane. The relevance of this family is as follows: 
\begin{quote}
{\textit
If $\ell_x$ passes through the point $ \gamma  = (\gamma_I, \gamma_{II})$, then any yield curve or forward curve from the Svensson family with parameter $\beta = (\beta_0, \beta_3 \gamma_{II}, \beta_3 \gamma_I, \beta_3)$ has a local extremum at $x$.}
\end{quote}
The type of extremum (dip/hump) is determined by the sign of $\beta_3$ and by keeping track of the half-space 
\[            \ell^+_x = \{ \gamma = (\gamma_I, \gamma_{II})\in \RR^2\ |\ a(x) + b(x)\gamma_I + c(x)\gamma_{II} > 0\}\]
and whether $\gamma$ is located in $\ell_x^+$ immediately before or after it is swept by $\ell_x$. As described in \cite{KRS23}, a non-linear contour appears at the boundary of the region swept by the lines $\ell_x$; see \cref{fig:envelope}. This contour is called the \emph{envelope} of the family $\cF = (\ell_x)_{x > 0}$ and is a classic object of interest in the geometry of lines and curves in the plane, see \cite{BG92}. It is precisely this contour which, together with $\ell_0$ and $\ell_\infty$ partitions the parameter space $\Theta'$ into regions of different forward-curve shapes and hence solves the \emph{segmentation problem} \textbf{Q2}. Mathematically, the envelope is defined as follows:
\begin{defn}The \emph{envelope} $\eta$ consists of all points $\gamma = (\gamma_I, \gamma_{II}) \in \RR^2$ which simultaneously satisfy
\begin{align}\label{eq:envelope}
\begin{split}
a(x) + b(x)\gamma_\RN{1} + c(x)\gamma_\RN{2} &= 0, \quad \text{and}\\
a'(x) + b'(x)\gamma_\RN{1} + c'(x)\gamma_\RN{2}  &= 0.
\end{split}
\end{align}
\end{defn}
Under mild conditions, equations \eqref{eq:envelope} are satisfied by a single point $\eta(x) = (\gamma_\RN{1}, \gamma_{II})$  for each $x \in (0,\infty)$ and their totality $\eta = \eta(x)_{x=0}^\infty$ forms a continuously differentiable curve in $\RR^2$. 

\begin{figure}
        
            \centering
            \begin{subfigure}{0.49\textwidth}
            \includegraphics[width=\textwidth]{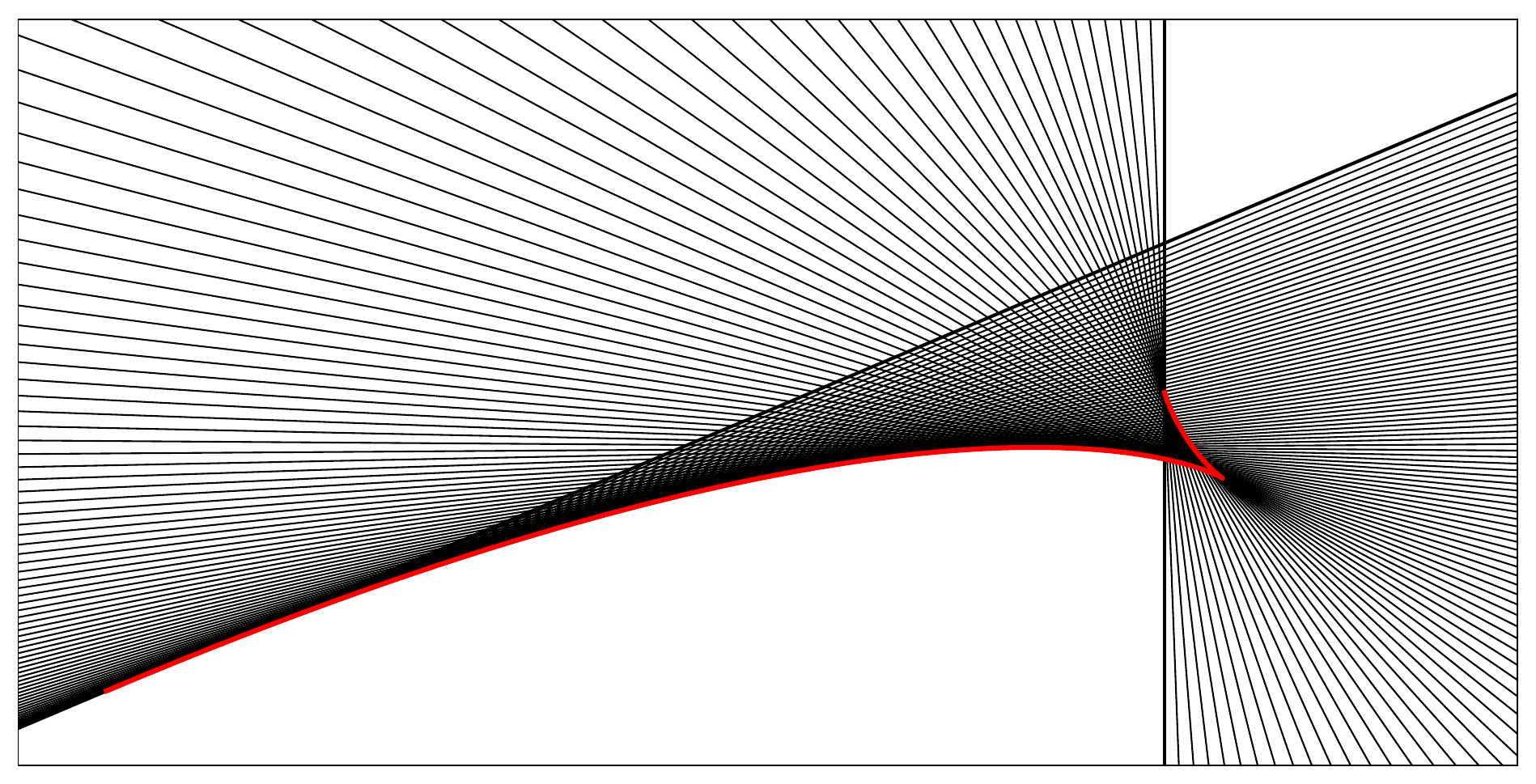}
   			\caption{}
            \label{fig:envelope}
            \end{subfigure}
            \begin{subfigure}{0.49\textwidth}
 	\includegraphics[width=\textwidth]{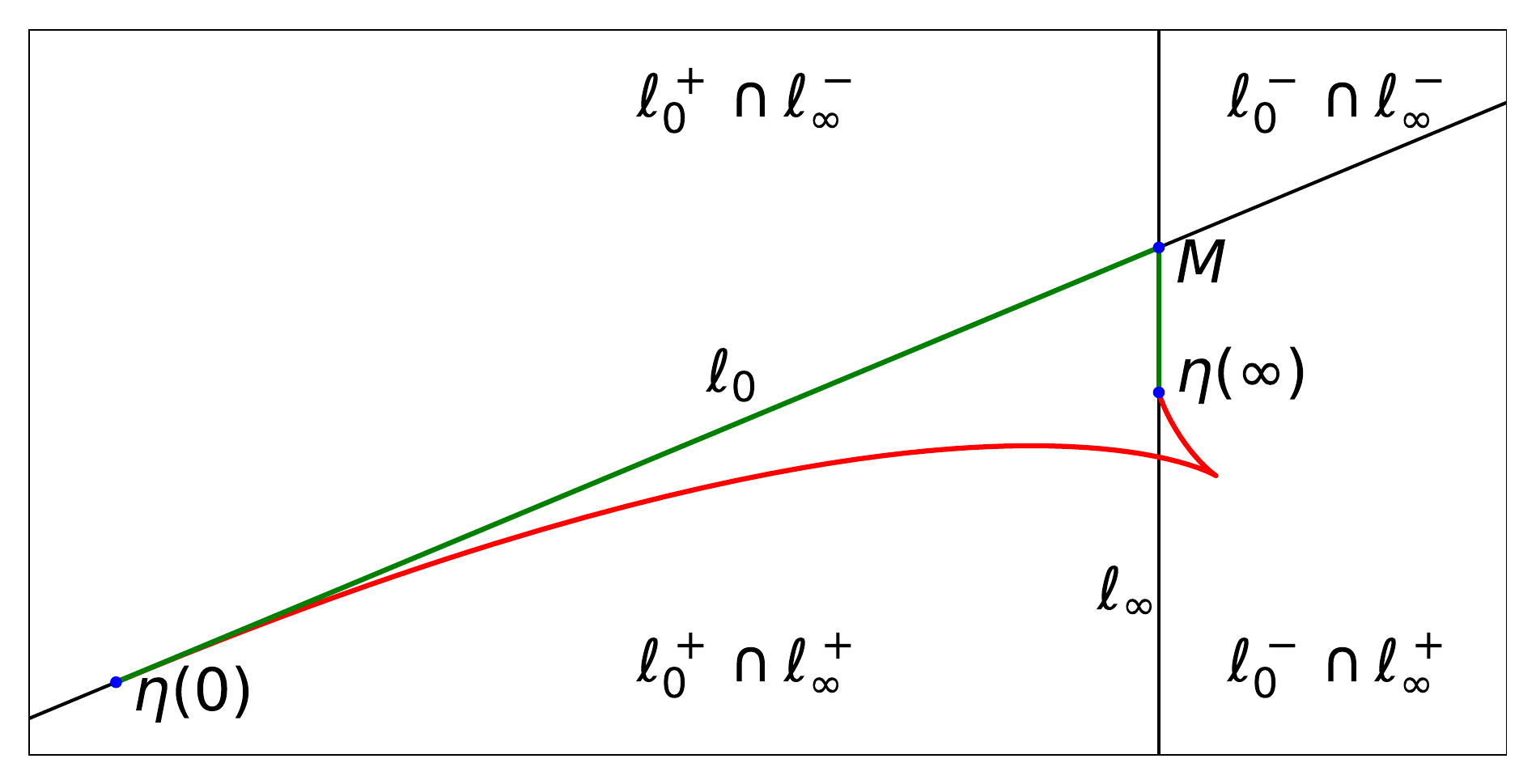}
   			\caption{}
            \label{fig:augmented_envelope}
            \end{subfigure}
            \caption{Left subplot: A family of lines and their envelope (red) appearing as non-linear contour. Right subplot: Schematic illustration of envelope $\eta$ (red), augmented envelope $\hat \eta$ (red + green), contact points $\eta(0)$, $\eta(\infty)$, lines $\ell_0$, $\ell_T$, intersection point $M$ and quadrants $\ell_0^\pm \cap \ell_\infty^\pm$.} \label{fig:envelope_schematic}
\end{figure}

\subsection{Assumptions for the envelope}\label{sub:assumptions}
We revisit some assumptions on the basis functions $a,b,c$ and their implications on the properties of the resulting envelope $\eta$, as introduced in \cite{KRS23}. Although these results were obtained in the context of the two-dimensional Vasicek model, they only use the structure \eqref{eq:abc} of the differentiated yield curve or forward curve, and are completely model-free. For reasons that will become apparent later, we consider a family of lines 
\[\cF_{(\alpha,\omega)} = \left(\ell_x\right)_{x \in (\alpha, \omega)}\]
indexed by a general open interval $(\alpha, \omega)$ with $0 \le \alpha < \omega \le \infty$. It should be obvious that all arguments of \cite{KRS23} for the case $\alpha = 0, \omega = \infty$ carry over to this slightly more general setting. 
The first assumption is 
    \begin{enumerate}[\bf ({A}1')]
        \item $c(x) \neq 0$ for all $x \in (\alpha,\omega)$.\label{item:nonzero}
    \end{enumerate}
 which makes sure that \eqref{eq:line} defines a non-degenerate line for each $x \in (\alpha,\omega)$.\footnote{\cite{KRS23} also require that $b(x) \neq 0$ for all $x \in (\alpha, \omega)$, but this part of the assumption is never used in later arguments.}
        The next assumption (slightly reformulated in comparison with \cite{KRS23}) guarantees well-defined limiting lines\footnote{Note that convergence of sets to a limiting set is equivalent to the pointwise convergence of their indicator functions.} $\ell_\alpha$ and $\ell_\omega$
            \begin{enumerate}[\bf ({A}1)]
                        \setcounter{enumi}{1}
     \item The lines $\ell_x$ converge to limiting lines $\ell_\alpha = \lim_{x \downarrow \alpha} \ell_x$ and $\ell_\omega = \lim_{x \uparrow \omega}\ell_x$. The same is true for the half-spaces $\ell^\pm_\alpha = \lim_{x \downarrow \alpha} \ell^\pm_\alpha$ and $\ell_\omega^\pm = \lim_{x \uparrow \omega} \ell_x^\pm$
     
     \label{item:limits}
        \end{enumerate}
 We also need the following non-degeneracy condition:
\begin{enumerate}[resume*]
\item The limiting lines $\ell_\alpha$ and $\ell_\omega$ are not parallel,\label{item:intersection}
\end{enumerate}
which guarantees that the lines $\ell_\alpha$ and $\ell_\omega$ have a unique intersection point $M = M_{(\alpha,\omega)}$.\\
Using the notation
\begin{equation}W(f_1, f_2, \dotsc, f_n)(x) = \det \begin{psmallmatrix} f_1(x) & f_2(x) & \dotsm & f_n(x)\\f'_1(x) & f'_2(x) & \dotsm & f'_n(x) \\ ... \\ f^{(n)}_1(x) & f^{(n)}_2(x) & \dotsm & f^{(n)}_n(x) \end{psmallmatrix}\end{equation}
for the \emph{Wronskian determinant} of a tuple of functions $f_1\dotsc, f_n$, we introduce the next assumption: 
\begin{enumerate}[resume*]
        \item $W(b,c)(x) \neq 0$ for all $x \in (\alpha, \omega)$.\label{item:Wbc}
    \end{enumerate}
This assumption guarantees a unique solution of the linear system \eqref{eq:envelope} for all $x \in (\alpha,\omega)$. From \cite{KRS23} we have the following result:\footnote{The last claim on the sign of $s'$ was shown in the proof, but not included in the statement of Lemma~3.1 in \cite{KRS23}.}
\begin{lemma}[Lemma~3.1 in \cite{KRS23}]\label{lem:envelope}
Under assumptions \ref{item:nonzero} and \ref{item:Wbc} the envelope $\eta = (\eta(x))_{x \in (\alpha,\omega)}$ of $\cF_{(\alpha,\omega)}$ is a continuously differentiable curve in $\RR^2$ given by 
        \begin{align}\label{eq:envelope_explicit}
        \eta(x) = \left(\frac{W(c,a)(x)}{W(b,c)(x)}, \frac{W(a,b)(x)}{W(b,c)(x)}\right)
    \end{align}
    and with tangent vector 
        \begin{align}
        \eta'(x) = \frac{W(a,b,c)(x)}{W(b,c)(x)^2} \begin{pmatrix}c(x)\\-b(x)\end{pmatrix}.
    \end{align}
    Moreover, the slope function $x \mapsto s(x) := -\frac{b(x)}{c(x)}$ of the lines $(\ell_x)_{x \in (0,\infty)}$ is strictly monotone; its derivative $s'$ has the same sign as $W(b,c)$.
    \end{lemma}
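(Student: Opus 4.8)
The plan is to treat the envelope equations \eqref{eq:envelope} as a linear system in $(\gamma_\RN{1},\gamma_\RN{2})$ and to extract all three assertions by elementary linear algebra. In matrix form \eqref{eq:envelope} reads
\[
\begin{pmatrix} b(x) & c(x)\\ b'(x) & c'(x)\end{pmatrix}\begin{pmatrix}\gamma_\RN{1}\\ \gamma_\RN{2}\end{pmatrix} = -\begin{pmatrix} a(x)\\ a'(x)\end{pmatrix},
\]
and its coefficient determinant is exactly $W(b,c)(x)$, which is nonzero by assumption \ref{item:Wbc}. Cramer's rule therefore yields a unique solution, and evaluating the two $2\times 2$ determinants in the numerators gives precisely $\gamma_\RN{1}=W(c,a)/W(b,c)$ and $\gamma_\RN{2}=W(a,b)/W(b,c)$, i.e.\ \eqref{eq:envelope_explicit}. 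Since $a,b,c$ are smooth and $W(b,c)$ does not vanish, these quotients are continuously differentiable, so $\eta$ is a $C^1$ curve.

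For the tangent vector I would differentiate the envelope relations implicitly rather than differentiate the explicit quotients. Viewing $\gamma_\RN{1},\gamma_\RN{2}$ as functions of $x$ along $\eta$ and differentiating the first line of \eqref{eq:envelope}, the group $a'+b'\gamma_\RN{1}+c'\gamma_\RN{2}$ vanishes by the second line, leaving the clean identity $b\,\gamma_\RN{1}'+c\,\gamma_\RN{2}'=0$. This already forces $\eta'(x)$ to be a scalar multiple of $(c(x),-b(x))^{\top}$, which is the geometric heart of the tangent formula. To determine the scalar I differentiate the second line of \eqref{eq:envelope} as well; together with the previous identity this is a second linear system for $(\gamma_\RN{1}',\gamma_\RN{2}')$ having the same matrix $\left(\begin{smallmatrix} b & c\\ b' & c'\end{smallmatrix}\right)$, and Cramer's rule gives $\eta'(x)=\tfrac{R(x)}{W(b,c)(x)}\,(c(x),-b(x))^{\top}$ with $R=a''+b''\gamma_\RN{1}+c''\gamma_\RN{2}$.

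It remains to identify this scalar. Substituting \eqref{eq:envelope_explicit} yields $R\cdot W(b,c)=a''\,W(b,c)+b''\,W(c,a)+c''\,W(a,b)$, and the right-hand side is exactly the Laplace expansion of the $3\times 3$ Wronskian $W(a,b,c)$ along its bottom row $(a'',b'',c'')$; hence $R/W(b,c)=W(a,b,c)/W(b,c)^2$, which is the claimed tangent formula. Finally, for the slope function $s=-b/c$ (well-defined since $c\neq 0$ by \ref{item:nonzero}) the quotient rule gives $s'=(bc'-b'c)/c^2=W(b,c)/c^2$, so $s'$ shares the sign of $W(b,c)$; as $W(b,c)$ is continuous and nonvanishing on $(\alpha,\omega)$ it keeps constant sign, making $s$ strictly monotone. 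The only point requiring care is the determinant bookkeeping in the identity $R\cdot W(b,c)=W(a,b,c)$---with the conventions $W(c,a)=ca'-ac'$ and $W(a,b)=ab'-ba'$ one must track the cofactor signs in the expansion---but this is a short computation rather than a genuine obstacle.
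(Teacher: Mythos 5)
Your proof is correct and is essentially the standard argument behind Lemma~3.1 of \cite{KRS23}, to which the paper defers without reproving it: Cramer's rule on the defining system \eqref{eq:envelope} gives \eqref{eq:envelope_explicit}, implicit differentiation of both envelope equations yields $\eta'(x) = \tfrac{R(x)}{W(b,c)(x)}\,(c(x),-b(x))^{\top}$ with $R = a'' + b''\gamma_\RN{1} + c''\gamma_\RN{2}$, and your identity $R\cdot W(b,c) = a''W(b,c) + b''W(c,a) + c''W(a,b) = W(a,b,c)$ is precisely the Laplace expansion of the $3\times 3$ Wronskian along its bottom row, with the cofactor signs correctly absorbed by the convention $W(c,a) = -W(a,c)$. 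The slope claim via $s' = W(b,c)/c^2$, combined with continuity and nonvanishing of $W(b,c)$ on the interval $(\alpha,\omega)$, is likewise complete.
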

    We see that the tangent vector $\eta'(x)$ of the envelope vanishes if and only if $W(a,b,c) = 0$ or, equivalently, when
    \begin{equation}\label{eq:regression}
a''(x) + b''(x)\gamma_I + c''(x)\gamma_{II}= 0.
    \end{equation}
    Points of the envelope where \eqref{eq:regression} holds in addition to \eqref{eq:envelope}, are called \emph{points of regression}. We assume
        \begin{enumerate}[resume*]
        \item $W(a,b,c)(x) = 0$ for only finitely many $x \in (\alpha, \omega)$,\label{item:Wabc}
\newcounter{saveenum}
  \setcounter{saveenum}{\value{enumi}}
    \end{enumerate}
which guarantees that there are only finitely many such points. To make the connection to the envelope's shape, recall that a point of a differentiable plane curve is called \emph{singular} if the tangent vector vanishes, and such a point is called a \emph{cusp}, if the tangent vector changes direction while passing through it. All these notions are connected by the following result:
    \begin{lemma}[Lemma~3.5 in \cite{KRS23}]
\label{lem:singular0}
The following are equivalent for a point $\gamma = \eta(x)$ on the envelope of $\cF_{(\alpha, \omega)}$:
\begin{enumerate}[(a)]
\item $W(a,b,c)(x) = 0$;
\item $\gamma$ is a point of regression;
\item $\gamma$ is a singular point of $\eta$;
\end{enumerate}
Moreover, $\gamma$ is a cusp point iff $x$ is a transversal zero\footnote{A zero of a scalar function is called transversal if the function changes sign at the zero.} of $W(a,b,c)$.
 \end{lemma}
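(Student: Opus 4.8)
The plan is to read off all three equivalences directly from the explicit description of the envelope furnished by \cref{lem:envelope}, and then to treat the cusp statement as a separate sign-tracking argument. The implication (a) $\Leftrightarrow$ (c) is essentially immediate: by \cref{lem:envelope} the tangent vector is
\[
\eta'(x) = \frac{W(a,b,c)(x)}{W(b,c)(x)^2}\begin{pmatrix} c(x) \\ -b(x)\end{pmatrix}.
\]
By assumption \ref{item:Wbc} the denominator $W(b,c)(x)^2$ is strictly positive, and by assumption \ref{item:nonzero} the vector $(c(x),-b(x))^\top$ is nonzero. Hence $\eta'(x)$ vanishes precisely when $W(a,b,c)(x)=0$, which is exactly the equivalence of the analytic condition (a) with the geometric condition (c) that $\gamma$ be a singular point.

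For (a) $\Leftrightarrow$ (b) I would argue by elementary linear algebra. A point $\gamma=\eta(x)$ on the envelope satisfies the two equations \eqref{eq:envelope}, which say precisely that the vector $(1,\gamma_\RN{1},\gamma_\RN{2})^\top$ is orthogonal to both $(a,b,c)(x)$ and $(a',b',c')(x)$. The point is a point of regression exactly when, in addition, the regression equation \eqref{eq:regression} holds, i.e.\ when $(1,\gamma_\RN{1},\gamma_\RN{2})^\top$ is also orthogonal to $(a'',b'',c'')(x)$. Thus $\gamma$ is a point of regression iff $(1,\gamma_\RN{1},\gamma_\RN{2})^\top$ lies in the kernel of the $3\times3$ matrix whose rows are these three vectors, i.e.\ iff that matrix is singular; its determinant is exactly $W(a,b,c)(x)$. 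The two directions can be made symmetric by noting that assumption \ref{item:Wbc} forces the first two rows to be linearly independent, so the kernel is at most one-dimensional and, whenever $W(a,b,c)(x)=0$, it is spanned by the envelope solution $(1,\gamma_\RN{1},\gamma_\RN{2})^\top$, forcing \eqref{eq:regression}.

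The cusp statement is the part requiring the most care. A singular point $\gamma=\eta(x_0)$ is a cusp iff the oriented tangent direction reverses as $x$ passes through $x_0$. Writing $\eta'(x)=\lambda(x)\,v(x)$ with scalar $\lambda(x)=W(a,b,c)(x)/W(b,c)(x)^2$ and vector $v(x)=(c(x),-b(x))^\top$, I would observe that $v$ is continuous and nowhere zero near $x_0$, so for $x\neq x_0$ the unit tangent equals $\sgn(\lambda(x))\,v(x)/\norm{v(x)}$. Since $v(x)/\norm{v(x)}\to v(x_0)/\norm{v(x_0)}$ on both sides, the only mechanism for a reversal of direction is a sign change of $\lambda$; and because $W(b,c)^2>0$, this is exactly a sign change of $W(a,b,c)$ at $x_0$, i.e.\ a transversal zero. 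Conversely, if $x_0$ is a non-transversal zero of $W(a,b,c)$, then $\sgn(\lambda)$ is constant on a punctured neighbourhood of $x_0$ and the one-sided limits of the unit tangent agree, so $\gamma$ is not a cusp.

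The main obstacle I anticipate is making this last argument rigorous rather than heuristic: one must confirm that ``change of direction'' is correctly captured by the sign of $\lambda$ alone and that the continuously turning vector $v(x)$ contributes no spurious reversal. Because $v$ stays bounded away from zero, its contribution to the unit tangent is continuous across $x_0$, so the entire discontinuity in direction is carried by $\sgn(\lambda)$; formalizing this via one-sided limits of $\eta'(x)/\norm{\eta'(x)}$ closes the argument.
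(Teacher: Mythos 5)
Your proof is correct and takes the natural route: the paper itself gives no in-text proof (it imports the result as Lemma~3.5 of \cite{KRS23}), and your derivation---reading (a)$\Leftrightarrow$(c) off the tangent formula of \cref{lem:envelope}, settling (a)$\Leftrightarrow$(b) by the rank argument using that \ref{item:Wbc} makes the first two rows of the Wronskian matrix independent, and isolating the cusp criterion in the sign of the scalar factor $W(a,b,c)/W(b,c)^2$---is exactly the argument underlying the cited lemma. The one point worth stating explicitly is that assumption \ref{item:Wabc} makes the zeroes of $W(a,b,c)$ isolated, so $\lambda$ is nonvanishing on a punctured neighbourhood of $x_0$ and your one-sided limits of the unit tangent are well defined.
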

 By Lemma~\ref{lem:envelope} the slope of the envelope is monotone, such that its large-scale features are mostly determined by its cusps. Our final assumption is a technical assumption used in several proofs in \cite{KRS23}. We first recall the following notion: 
  \begin{defn}
        A line $h$ is in \emph{oblique position} with respect to the family $\cF_{(\alpha,\omega)} = (\ell_x)_{x \in (\alpha,\omega)}$, if it is not parallel to $\ell_x$ for any $x \in [\alpha,\omega]$,
    \end{defn}
\noindent and assume:
     \begin{enumerate}[resume*]
        \item There exists a line $h$ in oblique position with respect to $\cF_{(\alpha,\omega)}$. \label{item:oblique}
    \end{enumerate}
Let us assume that \ref{item:nonzero}-\ref{item:oblique} are in force, such that the envelope $\eta$ is given by \eqref{eq:envelope_explicit}. If the endpoints $\eta(\alpha) = \lim_{x \to \alpha} \eta(x)$ and $\eta(\omega) = \lim_{x \to \omega} \eta(x)$ exist, they must lie on $\ell_\alpha$ and $\ell_\omega$ and are therefore called \emph{contact points} of $\eta$ in \cite{KRS23}.\footnote{Even if they do not exist as proper points in $\RR^2$, a notion of `asymptotic contact' of $\eta$ to $\ell_\alpha$ or $\ell_\omega$ can be introduced, see \cite{KRS23}; however, in the context of the Svensson family, this will not be needed.} Augmenting $\eta$ with segments of $\ell_\alpha$ and $\ell_\omega$, it can be turned into a \emph{closed curve}:

\begin{defn}[cf. \cite{KRS23}]\label{defn:augmented}Let assumptions \ref{item:nonzero} - \ref{item:Wabc} hold, and let $M$ be the intersection point of $\ell_\alpha$ and $\ell_\omega$. The \emph{augmented envelope} $\hat \eta $ is the oriented curve with basepoint $M$, piecewise defined by 
\begin{enumerate}[(a)]
\item The line segment from $M$ to $\eta(\alpha)$ (contained in $\ell_\alpha$);
\item The envelope $\eta$; 
\item The line segment from $\eta(\omega)$ to  $M$ (contained in $\ell_\omega$).  
\end{enumerate}
\end{defn}
Adapted from \cite{KRS23} we have the following important result:
        \begin{thm}[See~Thm.~4.1 in \cite{KRS23}]\label{thm:winding}Let $E(\gamma)$ be the number of local extrema within the interval $(\alpha, \omega) \subseteq \Rplus$ of the forward curve or the yield curve with parameter $\gamma \in \Theta'$. Let $\hat \eta$ be the augmented envelope of the family $(\ell_x)_{x \in (\alpha, \omega)}$ associated to $f$ or $y$, and assume that \ref{item:nonzero}-\ref{item:oblique} holds true. Moreover, set
        \[D = \left(\ell_\alpha^+ \cap \ell_\omega^- \right) \cup \left(\ell_\alpha^- \cap \ell_\omega^+ \right) \]
        Then, for any $\gamma \in \RR^2 \setminus (\eta \cup \ell_\alpha \cup \ell_\omega)$
        \begin{equation}\label{eq:main_formula}
            E(\gamma) = 2 \abs{\wind_{\hat{\eta}}(\gamma)} + \mathbf{1}_{D}(\gamma),
        \end{equation}
        where $\wind_{\hat{\eta}}(\gamma)$ denotes the winding number of $\hat \eta$ around $\gamma$.
    \end{thm}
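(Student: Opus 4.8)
The plan is to reduce everything to the scalar function $g_\gamma(x) = a(x) + b(x)\gamma_\RN{1} + c(x)\gamma_\RN{2}$, which by \eqref{eq:abc} satisfies $\phi' = \beta_3\, g_\gamma$ (for the forward curve, and analogously for the yield curve). Since $\beta_3$ has constant sign, the local extrema of $\phi$ in $(\alpha,\omega)$ are exactly the \emph{transversal} zeros of $g_\gamma$. The key observation is that for $\gamma \notin \eta$ no zero can be a double zero: a double zero at $x$ would mean $g_\gamma(x) = g_\gamma'(x) = 0$, i.e. $\gamma$ solves both equations \eqref{eq:envelope}, forcing $\gamma = \eta(x)$. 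Hence every zero of $g_\gamma$ is a sign change, $E(\gamma)$ equals the \emph{total} number of zeros of $g_\gamma$ in $(\alpha,\omega)$, and this number is finite because $g_\gamma$ is a linear combination of a Tchebycheff system, so \cref{lem:T} applies.

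Next I would pin down the parity of $E(\gamma)$ via the endpoints. As all zeros are transversal and finite in number, $g_\gamma$ takes opposite signs at $\alpha$ and $\omega$ iff $E(\gamma)$ is odd. By \ref{item:limits} the limiting values $g_\gamma(\alpha), g_\gamma(\omega)$ exist, and the sign of $g_\gamma(\alpha)$ (resp. $g_\gamma(\omega)$) is precisely the indicator of $\gamma \in \ell_\alpha^+$ (resp. $\gamma \in \ell_\omega^+$). Thus ``opposite signs'' is exactly the event $\gamma \in D$, giving $E(\gamma) \equiv \mathbf{1}_D(\gamma) \pmod 2$. This isolates the term $\mathbf{1}_D$ and reduces the theorem to showing that the nonnegative integer $N(\gamma) := \tfrac{1}{2}\bigl(E(\gamma) - \mathbf{1}_D(\gamma)\bigr)$ equals $\abs{\wind_{\hat\eta}(\gamma)}$.

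For this remaining identity I would argue by continuation across the strata of $\RR^2 \setminus (\eta \cup \ell_\alpha \cup \ell_\omega)$. Both $N(\gamma)$ and $\abs{\wind_{\hat\eta}(\gamma)}$ are constant on each connected component: $\wind_{\hat\eta}$ is locally constant off $\hat\eta \subseteq \eta \cup \ell_\alpha \cup \ell_\omega$, while $E$ and $\mathbf{1}_D$ can only jump when a zero of $g_\gamma$ crosses an endpoint, i.e. when $\gamma$ meets $\eta$, $\ell_\alpha$ or $\ell_\omega$. On an unbounded reference component on which $g_\gamma$ has constant sign — such a component exists because, by \ref{item:oblique}, the directions of the lines $\ell_x$ do not fill the whole circle — we have $E = 0$, $\gamma \notin D$ and $\wind_{\hat\eta} = 0$, so the identity holds there. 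It then suffices to match jumps across each stratum, crossing at smooth points and avoiding the finitely many cusps (by \ref{item:Wabc}) and triple points: crossing a smooth point of $\eta$ splits or annihilates a double zero of $g_\gamma$, changing $E$ by $\pm 2$ and $\wind_{\hat\eta}$ by $\pm 1$; crossing $\ell_\alpha$ or $\ell_\omega$ moves a single zero through an endpoint, changing $E$ and $\mathbf{1}_D$ by $\pm 1$ and changing $\wind_{\hat\eta}$ by $\pm 1$ exactly when the crossing lies on the corresponding segment of $\hat\eta$.

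The main obstacle is the sign bookkeeping in this last step: ensuring the two sides jump in the same \emph{direction}, not merely by the same magnitude. The crucial input is the monotonicity of the slope $s(x) = -b(x)/c(x)$ from \cref{lem:envelope}, which forces the tangent lines $\ell_x$ — these are the tangent lines of $\eta$, since $\eta'(x) \parallel (c(x),-b(x))$ — to rotate in a single direction, so that $\hat\eta$ winds consistently and every nonzero value of $\wind_{\hat\eta}$ has the same sign. Consequently $\abs{\wind_{\hat\eta}}$ increases precisely when $\gamma$ crosses $\eta$ into the locally convex side carrying the two extra extrema, matching the $+2$ jump of $E$; and along $\ell_\alpha, \ell_\omega$ the jumps of $N$ and $\abs{\wind_{\hat\eta}}$ coincide once the signs are determined. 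The genuinely delicate configurations are the meeting points of the strata, namely the basepoint $M$ and the contact points $\eta(\alpha), \eta(\omega)$, where $\eta$, $\ell_\alpha$ and $\ell_\omega$ come together; here one must verify consistency of the combined jumps, using \ref{item:intersection} (so that $M$ is a transversal crossing of $\ell_\alpha$ and $\ell_\omega$) to rule out degeneracies. Checking these finitely many local models is the technical heart of the argument; everything else is pure continuation from the reference component.
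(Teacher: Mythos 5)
Note first that this paper contains no proof of \cref{thm:winding}: it is quoted from \cite{KRS23}, with only the remark that the proof there uses the affine structure \eqref{eq:abc} and assumptions \ref{item:nonzero}--\ref{item:oblique}, so your attempt must be measured against that external proof. Your overall plan --- reduce to transversal zeros of $g_\gamma(x)= a(x)+b(x)\gamma_\RN{1}+c(x)\gamma_\RN{2}$, fix the parity of $E(\gamma)$ by the endpoint half-spaces, then propagate the identity by wall-crossing from a reference component --- is sound, and the first two steps are essentially correct. One repair: \cref{lem:T} is not available here, since in the model-free setting of the theorem $(a,b,c)$ need not arise from a Tchebycheff system. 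You do not need it: if zeros of $g_\gamma$ accumulated at an interior point, that point would be a common zero of $g_\gamma$ and $g_\gamma'$, forcing $\gamma\in\eta$ by \ref{item:Wbc}; if they accumulated at $\alpha$ or $\omega$, the convergence in \ref{item:limits} would force $\gamma\in\ell_\alpha$ or $\ell_\omega$. Similarly, your unbounded reference component does exist, but the one-line appeal to \ref{item:oblique} should be expanded into the compactness argument (normalize $(b,c)$, use $c\neq 0$ and non-parallelism to $h$ on all of $[\alpha,\omega]$ to get a uniform sign of $g_{\gamma_0+tv}$ for $t$ large along the direction $v$ of $h$).

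The genuine gap is that your third step, which carries the entire content of formula \eqref{eq:main_formula}, is asserted rather than proved --- and you say so yourself. Concretely: (a) at a crossing of $\ell_\alpha$ you must decide whether the crossing point lies on the segment of $\ell_\alpha$ between $M$ and $\eta(\alpha)$ belonging to $\hat\eta$ (where $\wind_{\hat\eta}$ jumps) or on one of the complementary rays (where it does not), and verify that exactly in the former case $(\Delta E - \Delta\mathbf{1}_D)/2=\pm1$ with matching sign; this requires relating the side of $\ell_\alpha$ on which a zero is born near $x=\alpha$ to the orientation of $\hat\eta$, i.e.\ precisely the content of \cref{lem:directions}, which you never invoke. (b) Your claim that monotonicity of $s(x)=-b(x)/c(x)$ forces all nonzero winding numbers to share one sign is not automatic: the tangent direction of $\eta$ \emph{reverses} at every cusp (\cref{lem:singular0}), so $\hat\eta$ is not a convex curve and the consistency of the winding must be argued, not inferred from slope monotonicity alone. (c) The continuation argument needs that every component of $\RR^2\setminus(\eta\cup\ell_\alpha\cup\ell_\omega)$ can be reached by a path crossing walls only at good points, hence that the bad points (cusps, $M$, contact points, intersections of $\eta$ with the limiting lines and with itself) are finite in number --- \ref{item:Wabc} and \ref{item:intersection} help, but finiteness of $\eta\cap\ell_\alpha$, $\eta\cap\ell_\omega$ and of self-intersections is an additional claim. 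Since these local verifications at $M$, $\eta(\alpha)$, $\eta(\omega)$ and across the walls are exactly what the proof in \cite{KRS23} consists of, your proposal is a correct and plausible plan of that proof, but not yet a proof.
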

    
   As a corollary of this result, we obtain that the envelope $\eta$ together with the lines $\ell_\alpha$ and $\ell_\omega$ splits the parameter space $\Theta'$ into finitely many regions $R_1, \dotsc, R_k$ of constant term structure shape. The shape for a particular region $R_i$ can be determined from formula \eqref{eq:main_formula} and the initial sign of $\phi'_S$, resp.\ $\Phi'_S$. A schematic illustration is given in \cref{fig:augmented_envelope}. 
Finally, we remark that the proof of \cref{thm:winding} in \cite{KRS23} uses only the affine-linear form \eqref{eq:abc} of the differentiated term structure curves and assumptions \ref{item:nonzero}-\ref{item:oblique}, but is otherwise `model-free'.     

\section{Solving the classification and segmentation problem for the Svensson family}\label{sec:proofs}
In light of Theorem~\ref{thm:winding}, we have to check assumptions  \ref{item:nonzero}-\ref{item:oblique} in the context of the Svensson family, analyze the properties of the resulting envelope $\eta$, and determine the regions $R_1, \dotsc, R_k$ corresponding to the different term structure shapes. Together, this will allow us to show our main result \cref{thm:main} on the classification of term structure shapes and to obtain a clear understanding of the parameter space segmentation in the three different regimes described in the introduction. We start with a simple observation:
\begin{lemma}
The functions $c_{\rm f}$ and $c_{\rm y}$ have constant negative sign; hence \ref{item:nonzero} is satisfied for both yield- and forward curve.
\end{lemma}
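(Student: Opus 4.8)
The plan is to treat the two cases separately, since the forward-curve case is immediate and only the yield-curve case requires a short argument. For the forward curve, the explicit formula $c_{\rm f}(x) = -\frac{1}{\tau_1}\exp(-x/\tau_1)$ from \eqref{eq:abc_explicit} already displays the sign: as $\tau_1 > 0$ and the exponential is strictly positive, we have $c_{\rm f}(x) < 0$ for every $x \in (0,\infty)$, with nothing further to check.

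For the yield curve, I would start from \eqref{eq:abc_explicit_y}, namely $c_{\rm y}(x) = \frac{1}{x^2}\big((x+\tau_1)e^{-x/\tau_1} - \tau_1\big)$. Since the prefactor $1/x^2$ is strictly positive on $(0,\infty)$, it suffices to show that the bracketed term is strictly negative. Introducing the dimensionless variable $u = x/\tau_1 > 0$ and factoring out $\tau_1 > 0$, the bracket becomes $\tau_1\big((u+1)e^{-u} - 1\big)$, so the claim reduces to the scalar inequality $(u+1)e^{-u} < 1$ for all $u > 0$.

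This last inequality is elementary: setting $g(u) = (u+1)e^{-u}$, one computes $g'(u) = -u\,e^{-u} < 0$ for $u > 0$, so $g$ is strictly decreasing on $(0,\infty)$; combined with $g(0) = 1$ this yields $g(u) < 1$ for all $u > 0$, equivalently $(u+1)e^{-u} - 1 < 0$. Hence $c_{\rm y}(x) < 0$ on $(0,\infty)$. In both cases the sign is constant and negative, which is precisely assumption \ref{item:nonzero} for the yield- and forward curve. There is no genuine obstacle here: the only step requiring (trivial) verification is the monotonicity of $g$, equivalently the standard bound $e^{u} > 1 + u$ for $u > 0$.
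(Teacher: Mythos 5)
Your proof is correct, and it supplies exactly the routine verification that the paper omits: the lemma is stated there as a simple observation with no written proof, the sign of $c_{\rm f}$ being immediate from \eqref{eq:abc_explicit} and the sign of $c_{\rm y}$ reducing, as you show, to the standard inequality $(1+u)e^{-u} < 1$ for $u > 0$. Both your case split and the monotonicity argument via $g'(u) = -u\,e^{-u} < 0$ are sound, so there is nothing to correct.
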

Next we turn to the properties of the Wronskian determinants that appear in \ref{item:Wbc}, \ref{item:Wabc}, and \cref{lem:envelope}. Most results follow from direct (but sometimes tedious) computation, with some additional arguments taken from Appendix~\ref{app}.

\subsection{The Wronskian determinants}
\begin{lemma}\label{lem:W2_explicit}
The first-order Wronskians for the forward curve are given by 
        \begin{subequations}\label{eq:W2_explicit}
        \begin{align}\label{eq:Wbc_explicit}
                            W(b_{\rm f},c_{\rm f})(x) &= -\frac{1}{\tau_1^3}\exp\left(-\frac{2x}{\tau_1}\right),\\
            W(c_{\rm f},a_{\rm f})(x) &= \frac{\beta_3}{\tau_1^2\tau_2^2}\exp\left(-x\left(\frac{1}{\tau_1} + \frac{1}{\tau_2}\right)\right)\left[(2\tau_1 - \tau_2) - \frac{x}{\tau_2}(\tau_1 - \tau_2)\right],\\
            W(a_{\rm f},b_{\rm f})(x) &= \frac{\beta_3}{\tau_1^3\tau_2^3}(\tau_1 - \tau_2)\exp\left(-x\left(\frac{1}{\tau_1} + \frac{1}{\tau_2}\right)\right)\left[2\tau_1\tau_2 - x(\tau_1 + \tau_2) + x^2\right].
            \end{align}
        \end{subequations}
In particular, $W(b_{\rm f},c_{\rm f})$ has constant negative sign for all $x \in (0,\infty)$, and thus \ref{item:Wbc} holds for $\cF^{\rm f}_{(0,\infty)}$. The same is true for $W(b_{\rm y},c_{\rm y})$ and $\cF^{\rm y}_{(0,\infty)}$.
\end{lemma}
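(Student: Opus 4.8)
The plan is to compute each first-order Wronskian directly from the definition $W(g,h) = gh' - hg'$, using the explicit forms \eqref{eq:abc_explicit} of $a_{\rm f}, b_{\rm f}, c_{\rm f}$. All three functions are products of a polynomial of degree at most one with an exponential $e^{-x/\tau_i}$, so their derivatives are of the same type and every Wronskian is again an exponential-polynomial. First I would record $a_{\rm f}', b_{\rm f}', c_{\rm f}'$; since $b_{\rm f}$ and $c_{\rm f}$ share the factor $e^{-x/\tau_1}$, the product $W(b_{\rm f},c_{\rm f})$ carries $e^{-2x/\tau_1}$ and the polynomial part telescopes to the constant $-\tau_1^{-3}$, giving the first formula. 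For the mixed pairs $W(c_{\rm f},a_{\rm f})$ and $W(a_{\rm f},b_{\rm f})$ the common exponential factor is $e^{-x(1/\tau_1+1/\tau_2)}$; after pulling it out one collects the remaining polynomial in $x$ and matches it against the bracketed linear, resp.\ quadratic, expressions in the statement. This is routine but bookkeeping-heavy differentiation. The forward-curve sign claim is then immediate: since $\tau_1 > 0$ and $e^{-2x/\tau_1} > 0$, one reads off $W(b_{\rm f},c_{\rm f})(x) = -\tau_1^{-3}e^{-2x/\tau_1} < 0$ for all $x \in (0,\infty)$, which is exactly assumption \ref{item:Wbc} for $\cF^{\rm f}_{(0,\infty)}$.

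The harder part is the yield-curve statement, because the basis functions \eqref{eq:abc_explicit_y} contain $x^{-2}$-terms, so $W(b_{\rm y},c_{\rm y})$ is no longer a pure exponential-polynomial and its sign cannot be read off directly. The key observation I would exploit is the identity $b_{\rm y}(x) - c_{\rm y}(x) = \tfrac{1}{\tau_1}e^{-x/\tau_1} = -c_{\rm f}(x)$, visible immediately on comparing \eqref{eq:abc_explicit_y} with \eqref{eq:abc_explicit}. By bilinearity and antisymmetry of the Wronskian this yields
\[ W(b_{\rm y},c_{\rm y}) = W(c_{\rm y} - c_{\rm f},\, c_{\rm y}) = -W(c_{\rm f},c_{\rm y}), \]
reducing the claim to showing that the single Wronskian $W(c_{\rm f},c_{\rm y})$ is strictly positive on $(0,\infty)$. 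I would then clear the $x^{-2}$ denominators and remove the positive factor $e^{-x/\tau_1}$, so that the sign of $W(c_{\rm f},c_{\rm y})$ becomes the sign of an explicit function whose positivity reduces to an elementary inequality comparing $e^{x/\tau_1}$ with a low-degree polynomial in $x/\tau_1$, checkable by a term-by-term (Taylor) estimate valid for all $x>0$. The removable singularity at $x=0$ is treated separately by a short expansion, which gives the limiting value $\tfrac{1}{6}\tau_1^{-3} > 0$. Alternatively, the sign could be transferred from the forward case through the integral relation \eqref{eq:forward_to_yield} using the auxiliary lemmas of Appendix~\ref{app}.

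I expect the yield-curve sign verification to be the main obstacle: unlike the forward case there is no closed exponential-polynomial form to inspect, so one must establish positivity of a transcendental expression uniformly in $x$ rather than reading it off from a factorization. The reduction $W(b_{\rm y},c_{\rm y}) = -W(c_{\rm f},c_{\rm y})$ is what makes this tractable, turning a two-function computation with awkward $x^{-2}$ cross-terms into the analysis of a single, more symmetric Wronskian, after which the remaining inequality is elementary.
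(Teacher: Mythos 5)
Your proposal is correct, and the forward-curve part coincides with the paper's proof: the paper simply states that \eqref{eq:W2_explicit} follows by direct calculation and that the negative sign of $W(b_{\rm f},c_{\rm f})$ is obvious. For the yield curve, however, you take a genuinely different route. The paper invokes \cref{lem:fw_to_yield}(d), whose proof rests on the integral representation \eqref{eq:Wronskian_y} derived from \eqref{eq:forward_to_yield} together with a sign-regularity theorem of Karlin for Tchebycheff systems --- an argument that is structural and model-free, using nothing about the specific Svensson basis functions. You instead exploit the algebraic identity $b_{\rm y} = c_{\rm y} - c_{\rm f}$ (which indeed holds, comparing \eqref{eq:abc_explicit} with \eqref{eq:abc_explicit_y}), so that by bilinearity $W(b_{\rm y},c_{\rm y}) = -W(c_{\rm f},c_{\rm y})$, and a direct computation gives the closed form
\begin{equation*}
W(c_{\rm f},c_{\rm y})(x) = \frac{e^{-x/\tau_1}}{\tau_1 x^3}\Bigl[(x+2\tau_1)e^{-x/\tau_1} + x - 2\tau_1\Bigr],
\end{equation*}
where the bracket $h(x)$ satisfies $h(0)=0$ and $h'(x) = 1 - (1+x/\tau_1)e^{-x/\tau_1} > 0$ for $x>0$ by the elementary inequality $e^u > 1+u$; hence $W(b_{\rm y},c_{\rm y}) < 0$ on $(0,\infty)$ as claimed, and your stated limit $\tfrac{1}{6\tau_1^3}$ at $x=0$ checks out (consistent with the first-order analogue of \cref{lem:fw_to_yield}(f), where the factor is $\tfrac{1}{2}\cdot\tfrac{1}{3} = \tfrac{1}{6}$). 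The trade-off: your identity is specific to the Svensson family and would not survive a change of basis functions, whereas the paper's Lemma~A.1(d) applies to any family satisfying \ref{item:nonzero}--\ref{item:Wbc}; in exchange, your argument is fully explicit, self-contained, and avoids the Tchebycheff-system machinery entirely. You also correctly identify the paper's actual route as your fallback alternative.
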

\begin{proof}
Equations \eqref{eq:W2_explicit} are obtained by direct calculation and the sign property of $W(b_{\rm f},c_{\rm f})$ is obvious. Applying \cref{lem:fw_to_yield}(d), the claim on $W(b_{\rm y},c_{\rm y})$ follows.
\end{proof}
Using \cref{lem:W2_explicit}, the envelope $\eta$ of $\cF_{(0,\infty)}$ can be calculated from \eqref{eq:envelope_explicit}. To check assumption \ref{item:Wabc}, we also need the second-order Wronskian.
\begin{lemma}\label{lem:Wabcf}
The Wronskian $W(a_{\rm f}, b_{\rm f},c_{\rm f})$ is given by
                \begin{gather}\label{eq:Wabc_si}
                    W(a_{\rm f},b_{\rm f},c_{\rm f})(x) = \frac{1}{\tau_1^5\tau_2^3}\exp\left(-x\left(\frac{2}{\tau_1} + \frac{1}{\tau_2}\right)\right)\left[-(2\tau_1 - \tau_2)^2 + \tau_1^2 + \frac{x}{\tau_2}(\tau_1 - \tau_2)^2\right].
                \end{gather}
It has no zero in $(0,\infty)$ if $r \in [1/3,1)$ and it has a single zero at
\begin{equation}\label{eq:cusp_time_fw}
x_*^{\rm f} = \frac{\tau_2(3\tau_1 - \tau_2)}{\tau_1 - \tau_2}.
\end{equation}
if $r \in (0,1/3) \cup (1,\infty)$; in particular \ref{item:Wabc} holds for $\cF^{\rm f}_{(0,\infty)}$. 
\end{lemma}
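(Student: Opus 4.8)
The plan is to split the proof into two parts: establishing the closed form \eqref{eq:Wabc_si}, and then reading off the zero structure from it.

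For the formula I would compute the $3\times 3$ Wronskian determinant $W(a_{\rm f}, b_{\rm f}, c_{\rm f})$ directly from the explicit basis functions in \eqref{eq:abc_explicit}. The determinant is trilinear in its columns, and each of $a_{\rm f}, b_{\rm f}, c_{\rm f}$ carries an exponential factor ($e^{-x/\tau_2}$ for $a_{\rm f}$, $e^{-x/\tau_1}$ for both $b_{\rm f}$ and $c_{\rm f}$) that is preserved under differentiation; hence every term of the expansion shares the common factor $\exp(-(2/\tau_1 + 1/\tau_2)x)$, which I factor out together with the constant prefactors. A useful shortcut is the relation $b_{\rm f} = (x/\tau_1 - 1)\,c_{\rm f}$, which already underlies the computation of $W(b_{\rm f},c_{\rm f})$ in \cref{lem:W2_explicit}; using it to perform a column operation reduces the effort and makes the cancellations transparent. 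The one step worth flagging is that, although differentiating the degree-one polynomial prefactors could a priori leave a cubic in $x$, the leading contributions cancel and the bracket collapses to the affine function displayed in \eqref{eq:Wabc_si}. This is the only genuinely calculation-heavy part; everything afterwards is elementary.

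For the zero structure I observe that the exponential and the prefactor $1/(\tau_1^5\tau_2^3)$ are strictly positive on $(0,\infty)$, so the sign of $W(a_{\rm f},b_{\rm f},c_{\rm f})$ is governed entirely by the bracketed term
\[
g(x) = -(2\tau_1 - \tau_2)^2 + \tau_1^2 + \frac{x}{\tau_2}(\tau_1 - \tau_2)^2 .
\]
Since the Svensson family uses distinct time scales $\tau_1 \neq \tau_2$ (each regime in \cref{def:regime} excludes $r = 1$), the slope $(\tau_1 - \tau_2)^2/\tau_2$ is strictly positive, so $g$ is a strictly increasing affine function with exactly one real zero. Solving $g(x) = 0$ and using the factorization $(2\tau_1 - \tau_2)^2 - \tau_1^2 = (3\tau_1 - \tau_2)(\tau_1 - \tau_2)$ yields precisely $x_*^{\rm f} = \tau_2(3\tau_1 - \tau_2)/(\tau_1 - \tau_2)$ as in \eqref{eq:cusp_time_fw}.

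It then remains to decide when this unique real zero lies in $(0,\infty)$. Writing $3\tau_1 - \tau_2 = \tau_2(3r - 1)$ and $\tau_1 - \tau_2 = \tau_2(r-1)$ with $r = \tau_1/\tau_2$, the sign of $x_*^{\rm f}$ equals the sign of $(3r-1)/(r-1)$. This quotient is positive exactly when $r > 1$ or $r < 1/3$, i.e.\ for $r \in (0,1/3)\cup(1,\infty)$; for $r \in [1/3,1)$ one has $x_*^{\rm f} \le 0$, so $g$ has no zero in $(0,\infty)$ and $W(a_{\rm f},b_{\rm f},c_{\rm f})$ is sign-definite there. In either case $W(a_{\rm f},b_{\rm f},c_{\rm f})$ has at most one zero in $(0,\infty)$, which is certainly finite, so assumption \ref{item:Wabc} holds for $\cF^{\rm f}_{(0,\infty)}$. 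The main obstacle is thus purely the bookkeeping in the determinant expansion; the zero analysis is a one-line affine computation.
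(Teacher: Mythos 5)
Your proposal is correct and matches the paper's approach: the paper's proof is simply ``both \eqref{eq:Wabc_si} and its zero $x_*^{\rm f}$ can be calculated directly,'' and your argument is exactly this direct computation, fleshed out with a clean shortcut (the column operation exploiting $b_{\rm f} = (x/\tau_1 - 1)\,c_{\rm f}$) and a correct sign analysis of the affine bracket, including the boundary case $r = 1/3$ where $x_*^{\rm f} = 0$ falls outside $(0,\infty)$.
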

\begin{proof}
Again, both \eqref{eq:Wabc_si} and its zero $x_*^{\rm f}$ can be calculated directly, making the claim obvious.
\end{proof}

\begin{lemma}\label{lem:Wabcy}
The Wronskian $W(a_{\rm y},b_{\rm y},c_{\rm y})$ is given by 
                \begin{multline}\label{eq:large_Wronskian_y}
                    W(a_{\rm y},b_{\rm y},c_{\rm y})(x) = 
                    \frac{1}{x^4\tau_1^3\tau_2^3}\exp\left(-\frac{x}{\tau_1}\right)\cdot \\
                    \left[\exp\left(-x\left(\frac{1}{\tau_1} + \frac{1}{\tau_2}\right)\right)p_2(x) + \exp\left(-\frac{x}{\tau_2}\right)q_2(x) + \exp\left(-\frac{x}{\tau_1}\right)\tau_2^4\right] 
                \end{multline}
                where 
                \begin{align*}
                    p_2(x) &:= x^2(\tau_1^2 - 2\tau_1\tau_2 - \tau_2^2) + x(-2\tau_1^3 + 3\tau_1^2\tau_2 - 2\tau_1\tau_2^2 - \tau_2^3)\\ &\quad + \tau_2(4\tau_1^3 - 3\tau_1^2\tau_2 - \tau_2^3)\\
                    q_2(x) &:= x^2(\tau_1 - \tau_2) + x(\tau_2^2 + \tau_1\tau_2 - 2\tau_1^2) + \tau_1\tau_2(4\tau_1 - 3\tau_2).
                \end{align*}
                If $r \in [1/3,1)$ it has no zero in $(0,\infty)$ and if $r \in (0,1/3) \cup (1,\infty)$ it has a single zero $x^{\rm y}_*$, which must satisfy
                \[x_*^{\rm y} > x_*^{\rm f}.\]
                In particular \ref{item:Wabc} holds for $\cF^{\rm y}_{(0,\infty)}$. 
\end{lemma}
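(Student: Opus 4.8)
The overall strategy parallels \cref{lem:Wabcf}, but the analysis is considerably more involved. Whereas the forward bracket in \eqref{eq:Wabc_si} is linear in $x$, so that its single zero can be read off at once, the yield basis functions \eqref{eq:abc_explicit_y} carry additional $1/x$ and $1/x^2$ contributions coming from the averaging \eqref{eq:forward_to_yield}, and the resulting bracket is a genuine exponential polynomial. First I would obtain the closed form \eqref{eq:large_Wronskian_y} by expanding the determinant $W(a_{\rm y},b_{\rm y},c_{\rm y})$ directly; after factoring out the strictly positive quantity $x^{-4}\tau_1^{-3}\tau_2^{-3}e^{-x/\tau_1}$, the surviving terms regroup according to the three exponential rates and produce the coefficients $p_2$, $q_2$ and the constant $\tau_2^4$. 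Since this prefactor is positive on $(0,\infty)$, the zeros of $W(a_{\rm y},b_{\rm y},c_{\rm y})$ coincide with those of the bracketed function $G(x)$, and the whole problem reduces to counting the sign changes of $G$ on $(0,\infty)$ and locating its zero relative to $x_*^{\rm f}$.

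Next I would fix the sign of $G$ at the two ends of the interval. As $x\to\infty$ the slowest-decaying exponential dominates: for $r>1$ this is $e^{-x/\tau_1}$ with positive coefficient $\tau_2^4$, so $G$ is eventually positive, while for $r<1$ it is $e^{-x/\tau_2}$ with coefficient $q_2$ of leading sign $\sgn(\tau_1-\tau_2)<0$, so $G$ is eventually negative. As $x\to 0^+$ the sign is governed by $\lim_{x\to0}G(x)/x^4=\tau_1^3\tau_2^3\,W(a_{\rm y},b_{\rm y},c_{\rm y})(0^+)$, which a direct expansion shows to be negative for $r\ge 1/3$ and positive for $r<1/3$, the change occurring exactly at the phase transition $r=1/3$. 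Comparing the two boundary signs, they agree for $r\in[1/3,1)$ and disagree for $r\in(0,1/3)\cup(1,\infty)$; the intermediate value theorem then forces an odd number of zeros, hence at least one, in the latter regimes and leaves an even number, possibly none, in the former.

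The crux, and the step I expect to be the main obstacle, is to show that $G$ has \emph{at most} one sign change, which upgrades this parity information to the exact counts $0$ and $1$. A crude estimate is of no help: $G$ is an exponential polynomial with three distinct rates and polynomial coefficients of degrees $2,2,0$, for which the elementary bound permits up to six zeros. One must therefore use the explicit coefficients. I would multiply by $e^{x/\tau_1}$ and differentiate, which removes the constant term and yields $e^{x/\tau_2}\bigl(e^{x/\tau_1}G\bigr)'=\tilde P(x)+e^{x/\tau_1}\tilde Q(x)$ with explicit quadratics $\tilde P=p_2'-\tfrac1{\tau_2}p_2$ and $\tilde Q=q_2'+(\tfrac1{\tau_1}-\tfrac1{\tau_2})q_2$; a sign analysis of this two-term expression, carried out separately in each regime from the coefficients of $p_2,q_2$ and using the order-four vanishing of $G$ at the origin, should show that $G$ admits no second sign change. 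Together with the boundary signs this gives no zero for $r\in[1/3,1)$ and a single zero $x_*^{\rm y}$ for $r\in(0,1/3)\cup(1,\infty)$, and in particular verifies \ref{item:Wabc} for $\cF^{\rm y}_{(0,\infty)}$.

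Finally, for $x_*^{\rm y}>x_*^{\rm f}$ I would note that by \cref{lem:Wabcf} the value $x_*^{\rm f}$ is positive in exactly the two regimes at issue, and then evaluate $\sgn W(a_{\rm y},b_{\rm y},c_{\rm y})(x_*^{\rm f})$ by substituting $x_*^{\rm f}=\tau_2(3\tau_1-\tau_2)/(\tau_1-\tau_2)$ into $G$ and simplifying, using the vanishing of the forward Wronskian at $x_*^{\rm f}$ to shorten the algebra. In both regimes this sign turns out to equal the sign of $G$ on the interval $(0,x_*^{\rm y})$ to the left of its zero (equivalently, the sign at $0^+$); hence $x_*^{\rm f}\in(0,x_*^{\rm y})$, that is $x_*^{\rm y}>x_*^{\rm f}$, which completes the proof.
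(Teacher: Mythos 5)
Your proposal diverges from the paper's proof in an interesting way, but it has a genuine gap at its crux, plus sign errors that would corrupt the final step. The paper does not analyze the exponential polynomial $G$ directly at all: it invokes the structural comparison results of \cref{lem:fw_to_yield}, whose part (e) bounds the number of zeros of $W(a_{\rm y},b_{\rm y},c_{\rm y})$ by the number of zeros of $W(a_{\rm f},b_{\rm f},c_{\rm f})$ (known exactly from \cref{lem:Wabcf}), and whose part (f) transfers the initial sign; the inequality $x_*^{\rm y}>x_*^{\rm f}$ then comes for free by applying part (e) on the subinterval $(0,x_*^{\rm f})$. Your route replaces all of this with a self-contained sign analysis of $G$, and precisely the step you flag as ``the main obstacle'' --- that $G$ has at most one sign change --- is left unproven. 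The reduction you propose is mechanically correct ($e^{x/\tau_2}\bigl(e^{x/\tau_1}G\bigr)' = \tilde P + e^{x/\tau_1}\tilde Q$ with the quadratics you state), but it does not come close to closing the gap: a two-rate exponential polynomial with quadratic coefficients can itself have up to five zeros, so Rolle returns a bound of six for $G$, and the regime-by-regime coefficient analysis that would reduce this to one is exactly the hard computation you defer with ``should show.'' Likewise, your argument for $x_*^{\rm y}>x_*^{\rm f}$ rests on an unexecuted evaluation of $\sgn G(x_*^{\rm f})$ that is merely asserted to ``turn out'' right, and it presupposes the single-transversal-zero structure you have not established.

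Moreover, both of your boundary-sign determinations are wrong, in a way that happens to cancel in the parity count but would propagate into any sign-sensitive step. By \cref{lem:fw_to_yield}(f), $W(a_{\rm y},b_{\rm y},c_{\rm y})(0^+)$ has the sign of $W(a_{\rm f},b_{\rm f},c_{\rm f})(0^+)$, which from \eqref{eq:Wabc_si} is $\sgn\bigl((\tau_2-\tau_1)(3\tau_1-\tau_2)\bigr)$: \emph{positive} for $r\in[1/3,1)$ and \emph{negative} for $r\in(0,1/3)\cup(1,\infty)$ --- the opposite of your claim, and the sign changes at $r=1$ as well as at $r=1/3$. Your terminal sign for $r<1$ is also opposite to the truth: a direct asymptotic expansion of the determinant (using $b_{\rm y}-c_{\rm y}=\tfrac{1}{\tau_1}e^{-x/\tau_1}$ and noting that the pure power-law parts of the $a_{\rm y}$- and $c_{\rm y}$-columns are proportional) gives $W(a_{\rm y},b_{\rm y},c_{\rm y})(x)\sim C\,x^{-2}e^{-x(1/\tau_1+1/\tau_2)}$ with $C$ a positive multiple of $\tau_2-\tau_1$, so the terminal sign is positive in \emph{all} regimes, as the paper's proof asserts. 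In fairness, you were misled by what appears to be a typo in the displayed formula \eqref{eq:large_Wronskian_y}: the bracket is dimensionally inhomogeneous ($q_2$ is one degree short in $(\tau_1,\tau_2)$), and with the printed $q_2$ the bracket does not vanish at $x=0$, contradicting the finiteness of $W(a_{\rm y},b_{\rm y},c_{\rm y})(0)$ forced by the $x^{-4}$ prefactor; replacing $q_2$ by $-\tau_1 q_2$ repairs both defects and restores the positive terminal sign, consistent with the order-four vanishing at the origin that you yourself intended to exploit. As it stands, however, your proposal establishes neither the exact zero counts nor $x_*^{\rm y}>x_*^{\rm f}$, both of which the paper obtains cleanly from the forward--yield comparison lemma.
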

\begin{proof}
Let us write $W_{\rm y}(x) = W(a_{\rm y},b_{\rm y},c_{\rm y})(x)$ for short; the same is done for $W_{\rm f}(x)$. Equation \eqref{eq:large_Wronskian_y} follows by direct calculation. By \cref{lem:fw_to_yield}(e), $W_{\rm y}$ cannot have more zeroes than $W_{\rm f}$. From \cref{lem:fw_to_yield}(f), it follows that the initial sign of $W_{\rm y}$ is the same as for $W_{\rm f}$, that is, strictly positive if $r \in [1/3,1)$ and negative otherwise. The terminal sign, on the other hand, is always positive. We conclude that $W_{\rm y}$ has a single zero $x_*^{\rm y}$ if and only if $W_{\rm f}$ has one. Applying \cref{lem:fw_to_yield}(e) on the interval $(0,x_*^{\rm f})$ it follows that $x_*^{\rm y} > x_*^{\rm f}$.
\end{proof}

As a consequence of \cref{lem:singular0,lem:Wabcf,lem:Wabcy}, we see that the envelope $\eta$ has no singular point if $r \in [1/3,1)$ and has a single cusp point at $x_*$ if $r \in (0,1/3) \cup (1,\infty)$, explaining the phase transitions between the scale-regular, weakly scale-inverted and strongly scale-inverted regimes from \cref{def:regime}. Next, we focus on the analytic description of the parameter space segmentation. We start by looking at the limiting behaviour of $\cF_{(0,\infty)}$ at zero and infinity.

\begin{lemma} \ref{item:limits} holds for $\cF^{\rm f}_{(0,\infty)}$ and for $\cF^{\rm y}_{(0,\infty)}$ at the limit $x \to 0$. The limiting line is the same for the yield- and the forward curve and given by
\begin{equation}\label{eq:l0}
\ell_0: \quad \tau_1 + \tau_2 \gamma_I - \tau_2 \gamma_{II} = 0.
\end{equation}
The envelope contacts $\ell_0$ at the point 
        \begin{align}\label{eq:contact_zero}
            \eta(0) =\left(\frac{\tau_1}{\tau_2^2}(\tau_2 - 2\tau_1), \frac{2\tau_1}{\tau_2^2}(\tau_2 - \tau_1)\right).
        \end{align}
\end{lemma}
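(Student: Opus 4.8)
The statement bundles three claims: (i) that \ref{item:limits} holds as $x \to 0$ for both families, (ii) that the two limiting lines coincide and equal \eqref{eq:l0}, and (iii) the location \eqref{eq:contact_zero} of the contact point. The plan is to obtain (i) and (ii) by computing the $x \to 0$ limits of the coefficient triples $(a,b,c)$, and to obtain (iii) by evaluating the envelope formula \eqref{eq:envelope_explicit} at $x=0$ using the explicit Wronskians of \cref{lem:W2_explicit}. The key reduction for \ref{item:limits} is that, once $(a(x),b(x),c(x))$ converges to a finite triple $(a(0),b(0),c(0))$ with $c(0) \neq 0$, both the line $\ell_x$ and the half-spaces $\ell_x^\pm$ automatically converge: the defining expression $a(x)+b(x)\gamma_I+c(x)\gamma_{II}$ converges pointwise in $\gamma$, so for every $\gamma \notin \ell_0$ its sign stabilizes for small $x$, which is precisely pointwise convergence of the indicators $\mathbf{1}_{\ell_x^\pm}$ (the criterion recalled in the footnote to \ref{item:limits}). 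Thus the whole of \ref{item:limits} at $x\to 0$ follows from convergence of the coefficients alone.

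For the forward curve this is immediate. The functions in \eqref{eq:abc_explicit} are smooth at $x=0$ with $a_{\rm f}(0)=1/\tau_2$, $b_{\rm f}(0)=1/\tau_1$ and $c_{\rm f}(0)=-1/\tau_1$. Since $c_{\rm f}(0)\neq 0$ the limiting line is non-degenerate, and clearing denominators in $a_{\rm f}(0)+b_{\rm f}(0)\gamma_I+c_{\rm f}(0)\gamma_{II}=0$ yields exactly \eqref{eq:l0}.

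The yield curve is the only place requiring care, because the coefficients in \eqref{eq:abc_explicit_y} carry a $1/x^2$ prefactor that is singular at $x=0$. The plan is to Taylor-expand the exponentials to third order and check that the singular part cancels: for $a_{\rm y}$ one finds $(x+\tau_2)e^{-x/\tau_2}-\tau_2 = -\tfrac{x^2}{2\tau_2}+O(x^3)$, so the $1/x^2$ term contributes $-1/(2\tau_2)$ and, together with $\tfrac{1}{\tau_2}e^{-x/\tau_2}\to \tfrac{1}{\tau_2}$, gives $a_{\rm y}(0)=1/(2\tau_2)$; the analogous expansions give $b_{\rm y}(0)=1/(2\tau_1)$ and $c_{\rm y}(0)=-1/(2\tau_1)$. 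These are exactly one half of the forward-curve values, so the limiting coefficient triples are proportional and hence define the same line \eqref{eq:l0}. This settles (i) and (ii).

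Finally, for the contact point, note that $W(b_{\rm f},c_{\rm f})(0)=-1/\tau_1^3 \neq 0$ by \cref{lem:W2_explicit}, so the ratios in \eqref{eq:envelope_explicit} are well defined at $x=0$; substituting the $x=0$ values of $W(c_{\rm f},a_{\rm f})$ and $W(a_{\rm f},b_{\rm f})$ read off from that lemma and simplifying produces \eqref{eq:contact_zero}. As a consistency check — and in line with the general fact recorded after \cref{defn:augmented} that a contact point must lie on the limiting line — one verifies directly that the point \eqref{eq:contact_zero} satisfies \eqref{eq:l0}. The main (and essentially only) obstacle is the bookkeeping in the Taylor expansion of the yield-curve coefficients; the forward-curve case and the contact-point evaluation are pure substitution into formulas already established.
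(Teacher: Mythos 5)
Your proposal is correct and takes essentially the same route as the paper, whose entire proof is ``follows by direct calculation from \eqref{eq:abc_explicit}'' --- you have simply carried that calculation out in full, including the one genuinely delicate step (the cancellation of the $1/x^2$ singularity in the yield-curve coefficients, giving limiting triples proportional to the forward ones) and the correct reduction of \ref{item:limits} to convergence of the coefficient triple with $c(0)\neq 0$. Two small points to fold in: your contact-point computation establishes \eqref{eq:contact_zero} only for $\eta_{\rm f}$, and the claim for the yield envelope should be closed off via \cref{lem:fw_to_yield}(a) (or by noting $g_{\rm y}(0)=g_{\rm f}(0)/2$ and $g'_{\rm y}(0)=g'_{\rm f}(0)/3$ for $g\in\set{a,b,c}$, which makes the Wronskian ratios in \eqref{eq:envelope_explicit} agree at $x=0$); also, the factors $\beta_3$ printed in the Wronskians of \cref{lem:W2_explicit} are typographical --- the functions in \eqref{eq:abc_explicit} contain no $\beta_3$ --- so they must be discarded when reading off values, exactly as your substitution implicitly does (otherwise the envelope would depend on $\beta_3$, contradicting \eqref{eq:contact_zero}).
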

\begin{proof}
Also this Lemma follows by direct calculation from \eqref{eq:abc_explicit}.
\end{proof}

\begin{lemma} In the scale-regular case $r > 1$, \ref{item:limits} holds for $\cF^{\rm f}_{(0,\infty)}$ and for $\cF^{\rm y}_{(0,\infty)}$ at the limit $x \to \infty$. The limiting lines are given by
          \begin{align}\label{eq:l_inf}
                    \ell_\infty^{\rm f}:&\quad \gamma_I = 0, \quad \text{and}\\
                    \ell_\infty^{\rm y}:&\quad -\tau_2 - \tau_1\gamma_I - \tau_1\gamma_{II} = 0\notag
                \end{align}
for the forward curve and the yield curve respectively. These lines are not parallel to $\ell_0$, hence \ref{item:intersection} holds and the intersection points are given by
\begin{equation}\label{eq:M}
M_{\rm f} = \left(0,\frac{\tau_1}{\tau_2}\right), \qquad \text{resp.} \qquad M_{\rm y}= \left(-\frac{\tau_1^2 + \tau_2^2}{2 \tau_1 \tau_2}, \frac{\tau_1^2 - \tau_2^2}{2 \tau_1 \tau_2}\right).
\end{equation}
Moreover, an oblique line exists, and hence also \ref{item:oblique} is satisfied.  The envelope $\eta$ contacts $\ell_\infty$ at
\begin{equation}\label{eq:contact_inf}
\eta_{\rm f}(\infty) = (0,0) \qquad \text{resp.} \qquad \eta_{\rm y}(\infty) = \left(0,-\frac{\tau_2}{\tau_1}\right).
\end{equation}
\end{lemma}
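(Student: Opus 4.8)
The plan is to verify the four claims in turn—convergence of the lines and half-spaces (\ref{item:limits}), non-parallelism (\ref{item:intersection}) with the stated intersection points, existence of an oblique line (\ref{item:oblique}), and the contact points \eqref{eq:contact_inf}—by explicit asymptotic analysis of the coefficient functions $(a,b,c)$ as $x \to \infty$, using throughout that $\tau_1 > \tau_2$ in the scale-regular regime, so that $e^{-x/\tau_1}$ decays strictly slower than $e^{-x/\tau_2}$.

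For \ref{item:limits} I would renormalize the defining equation $a(x) + b(x)\gamma_I + c(x)\gamma_{II} = 0$ so that its coefficients converge. For the forward curve I divide by $b_{\rm f}(x)$: from \eqref{eq:abc_explicit} one gets $c_{\rm f}/b_{\rm f} = -1/(1-x/\tau_1) \to 0$ and $a_{\rm f}/b_{\rm f} \to 0$ (the latter because the rational prefactor stays finite while the surviving exponential $e^{-x(1/\tau_2 - 1/\tau_1)}$ vanishes), leaving the equation $\gamma_I = 0$, which is $\ell_\infty^{\rm f}$ as in \eqref{eq:l_inf}. For the yield curve I instead multiply by $x^2$; from \eqref{eq:abc_explicit_y} the exponential contributions vanish and $x^2 a_{\rm y} \to -\tau_2$, $x^2 b_{\rm y} \to -\tau_1$, $x^2 c_{\rm y} \to -\tau_1$, giving $\ell_\infty^{\rm y}$. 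Convergence of the half-spaces $\ell_x^\pm$ then follows from convergence of these normalized coefficients together with tracking the sign of the normalizing factor ($b_{\rm f}(x) < 0$ for $x > \tau_1$, and $x^2 > 0$).

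With the limiting lines in hand, \ref{item:intersection} and \eqref{eq:M} reduce to linear algebra: $\ell_0$ from \eqref{eq:l0} has slope $1$, while $\ell_\infty^{\rm f}$ is vertical and $\ell_\infty^{\rm y}$ has slope $-1$, so in neither case are the two lines parallel, and solving the two $2 \times 2$ systems yields $M_{\rm f}$ and $M_{\rm y}$. For \ref{item:oblique} I use that the slope $s(x) = -b(x)/c(x)$ is strictly monotone by \cref{lem:envelope}: for the forward curve $s_{\rm f}(x) = 1 - x/\tau_1$ sweeps $(-\infty,1]$ together with the vertical direction of $\ell_\infty^{\rm f}$, so any line of slope $>1$ is oblique; for the yield curve $s_{\rm y}$ runs monotonically from $1$ to $-1$, so any vertical line is oblique.

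The contact points \eqref{eq:contact_inf} come from \eqref{eq:envelope_explicit}. For the forward curve I substitute the explicit Wronskians of \cref{lem:W2_explicit}: in each coordinate the ratio carries the factor $e^{-x(1/\tau_1+1/\tau_2)}/e^{-2x/\tau_1} = e^{-x(1/\tau_2 - 1/\tau_1)} \to 0$, which dominates the polynomial prefactors, so $\eta_{\rm f}(\infty) = (0,0)$. The main obstacle is the yield-curve contact point: the polynomial parts of $a_{\rm y}, b_{\rm y}, c_{\rm y}$ are all proportional to $1/x^2$, so the leading terms of $W(b_{\rm y},c_{\rm y})$ and $W(c_{\rm y},a_{\rm y})$ cancel and a naive limit is indeterminate. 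I would resolve this using the identity $b_{\rm y} = c_{\rm y} + \tfrac{1}{\tau_1}e^{-x/\tau_1}$ read off from \eqref{eq:abc_explicit_y}, which gives $W(b_{\rm y},c_{\rm y}) = W(\tfrac{1}{\tau_1}e^{-x/\tau_1}, c_{\rm y})$ and pulls out the dominant factor $e^{-x/\tau_1}$; expanding $W(c_{\rm y},a_{\rm y})$ and $W(a_{\rm y},b_{\rm y})$ to first non-vanishing order in $1/x$ then yields $W(c_{\rm y},a_{\rm y})/W(b_{\rm y},c_{\rm y}) \to 0$ and $W(a_{\rm y},b_{\rm y})/W(b_{\rm y},c_{\rm y}) \to -\tau_2/\tau_1$, i.e.\ $\eta_{\rm y}(\infty) = (0,-\tau_2/\tau_1)$. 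As a sanity check, and a shortcut that avoids the second coordinate altogether, one may note that the contact point must lie on $\ell_\infty^{\rm y}$, so it suffices to confirm that the first coordinate vanishes.
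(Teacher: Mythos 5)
Your proposal is correct and follows essentially the same route as the paper's proof, which likewise obtains $\ell_\infty$ by normalizing the equation of $\ell_x$ and passing to the limit (the paper divides by $b(x)$, equivalent to your $x^2$-rescaling for the yield curve), and establishes \ref{item:oblique} via the strict monotonicity of the slope $s(x)=-b(x)/c(x)$ from Lemma~\ref{lem:envelope}. The only difference is that you spell out the contact-point limits \eqref{eq:contact_inf} in detail---in particular resolving the leading-order cancellation in $W(b_{\rm y},c_{\rm y})$ and $W(c_{\rm y},a_{\rm y})$ via the identity $b_{\rm y}=c_{\rm y}+\tfrac{1}{\tau_1}e^{-x/\tau_1}$---which the paper leaves as an implicit direct calculation from \eqref{eq:envelope_explicit}.
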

\begin{proof}
Dividing the equation $a(x) + \gamma_I b(x) + \gamma_{II} c(x) = 0$ of $\ell_x$ by $b(x)$ and evaluating the limit $x \to \infty$ the equations for $\ell_\infty$ are obtained. It is obvious that these lines are not parallel to $\ell_0$; their intersection point is easily calculated. Finally, note that the slope of $\ell_0$ is $1$ and the (asymptotic) slopes of $\ell_\infty^{\rm f}$ and $\ell_\infty^{\rm y}$ are $-\infty$ and $-1$. By \cref{lem:envelope} the slopes $s(x)$ of $\ell_x$ are strictly decreasing. Hence, any line $h$ with slope strictly greater than one will be oblique to $\cF_{(0,\infty)}$ for both yield- and forward-curve.
\end{proof}

Finally, we look at the limit $x \to \infty$ in the case $r < 1$.
\begin{lemma} \label{lem:ell_scale_inverse}In the scale-inverted case  $\lim_{x \to \infty} \ell_x = \emptyset$ for both yield curve and forward curve. Hence, \ref{item:limits} \emph{does not} hold for the limit $x \to \infty$. The limits of the half-spaces are given by 
\begin{align*}
\lim_{x \to \infty} \ell_x^+  = \emptyset, \qquad \lim_{x \to \infty} \ell_x^- = \RR^2.
\end{align*}
\end{lemma}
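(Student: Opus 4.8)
The plan is to extract the $x\to\infty$ asymptotics of the coefficients $(a,b,c)$ directly from the explicit formulas \eqref{eq:abc_explicit} and \eqref{eq:abc_explicit_y} and, in each case, to see which coefficient dominates. In the scale-inverted regime (\cref{def:regime}) we have $r=\tau_1/\tau_2<1$, i.e.\ $\tau_1<\tau_2$, so $1/\tau_1>1/\tau_2$ and $e^{-x/\tau_2}$ decays strictly more slowly than $e^{-x/\tau_1}$. For the forward curve this singles out $a_{\rm f}$ (which carries $e^{-x/\tau_2}$) as dominant, since both $b_{\rm f}$ and $c_{\rm f}$ carry $e^{-x/\tau_1}$. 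First I would check that $b_{\rm f}(x)/a_{\rm f}(x)\to0$ and $c_{\rm f}(x)/a_{\rm f}(x)\to0$, the algebraic prefactors being overwhelmed by the exponential gap $e^{-x(1/\tau_1-1/\tau_2)}$. Normalising the equation of $\ell_x$ by $a_{\rm f}(x)\neq0$ then gives, in the limit, the inconsistent relation $1+0\cdot\gamma_I+0\cdot\gamma_{II}=0$; as no point of $\RR^2$ satisfies it, $\lim_{x\to\infty}\ell_x=\emptyset$ and \ref{item:limits} fails at $x\to\infty$.

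For the half-spaces I would work pointwise, using the reading of set convergence as pointwise convergence of indicator functions. Fix $\gamma=(\gamma_I,\gamma_{II})$ and put $g_x(\gamma):=a_{\rm f}(x)+b_{\rm f}(x)\gamma_I+c_{\rm f}(x)\gamma_{II}$. The two limits above give $g_x(\gamma)/a_{\rm f}(x)\to1>0$, so $g_x(\gamma)$ eventually has the sign of $a_{\rm f}(x)$; and from $a_{\rm f}(x)=\tfrac{1}{\tau_2}(1-x/\tau_2)e^{-x/\tau_2}$ we read off $a_{\rm f}(x)<0$ for all $x>\tau_2$. Hence $g_x(\gamma)<0$ for all large $x$, so every $\gamma$ eventually lies in $\ell_x^-$ and in no $\ell_x^+$. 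This yields $\mathbf{1}_{\ell_x^+}(\gamma)\to0$ and $\mathbf{1}_{\ell_x^-}(\gamma)\to1$ for every $\gamma$, i.e.\ $\lim_{x\to\infty}\ell_x^+=\emptyset$ and $\lim_{x\to\infty}\ell_x^-=\RR^2$.

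The step I expect to be the main obstacle is the yield curve, where the asymptotic mechanism is genuinely different. In \eqref{eq:abc_explicit_y} each of $a_{\rm y},b_{\rm y},c_{\rm y}$ has an algebraic tail of order $x^{-2}$ (from the $-\tau_i/x^2$ terms) that dominates the exponentially small contributions, so that no single coefficient outweighs the others: $x^2a_{\rm y}(x)\to-\tau_2$, $x^2b_{\rm y}(x)\to-\tau_1$ and $x^2c_{\rm y}(x)\to-\tau_1$. Multiplying the equation of $\ell_x^{\rm y}$ by $x^2$ and passing to the limit therefore produces a \emph{nondegenerate} line, $-\tau_2-\tau_1\gamma_I-\tau_1\gamma_{II}=0$ --- precisely the line found in the scale-regular regime and insensitive to the ordering of $\tau_1,\tau_2$ --- rather than an inconsistency. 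Consequently the blow-up mechanism driving the forward-curve conclusion is absent here, and the emptiness of the limiting line can only be read off in the weak (pointwise-indicator) sense in which the measure-zero set $\ell_x$ is met by each fixed $\gamma$ at most finitely often. Reconciling this with the companion half-space claim $\ell_x^-\to\RR^2$ is the point I would scrutinise hardest; I would attempt it either through this indicator-convergence interpretation or by transferring the forward-curve result via the forward-to-yield comparison results (\cref{thm:fw_to_yield} and its appendix refinements).
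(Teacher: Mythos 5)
Your forward-curve argument is, up to presentation, exactly the paper's proof: the paper also divides the equation of $\ell_x$ by the dominant coefficient $a(x)$, uses $b(x)/a(x)\to 0$ and $c(x)/a(x)\to 0$ together with the eventual negativity of $a$, and concludes in the indicator sense that every fixed $\gamma$ lies in $\ell_x^-$ for all large $x$. That half of your proposal is complete and correct, and your explicit pointwise treatment of the half-spaces is if anything more careful than the paper's.

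The obstacle you isolated for the yield curve is, however, not something to be repaired — it is a genuine error in the lemma and in the paper's own proof. The paper applies the same one-line argument to both families of lines, but the assertion $b(x)/a(x)\to 0$, $c(x)/a(x)\to 0$ is false for the coefficients \eqref{eq:abc_explicit_y}: as you computed, $x^2 a_{\rm y}(x)\to-\tau_2$, $x^2 b_{\rm y}(x)\to-\tau_1$, $x^2 c_{\rm y}(x)\to-\tau_1$, so the ratios tend to $\tau_1/\tau_2\neq 0$ regardless of the regime, because the dominant $O(x^{-2})$ tails come from the non-exponential parts of the basis functions. Consequently, neither of your two proposed rescue routes can succeed, since the half-space claim itself fails for the yield curve: for fixed $\gamma$ one has $x^2\left(a_{\rm y}(x)+b_{\rm y}(x)\gamma_\RN{1}+c_{\rm y}(x)\gamma_\RN{2}\right)\to-\left(\tau_2+\tau_1\gamma_\RN{1}+\tau_1\gamma_\RN{2}\right)$, so every $\gamma$ with $\tau_2+\tau_1\gamma_\RN{1}+\tau_1\gamma_\RN{2}<0$ lies eventually in $\ell_x^+$, not in $\ell_x^-$. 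In the indicator sense, $\lim_{x\to\infty}\ell_x^{{\rm y},-}$ is the open half-plane $\set{\tau_2+\tau_1\gamma_\RN{1}+\tau_1\gamma_\RN{2}>0}$ bounded by the same line $\ell_\infty^{\rm y}$ as in the scale-regular case \eqref{eq:l_inf}, not $\RR^2$; only the first claim, $\lim_{x\to\infty}\ell_x^{\rm y}=\emptyset$, survives pointwise (on the limiting line the residual term $\tfrac{1}{\tau_2}e^{-x/\tau_2}$ keeps the affine form nonzero). A concrete counterexample: take $\tau_1=1$, $\tau_2=2$ (so $r=1/2$, weakly scale-inverted), $\beta_3=1$, $\beta_2=0$, $\beta_1=-10$; then $x^2\Phi_S'(x)\to 8>0$ and in fact $\Phi_S'>0$ on all of $(0,\infty)$, so the yield curve is \texttt{normal} with $\beta_3>0$ — contradicting the eventual-decrease conclusion the paper draws from this lemma in the weakly scale-inverted segmentation. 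This also shows why your second route fails: \cref{thm:fw_to_yield} only bounds the number of extrema of the yield curve by that of the forward curve, and here the forward curve is \texttt{humped} while the yield curve is monotone increasing, which that theorem fully permits. In short: your forward-curve half matches the paper; your refusal to claim the yield-curve half was the right instinct, because the statement is false there and the paper's proof breaks at precisely the step you identified.
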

\begin{proof}
The half-space $\ell_x^-$ consists of all $x \in \RR$ where 
\begin{equation}\label{eq:lplus}
a(x) + \gamma_I b(x) + \gamma_{II} c(x) < 0.
\end{equation} For large $x$, the value of $a(x)$ is negative. Dividing \eqref{eq:lplus} by $a(x)$ and taking into account that $b(x)/a(x) \to 0$ and $c(x)/a(x) \to 0$ we see that for any $\gamma = (\gamma_I, \gamma_{II}) \in \RR^2$ there is a large enough $x$, such that \eqref{eq:lplus} holds. Consequently $\lim_{x \to \infty} \ell_x^- = \RR^2$, and the claim follows.
\end{proof}
This shows that the cases $r < 1$ and $r > 1$ have to be treated differently: If $r > 1$, then \cref{thm:winding} can be applied directly to the family $\cF_{(0,\infty)}$ without further modifications. If $r < 1$, the method has to be adapted. Instead of $\cF_{(0,\infty)}$, we analyze the families  $\cF_{(0,T)}$ for large enough $T >0$ to classify the shape that the term structure takes on the bounded interval $[0,T]$. Then, exhausting $[0,\infty)$ by the intervals $[0,T]$, we derive results for term structure shapes over all of $[0,\infty)$. For applying \cref{thm:winding} to $\cF_{(0,T)}$ it only remains to show that \ref{item:intersection} and \ref{item:oblique} hold.
\begin{lemma}
	In the scale-inverted case, any line with slope strictly greater than one is oblique with respect to $\cF_{(0,T)}$ for all $T > 0$. Moreover, the line $\ell_T$ is never parallel to $\ell_0$; hence \ref{item:intersection} and \ref{item:oblique} hold true for any $T > 0$.
\end{lemma}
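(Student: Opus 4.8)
The plan is to reduce everything to the monotonicity of the slope function of the line family, which has already been established in full generality and regime-independently. Writing each $\ell_x$ in the form $\gamma_{II} = -a(x)/c(x) - s(x)\gamma_{I}$ with $s(x) = -b(x)/c(x)$ — legitimate since assumption \ref{item:nonzero} ($c(x)\neq 0$) holds by the lemma on the sign of $c$ — the slope of $\ell_x$ in the $(\gamma_{I},\gamma_{II})$-plane is exactly $s(x)$. By \cref{lem:envelope}, $s$ is strictly monotone with $\sgn(s') = \sgn(W(b,c))$, and by \cref{lem:W2_explicit} we have $W(b,c) < 0$ on all of $(0,\infty)$ for both the forward and the yield curve in every regime. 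Hence $s$ is strictly decreasing; this is the only structural input needed, and it does not depend on whether we are in the scale-inverted case.

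Next I would pin down the slope of the limiting line $\ell_0$. From \eqref{eq:l0} the equation $\tau_1 + \tau_2\gamma_{I} - \tau_2\gamma_{II} = 0$ rearranges to $\gamma_{II} = \gamma_{I} + \tau_1/\tau_2$, so $\ell_0$ has slope $1$; since \ref{item:limits} holds at $x\to 0$, this is equivalently $s(0^+)=1$. Combining this with the strict decrease of $s$ gives the key bound $s(x) < 1$ for every $x \in (0,T]$. Consequently all lines in $\{\ell_x : x \in [0,T]\}$ — now including the endpoint line $\ell_T$, which is a genuine line because $c(T)\neq 0$ — have slope at most $1$, with the value $1$ attained only in the limit $x\to 0$.

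The two claims then follow immediately. For \ref{item:oblique}, any fixed line $h$ of slope strictly greater than $1$ cannot be parallel to any $\ell_x$ with $x\in[0,T]$, since all of those have slope at most $1$; this exhibits an oblique line, and does so uniformly in $T$. For \ref{item:intersection}, the limiting lines of $\cF_{(0,T)}$ are $\ell_0$ and $\ell_T$, whose slopes $1$ and $s(T)<1$ differ, so they are not parallel and meet in a unique intersection point $M$.

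I do not expect a genuine obstacle here: the argument is essentially the scale-regular computation repeated on the finite horizon $[0,T]$, and it is precisely this restriction to a compact interval that rescues the envelope method when $\ell_\infty$ fails to exist (\cref{lem:ell_scale_inverse}). The only points requiring care are (i) matching $s(0^+)=1$ with the slope of $\ell_0$ read off from \eqref{eq:l0}, and (ii) respecting that both the oblique and the non-parallel conditions concern the \emph{closed} interval $[0,T]$, so that the endpoint line $\ell_T$ and the limiting line $\ell_0$ must be handled explicitly — but the strict monotonicity of $s$ separates their slopes and closes both gaps at once.
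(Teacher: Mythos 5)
Your proposal is correct and follows essentially the same route as the paper's own proof: both arguments rest on the strict monotonicity of the slope function $s(x) = -b(x)/c(x)$ from Lemma~\ref{lem:envelope} together with the sign of $W(b,c)$ and the value $s(0) = 1$, concluding that every $\ell_x$ with $x \in (0,T]$ has slope strictly below $1$, so lines of slope greater than one are oblique and $\ell_T$ is never parallel to $\ell_0$. The only cosmetic difference is that the paper additionally records the limits $\lim_{x\to\infty} s^{\rm f}(x) = -\infty$ and $\lim_{x\to\infty} s^{\rm y}(x) = -1$, which your argument correctly identifies as unnecessary for this statement.
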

\begin{proof}
From \cref{lem:envelope} we know that the slope of $\ell_x$ is given by $s(x) = -\frac{b(x)}{c(x)}$. It is easily calculated that $s(0) = 1$ and that $\lim_{x \to \infty} s^{\rm f}(x) = -\infty$ and $\lim_{x \to \infty} s^{\rm y}(x) = -1$. By \cref{lem:envelope} $s(x)$ is strictly decreasing. Therefore, any line with slope strictly greater than one is oblique and the lines $\ell_x$ and $\ell_0$ can never be parallel. 

\end{proof}
We show one more Lemma, which is useful to derive the direction of the envelope in the neighborhood of its start- and endpoint.

    \begin{lemma}\label{lem:directions}
    Assume that \ref{item:nonzero}, \ref{item:limits}, \ref{item:Wbc}, and \ref{item:Wabc}
   	hold true for the family $\cF_{(\alpha, \omega)}$. Set
    \begin{align*}
    s_\alpha &= \lim_{x \downarrow \alpha} \sgn\left(W(a,b,c)(x)\,W(b,c)(x)\right), \\
     s_\omega &= \lim_{x \uparrow \omega} \sgn\left(W(a,b,c)(x)\,W(b,c)(x)\right).
    \end{align*}
    Then the half-space visited by the envelope immediately after its starting point $\eta(\alpha)$ is $\ell_\alpha^{s_\alpha}$, and the half-space visited by the envelope immediately before its endpoint $\eta(\omega)$ is $\ell_\omega^{s_\omega}$. 
    \end{lemma}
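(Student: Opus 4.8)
The plan is to decide on which side of the limiting line $\ell_\alpha$ the envelope lies immediately after the contact point by monitoring, along the curve $\eta$, the sign of the affine form that defines $\ell_\alpha$. Writing $\eta(x)=(\eta_1(x),\eta_2(x))$ for the two coordinate functions of the envelope from \eqref{eq:envelope_explicit}, I introduce the scalar test function
\begin{equation*}
g(x) = a(\alpha) + b(\alpha)\,\eta_1(x) + c(\alpha)\,\eta_2(x).
\end{equation*}
By the definition of $\ell_\alpha^\pm$ we have $\eta(x)\in\ell_\alpha^+$ exactly when $g(x)>0$ and $\eta(x)\in\ell_\alpha^-$ exactly when $g(x)<0$, so the assertion reduces to computing $\sgn g(x)$ for $x$ slightly larger than $\alpha$. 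Since the contact point $\eta(\alpha)=\lim_{x\downarrow\alpha}\eta(x)$ lies on $\ell_\alpha$, the starting value is $g(\alpha)=0$, and the whole argument becomes a question about the behaviour of $g$ just to the right of $\alpha$.

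Next I would differentiate, using the tangent vector $\eta'(x)=\frac{W(a,b,c)(x)}{W(b,c)(x)^2}\bigl(c(x),-b(x)\bigr)$ from \cref{lem:envelope}. This yields
\begin{equation*}
g'(x) = \frac{W(a,b,c)(x)}{W(b,c)(x)^2}\,P(x), \qquad P(x):=b(\alpha)c(x)-c(\alpha)b(x).
\end{equation*}
The key algebraic observations are that $P(\alpha)=0$ — so that $g'(\alpha)=0$, reflecting the tangency of $\eta$ to $\ell_\alpha$ at the contact point — and that
\begin{equation*}
P'(\alpha)=b(\alpha)c'(\alpha)-c(\alpha)b'(\alpha)=W(b,c)(\alpha),
\end{equation*}
which is nonzero by \ref{item:Wbc}. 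Consequently $P$ changes sign transversally at $\alpha$, and for $x>\alpha$ close to $\alpha$ we have $\sgn P(x)=\sgn W(b,c)(\alpha)$. Since the denominator $W(b,c)(x)^2$ is positive, it follows that $\sgn g'(x)=\sgn W(a,b,c)(x)\cdot\sgn W(b,c)(\alpha)$ on a right neighbourhood of $\alpha$; by \ref{item:Wabc} the sign of $W(a,b,c)$ is constant there, so this product equals $\lim_{x\downarrow\alpha}\sgn\bigl(W(a,b,c)(x)\,W(b,c)(x)\bigr)=s_\alpha$.

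It then remains to integrate: since $g(\alpha)=0$ and $g'$ has constant sign $s_\alpha$ on $(\alpha,\alpha+\varepsilon)$, we obtain $\sgn g(x)=s_\alpha$ there, i.e.\ the envelope enters $\ell_\alpha^{s_\alpha}$, as claimed. For the endpoint I would run the identical computation with the form $\tilde g(x)=a(\omega)+b(\omega)\eta_1(x)+c(\omega)\eta_2(x)$ and $\tilde P(x)=b(\omega)c(x)-c(\omega)b(x)$. The one difference is that $x-\omega<0$ now forces $\sgn\tilde P(x)=-\sgn W(b,c)(\omega)$, and that $g$ must be recovered by integrating backward from $\omega$; these two sign reversals cancel, so $\sgn\tilde g(x)=s_\omega$ for $x$ just below $\omega$, giving $\ell_\omega^{s_\omega}$. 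The only genuine subtlety — and the step I expect to require the most care — is the vanishing of the first derivative at the contact point: because $\eta$ is tangent to the limiting line there, the side is invisible at first order, and the correct sign has to be extracted from the transversal first-order behaviour of $P$, which is exactly what produces the Wronskian factor $W(b,c)(\alpha)$ (resp.\ $W(b,c)(\omega)$) and thereby the product appearing in the definition of $s_\alpha$ and $s_\omega$. One should also record that the contact points exist as proper points of $\RR^2$ and that $s_\alpha,s_\omega\neq0$, both of which follow from the standing assumptions \ref{item:limits}, \ref{item:Wbc} and \ref{item:Wabc}.
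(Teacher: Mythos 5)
Your proof is correct and takes essentially the same approach as the paper: both evaluate the affine form defining $\ell_\alpha$ (resp.\ $\ell_\omega$) along the envelope and determine its sign near the contact point using the tangent formula of \cref{lem:envelope}, with the sign emerging from the product $W(a,b,c)\,W(b,c)$. The only difference is cosmetic — where you differentiate the test function $g$ and exploit the transversal zero of $P(x)=b(\alpha)c(x)-c(\alpha)b(x)$ at $\alpha$, the paper extracts the same two factors by applying the mean value theorem twice.
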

    \begin{proof}Consider the starting point $\eta(\alpha)$. The situation at the endpoint $\eta(\omega)$ is completely analogous. Let $h$ be small enough, such that $W(a,b,c)(x)$ has no zero for $x \in (\alpha, \alpha + h)$. The half-space visited by the envelope immediately after its starting point $\eta(\alpha)$ is determined by the sign of 
    \[m(h) = a(\alpha) + b(\alpha)\eta_1(\alpha+h) + c(\alpha)\eta_2(\alpha+h).\]
    Using the fact that $a(\alpha) + b(\alpha)\eta_1(\alpha) + c(\alpha)\eta_2(\alpha) = 0$, the mean-value theorem and \cref{lem:envelope}, there exists $h' \in (0, h)$ such that 
     \begin{align*}
            m(h) &= b(\alpha)\left(\eta_1(\alpha + h)  - \eta_1(\alpha) \right) + c(\alpha) \left(\eta_2(\alpha + h) -  \eta_2'(\alpha)\right)\\
            &= h \left(b(\alpha) \eta_1'(\alpha+h') + c(\alpha)\eta_2'(\alpha+h')\right)\\
            &= h \frac{W(a,b,c)(\alpha+h')}{(W(b,c)(\alpha+h'))^2} \left[b(\alpha)c(\alpha+h') - c(\alpha)b(\alpha+h')\right].
        \end{align*}
This expression has the same sign as $\tilde m(h') = \left(W(b,c)(\alpha+h')\right)^2 \frac{m(h)}{h}$, which after another application of the mean value theorem, is given by 
     \begin{align*}
            \tilde m(h') &= W(a,b,c)(\alpha+h') \left[b(\alpha)c(\alpha+h') - c(\alpha)b(\alpha+h')\right] = \\
            &= h'' W(a,b,c)(\alpha+h') \left[b(\alpha)c'(\alpha+h'') - c(\alpha)b'(\alpha+h'')\right]  = \\
            &= h'' W(a,b,c)(\alpha+h') W(b,c)(\alpha + h'')
            \end{align*}
for some $h'' \in (0,h')$, showing the Lemma. 
    \end{proof}

\subsection{Parameter-space segmentation in the scale-regular case}
We summarize the situation in the scale-regular case: The lines $\ell_0$ and $\ell_\infty$ are non-degenerate and given by \eqref{eq:l0} and \eqref{eq:l_inf}. The envelope starts at $\eta(0)$ given by \eqref{eq:contact_zero}, it has a single cusp point and ends at $\eta(\infty)$ given by \eqref{eq:contact_inf}.  Following \cite{KRS23} we relabel the `quadrants' resulting from the intersection of $\ell_0^\pm$ and $\ell_\infty^\pm$ as follows:
        \begin{defn}\label{def:quadrants1}Set $s = \sgn(\beta_3) ,\beta_3 \neq 0$. In the scale-regular case $r > 1$ we define the `quadrants'
        \begin{align*}
        Q_{n} &= {\ell_0^s}\cap {\ell_\infty^s} \qquad &Q_{h} &= \ell_0^s\cap\ell_\infty^{-s}\\
        Q_{i} &= {\ell_0^{-s}}\cap {\ell_\infty^{-s}} \qquad &Q_{d} &= \ell_0^{-s}\cap\ell_\infty^{s}.
 \end{align*}
 and their bounding rays
 \[r_{nd} = \partial Q_n \cap \partial Q_d, \quad  r_{id} = \partial Q_i \cap \partial Q_d, \quad r_{nh} = \partial Q_n \cap \partial Q_h, \quad r_{ih} = \partial Q_i \cap \partial Q_h.\]
        \end{defn}
        The mnemonics indicate that $Q_n$ contains the regions corresponding to \texttt{normal} curves; $Q_h$ to \texttt{humped} curves; $Q_i$ to \texttt{inverse} curves; $Q_d$ to \texttt{dipped} curves. The regions of `higher-order' shapes are determined by the envelope, see \cref{thm:winding}. Using these definitions, \cref{lem:directions}, and distinguishing the two possible signs of $\beta_3$, we conclude that in the scale-regular regime, there are two possible cases for the parameter-space segmentation:
        \begin{enumerate}
            \item $\beta_3 > 0$. In this case the envelope starts on $r_{nd}$ and moves into $Q_n$. It has an intersection with $\ell_\infty^{\rm f}$ and a cusp in $Q_h$. It has no intersections with $\ell_0$. The envelope ends on $r_{nh}$ coming from $Q_h$. The attainable shapes are \texttt{normal}, \texttt{inverse}, \texttt{humped}, \texttt{dipped}, \texttt{hd} and \texttt{hdh}. An example of the decomposition is shown in \cref{fig:b3g0} for the forward case and in \cref{fig:b3g0_y} for the yield case.
            \item $\beta_3 < 0$. The envelope and the half-spaces $\ell_0^\pm$ and $\ell_\infty^\pm$ are the same as for $\beta_3 > 0$; however all curve shapes (and the labelling of quadrants) have to be changed according to \cref{rem:polarity}, turning humps into dips and vice versa.
        \end{enumerate}

    \subsection{Parameter space segmentation in the scale-inverted case}
 Because of \cref{lem:ell_scale_inverse}, there will be only two non-empty `quadrants' resulting from the intersections of $\ell_0^\pm$ and $\ell_T^\pm$ in the limit $T \to \infty$, given by 

		\begin{align*}
			Q_{i} &= \ell_0^-, \qquad Q_{h} = \ell_0^+, \qquad \text{if $\beta_3 > 0$, and }\\
			Q_{d} &= \ell_0^+, \qquad Q_{n} = \ell_0^-, \qquad \text{if $\beta_3 < 0$}.
		\end{align*}
	We summarize the situation for the weakly scale-inverted and strongly scale-inverted case with $\beta_3 > 0$ here. The case $\beta_3 < 0$ can be handled similarly, see \cref{rem:polarity}. All attainable shapes, as well as some characteristics are given in \cref{tab:regimes}.
	\begin{enumerate}
		\item The weakly scale-inverted case with $\beta_3 > 0$. The envelope (for $x$ from $0$ up to some finite $T>0$) does not have any cusps and therefore no intersections with $\ell_0$ or $\ell_T$ or self-intersections. It moves into $\ell_0^- \cap \ell_T^-$ from its starting point and stays in the same quadrant until reaching $\ell_T$. The maximal winding number of the augmented envelope is one, and from \cref{thm:winding} we conclude that that the forward curve and the yield curve attains up to two local extrema in the interval $(0,T)$. Using \cref{lem:ell_scale_inverse} and sending $T$ to infinity, it follows that the term structure has to be eventually decreasing, such that exactly the shapes \texttt{inverse, humped, dh} are attainable. 
		\item The strongly scale-inverted case with $\beta_3 > 0$. For large enough $T > 0$ the envelope now has a cusp point before reaching $\ell_T$. It moves into $\ell_0^+ \cap \ell_T^-$ from its starting point, changes direction at its cusp point and must (for $T > 0$ large enough) intersect $\ell_0$ before reaching its endpoint on $\ell_T$, coming from $\ell_T^+$. The maximal winding number of the augmented envelope is one, and from \cref{thm:winding} we conclude that that the forward curve and the yield curve attains up to three local extrema in the interval $(0,T)$. Using \cref{lem:ell_scale_inverse} and sending $T$ to infinity, it follows that the term structure has to be eventually decreasing, such that exactly the shapes \texttt{inverse, humped, dh, hdh} are attainable. 
	\end{enumerate}
In summary, our analysis of all possible cases completes the proof of \cref{thm:main}.

		\begin{table}
			\scriptsize
			\captionsetup{width=0.9 \textwidth}
				\begin{adjustbox}{center}
					\begin{tabular}{|cc||c|c|c|c||c|c|c||c|}
						\hline
						\multicolumn{2}{|c||}{Regime} \vspace{.3em}& \rotatebox{90}{$\eta$ starts on} & \rotatebox{90}{$\eta$ moves into} & \rotatebox{90}{$\eta$ ends on} & \rotatebox{90}{$\eta$ exits from} & \rotatebox{90}{\# cusps} & \rotatebox{90}{\# inters.\ w/ $\ell_0$} & \rotatebox{90}{\# inters.\ w/ $\ell_\infty$} & \rotatebox{90}{\parbox{2.5cm}{shapes}}\\
						\hline
						\hline
						sr & $\beta_3 > 0$ & $r_{nd}$ & $Q_n$ & $r_{nh}$ & $Q_h$ & 1 & 0 & 1 & \texttt{n}, \texttt{i}, \texttt{h}, \texttt{d}, \texttt{hd}, \texttt{hdh}\\
						& $\beta_3 < 0$ & $r_{ih}$ & $Q_i$ & $r_{id}$ & $Q_d$ & 1 & 0 & 1 & \texttt{n}, \texttt{i}, \texttt{h}, \texttt{d}, \texttt{dh}, \texttt{dhd}\\
						\hline
						wsi & $\beta_3 > 0$ & $\ell_0$ & $Q_i$ & - & $Q_i$ & 0 & 0 & - & \texttt{i}, \texttt{h}, \texttt{dh}\\
						& $\beta_3 < 0$ & $\ell_0$ & $Q_n$ & - & $Q_n$ & 0 & 0 & - & \texttt{n}, \texttt{d}, \texttt{hd}\\
						\hline
						ssi & $\beta_3 > 0$ & $\ell_0$ & $Q_h$ & - & $Q_i$ & 1 & 1 & - & \texttt{i}, \texttt{h}, \texttt{dh}, \texttt{hdh}\\
						& $\beta_3 < 0$ & $\ell_0$ & $Q_d$ & - & $Q_n$ & 1 & 1 & - & \texttt{n}, \texttt{d}, \texttt{hd}, \texttt{dhd}\\
						\hline
					\end{tabular}
				\end{adjustbox}
		 		\caption{Regimes and attainable shapes of the Svensson family for $\beta_3 \neq 0$}
		 		\label{tab:regimes}
		\end{table}

		\begin{figure}
			\centering
			
						\begin{subfigure}{0.8\textwidth}

		\end{subfigure}

			\begin{subfigure}{0.8\textwidth}
			\includegraphics[width=\textwidth]{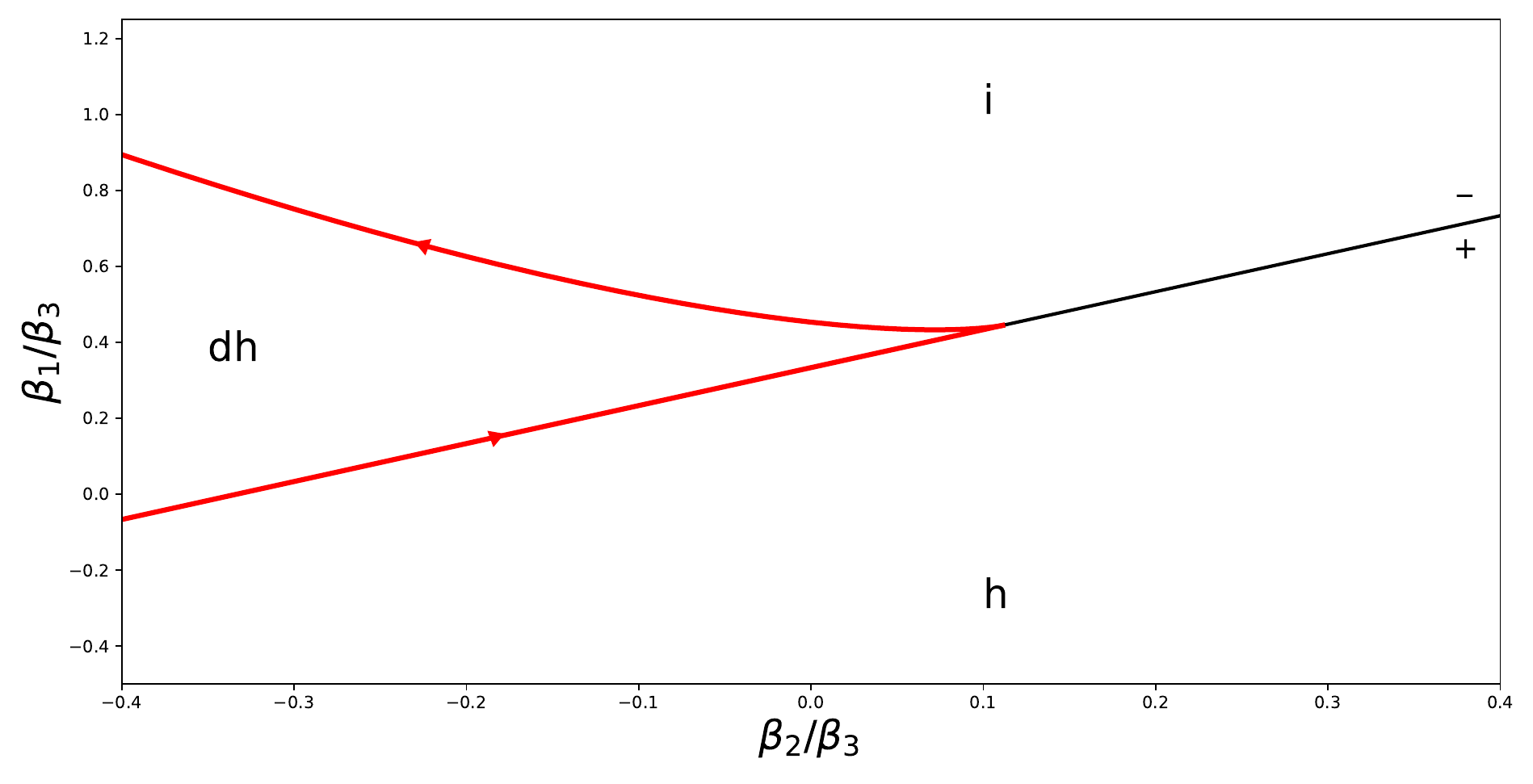}
			\caption{Forward Curve}
			\label{fig:t2e3t1_1}
			\end{subfigure}

			\begin{subfigure}{0.8\textwidth}
			\centering
			\includegraphics[width=\textwidth]{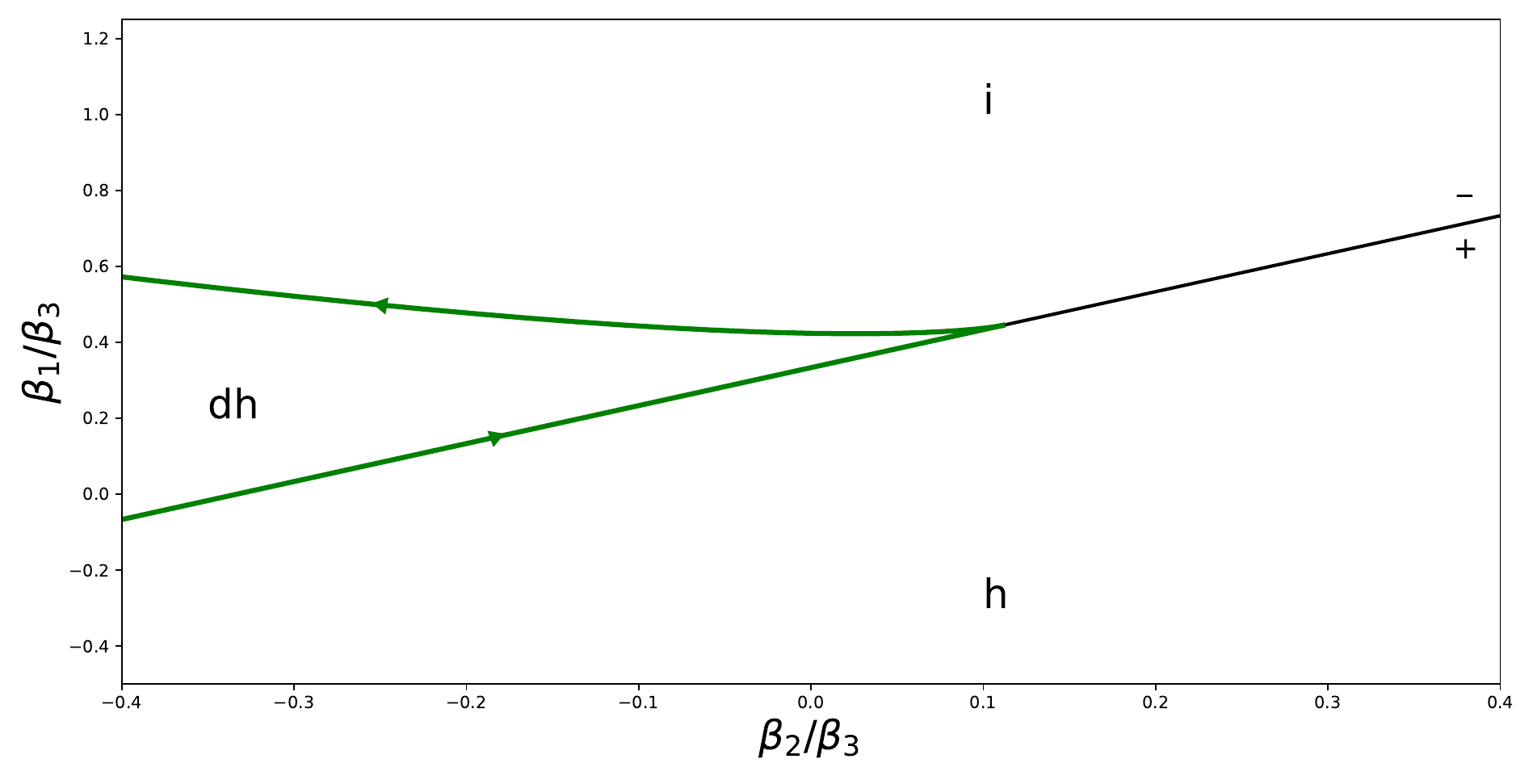}
			\caption{Yield Curve}
			\label{fig:t2e3t1_2}
			\end{subfigure}
			
			            \caption{Shapes of the forward curve (A) and the yield curve (B) for different regions of the parameter space $\Theta'$ of the Svensson family in the weakly scale-inverted regime $r \in [1/3,1)$. Red and green curves indicate the augmented envelope $\hat \eta$ (see Def.~\ref{defn:augmented}). Parameters used are $\tau_1 = 1, \tau_2 = 3$ and region labels (see \cref{tab:shape}) correspond to the case $\beta_3 > 0$. For $\beta_3 < 0$ the plot stays the same, but region labels must be changed, see \cref{rem:polarity}} \label{fig:wsi}
		
		\end{figure}
		
		\begin{figure}
			\centering

						\begin{subfigure}{0.8\textwidth}
			\includegraphics[width=\textwidth]{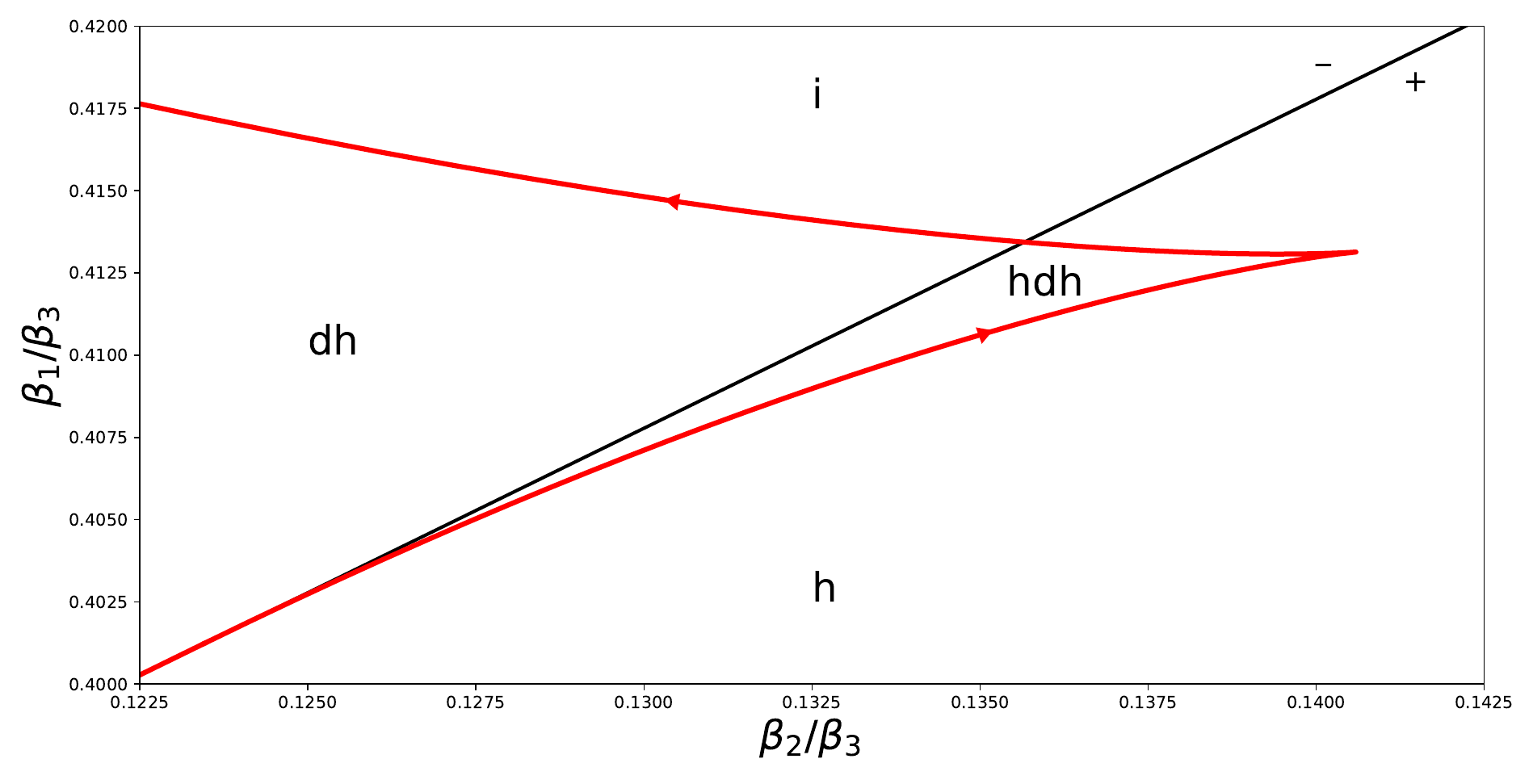}
			\caption{Forward Curve}
			\label{fig:t2e3.6t1_1}
		\end{subfigure}
		
					\begin{subfigure}{0.8\textwidth}
			\centering
			\includegraphics[width=\textwidth]{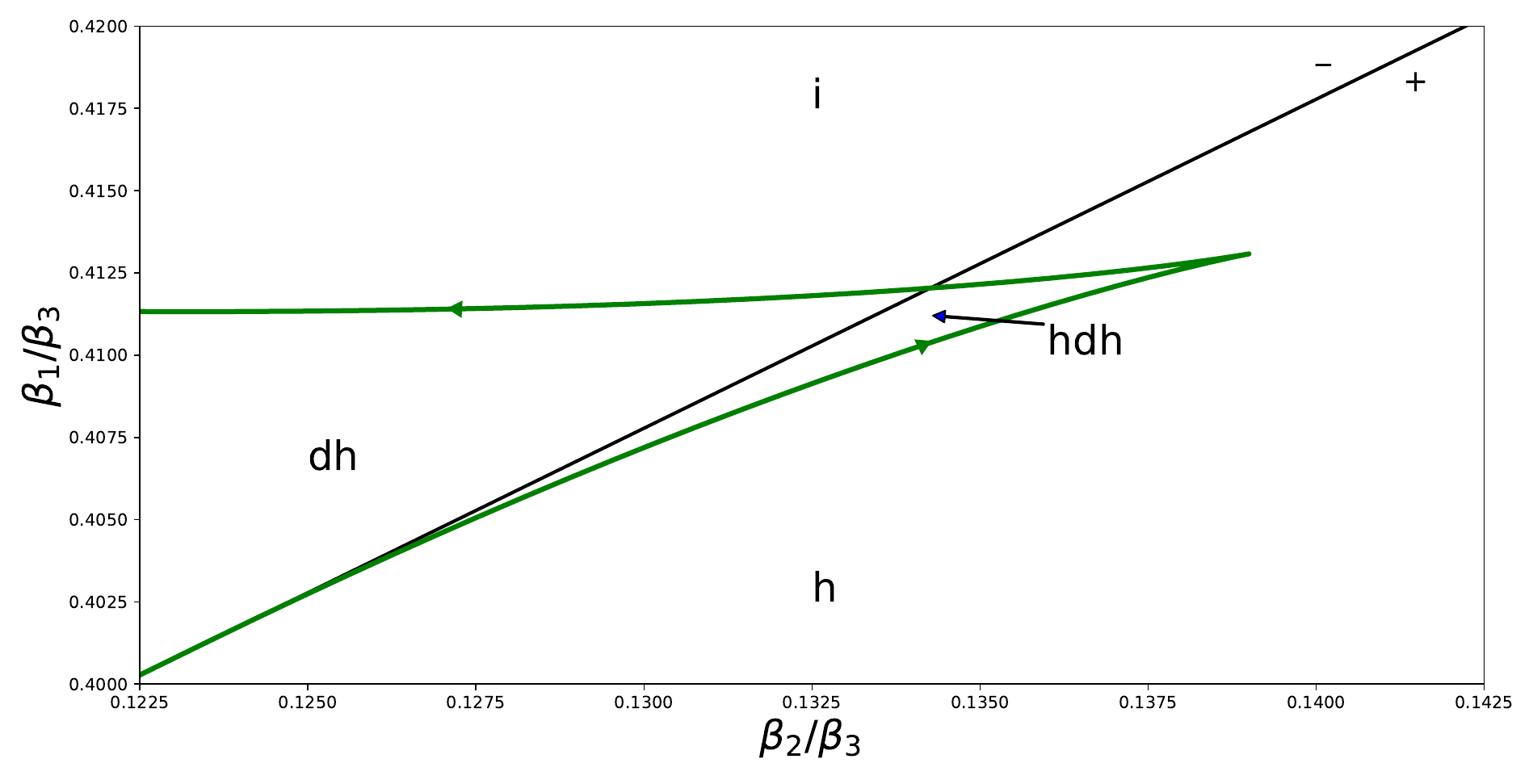}
			\caption{Yield Curve}
			\label{fig:t2e3.6t1_2}
		\end{subfigure}
		
			            \caption{Shapes of the forward curve (A) and the yield curve (B) for different regions of the parameter space $\Theta'$ of the Svensson family in the strongly scale-inverted regime $r \in (0,1/3)$. Red and green curves indicate the augmented envelope $\hat \eta$ (see Def.~\ref{defn:augmented}). Parameters used are $\tau_1 = 1, \tau_2 = 3.6$ and region labels (see \cref{tab:shape}) correspond to the case $\beta_3 > 0$. For $\beta_3 < 0$ the plot stays the same, but region labels must be changed, see \cref{rem:polarity}} \label{fig:ssi}
		
		\end{figure}
		

\subsection{The Bliss family}\label{sub:Bliss}
The Bliss family is the subfamily of the Svensson family that is obtained by setting $\beta_2 = 0$, or equivalently, $\gamma_\RN{1} = 0$. Thus, to obtain the attainable shapes of the Bliss family, we simply have to restrict the parameter-space $\Theta' = \set{(\gamma_\RN{1}, \gamma_\RN{2}) \in \RR^2}$ to the vertical line $\gamma_\RN{1} = 0$. For the forward curve in the scale-regular regime, we find that this line coincides exactly with $\ell^{\rm f}_\infty$, see \eqref{eq:l_inf}. This raises the question of how to apply \cref{eq:main_formula} to points which are located exactly at $\ell_0, \ell_\infty$ or on the envelope $\eta$, i.e. at the boundary of one of the regions $R_1, \dotsc, R_k$ of the parameter space. This question is answered in \cite[Rem.~4.1]{KRS23} as follows: The term structure shape associated to a point $\gamma$ at the boundary between two (or more) regions, is the shape of the adjacent region with the \emph{least number} of local extrema. Applying this rule to the parameter space segmentations obtained above and also checking the sign of the $\gamma_I$-coordinate of the midpoint $M_{\rm y}$ and the endpoint of the envelope $\eta_{\rm y}(\infty)$ in the scale-regular case, and the starting point of the envelope $\eta(0)$ in the scale-inverted case, we conclude the following:

\begin{cor}\label{cor:Bliss}
	The following statements hold for both the yield curve and the forward curve in the Bliss family:
	\begin{enumerate}[(a)]
		\item In the scale-regular regime the shapes \texttt{normal, inverse, humped, hd} are attainable if $\beta_3 > 0$. The shapes \texttt{normal, inverse, dipped, dh} are attainable if $\beta_3 < 0$.
		\item For $r\in [1/2, 1)$ the shapes \texttt{inverse, humped} are attainable if $\beta_3 > 0$. The shapes \texttt{normal, dipped} are attainable if $\beta_3 < 0$.
		\item For $r\in (0, 1/2)$ the shapes \texttt{inverse, humped, dh} are attainable if $\beta_3 > 0$. The shapes \texttt{normal, dipped, hd} are attainable if $\beta_3 < 0$.
	\end{enumerate}
\end{cor}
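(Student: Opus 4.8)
The plan is to exploit the fact, noted just above the statement, that the Bliss family is precisely the restriction of the Svensson parameter space $\Theta'$ to the vertical line $L = \set{\gamma_\RN{1} = 0}$. Since the full shape-segmentation of $\Theta'$ has already been determined in each of the three regimes (summarised in \cref{tab:regimes} and the accompanying figures), the attainable Bliss shapes are exactly the shapes of those regions whose interior is met by $L$, together with the boundary cases handled by the rule of \cite[Rem.~4.1]{KRS23}: a point lying on $\ell_0$, on $\ell_\infty$, or on the envelope $\eta$ inherits the shape of the adjacent region with the \emph{least} number of local extrema. Note that \cref{thm:basic} already caps the Bliss family at two local extrema, so the three-extremum shapes \texttt{hdh}, \texttt{dhd} are excluded a priori and the content of the corollary is merely which of the remaining shapes actually occur. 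Throughout, the case $\beta_3 < 0$ follows from $\beta_3 > 0$ by the polarity reversal of \cref{rem:polarity}, so it suffices to treat $\beta_3 > 0$.

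In the scale-regular regime the forward-curve line $L$ coincides exactly with $\ell_\infty^{\rm f}$, see \eqref{eq:l_inf}, so every point of $L$ is a boundary point and the least-extrema rule must be invoked everywhere along it. I would locate the two distinguished points of $L$: its intersection $M_{\rm f} = (0, \tau_1/\tau_2)$ with $\ell_0$, see \eqref{eq:M}, and the envelope contact $\eta_{\rm f}(\infty) = (0,0)$, see \eqref{eq:contact_inf}. Since $\tau_1/\tau_2 = r > 1 > 0$, these split $L$ into three segments along the $\gamma_\RN{2}$-axis; reading off the neighbouring quadrants $Q_n$, $Q_i$, $Q_h$ and the envelope region and applying the least-extrema rule yields exactly \texttt{normal}, \texttt{inverse}, \texttt{humped}, \texttt{hd}. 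For the yield curve $L$ no longer coincides with $\ell_\infty^{\rm y}$, so I would instead intersect $L$ directly with the yield-curve segmentation, checking the signs of the $\gamma_\RN{1}$-coordinates of $M_{\rm y}$ and $\eta_{\rm y}(\infty)$ in \eqref{eq:M}, \eqref{eq:contact_inf} to confirm that the same four shapes appear.

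In the scale-inverted regimes there is no limiting line $\ell_\infty$ by \cref{lem:ell_scale_inverse}, and the relevant boundaries are $\ell_0$ and the envelope $\eta$. The line $L$ always crosses $\ell_0$ at $(0,r)$ with $r < 1$. The decisive quantity is the $\gamma_\RN{1}$-coordinate of the envelope's starting point $\eta(0)$ in \eqref{eq:contact_zero}, namely $\tfrac{\tau_1}{\tau_2^2}(\tau_2 - 2\tau_1)$, whose sign equals that of $1 - 2r$ and which, crucially, is the \emph{same} for the yield and forward curve because both share the contact point $\eta(0)$ on the common line $\ell_0$. For $r \in [1/2, 1)$ this coordinate is $\le 0$, so $\eta(0)$ lies on or to the left of $L$; together with the monotonicity of the envelope's slope (\cref{lem:envelope}) one checks that $L$ never enters the two-extrema region, leaving only \texttt{inverse} and \texttt{humped} attainable. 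For $r \in (0, 1/2)$ the coordinate is strictly positive, $\eta(0)$ lies to the right of $L$, and $L$ crosses into the \texttt{dh} region, so \texttt{dh} is added. This transition at $r = 1/2$ is genuinely distinct from the envelope-cusp transition at $r = 1/3$ that governs the Svensson family.

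The main obstacle is the degenerate coincidence $L = \ell_\infty^{\rm f}$ in the scale-regular forward case, which rules out a naive "interior of a single region" argument and forces the careful, segment-by-segment application of the least-extrema rule described above. In the scale-inverted case the delicate point is to verify that the single sign computation for $\eta(0)$ genuinely decides whether $L$ crosses the two-extrema region: one must confirm, using the explicit envelope \eqref{eq:envelope_explicit} and the monotonicity of its slope, that away from its start point the envelope stays on one side of $L$, so that no additional or compensating crossing can occur and the value $r = 1/2$ is indeed the exact threshold.
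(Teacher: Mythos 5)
Your proposal follows essentially the same route as the paper: restricting $\Theta'$ to the line $\gamma_\RN{1}=0$, handling the coincidence with $\ell_\infty^{\rm f}$ in the scale-regular forward case via the least-extrema boundary rule of \cite[Rem.~4.1]{KRS23}, and deciding the scale-inverted cases by the sign of the $\gamma_\RN{1}$-coordinate of $\eta(0)$ (and of $M_{\rm y}$, $\eta_{\rm y}(\infty)$ for the yield curve), exactly as the paper does. Your added observations --- the a priori two-extrema cap from \cref{thm:basic} and the need to check that the envelope does not recross the line $\gamma_\RN{1}=0$ --- are sound refinements of the same argument rather than a different approach.
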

\begin{rem}
At this point it may be interesting to revisit the empirical observations on yield curves and forward curves shown in \cref{fig:regimes} and \cref{fig:shapes}. Firstly the shapes with three local extrema, i.e. \texttt{hdh} and \texttt{dhd}, occur (very) rarely; only in recent years have \texttt{hdh} shaped curves been observed. This indicates that the Bliss family may also have provided an adequate fit of bond yields over most of the time of observation. Secondly \cref{fig:regimes} shows that there can be long periods of time where $\tau_1$ and $\tau_2$ are nearly identical; in the figure the ratio is close to 1. This induces a multicolinearity problem which has been discussed in \cite{Pooter2007Examining}; its effects on $\beta_2$ and $\beta_3$ can also be observed in the ECB data. Working with the Bliss family could solve this problem and simultaneously provide a better interpretability of the parameters when $\tau_1$ and $\tau_2$ are close, at the cost of losing some goodness-of-fit.
\end{rem}
			
\section{Analysis of the consistent dynamic Svensson family}\label{sec:consistency}
As outlined in \cref{sub:consistency_preview}, we now consider a dynamic version of the Svensson-model, where the extended parameter vector 
$\tilde \beta = (\beta_0, \beta_1, \beta_2, \beta_3, \tau_1, \tau_2)$ is replaced by an Ito process $(\tilde \beta(t))_{t \ge 0}$ of the form \eqref{eq:ito}. From \cite{Fil00} (see also \cite[Prop. 9.5]{Fil09}) the only consistent dynamics of this process are of the following form
    \begin{align}
        \beta_0(t) &= \beta_0 \notag\\
        \dif \beta_1(t) &= \left(\frac{\beta_2}{\tau_1}e^{-t/\tau_1} + \frac{2\beta_3}{\tau_1}e^{-2t/\tau_1} - \frac{1}{\tau_1}\beta_1(t)\right)\dif t + \frac{\sqrt{2\beta_3}}{\tau_1}e^{-t/\tau_1}\dif W^*(t) \label{eq:SDE}\\ 
        \beta_2(t) &= \beta_2\exp\left(-\frac{1}{\tau_1}t\right) \notag\\
        \beta_3(t) &= \beta_3\exp\left(-\frac{2}{\tau_1}t\right) \notag\\
        \tau_1(t) &= \tau_1 \notag\\
        \tau_2(t) &= \frac{\tau_1}{2}, \notag
    \end{align}
    where $W^*$ is a one-dimensional $\QQ$-Brownian motion obtained as a transformation of the original multivariate $\QQ$-Brownian motion $W$; see \cite[Ch.~9]{Fil09} for details. We first observe that $\beta_3$ must be greater than 0 for the dynamics of $\beta_1(t)$ to be well-defined. This is already a significant restriction, since it implies per \cref{thm:main} that the shapes \texttt{dh} and \texttt{dhd} are not attainable. As a next step we transform to $\gamma$-coordinates, see \eqref{eq:gamma_coord}, setting
           \[\left(\gamma_\RN{1}(t), \gamma_\RN{2}(t)\right) = \left(\frac{\beta_2(t)}{\beta_3(t)}, \frac{\beta_1(t)}{\beta_3(t)}\right),\]
           which, by Ito's formula, implies
    \begin{align*}           
    \gamma_\RN{1}(t) &= \tfrac{\beta_2}{\beta_3} e^{t/\tau_1}\\
    \dif \gamma_\RN{2}(t) &= \frac{1}{\tau_1}\left(2 + \gamma_\RN{1}(t) + \gamma_\RN{2}(t)\right)\dif t + \frac{1}{\tau_1} \sqrt{\frac{2}{\beta_3(t)}} \dif W_t^* , \quad \gamma_\RN{2}(0) = \frac{\beta_1}{\beta_3}.
    \end{align*}
    In the parameter-space picture (see~\cref{fig:b3g0}), the coordinate $\gamma_\RN{1}(t)$ describes a monotone movement along the horizontal axis, which is increasing if $\beta_2 > 0$ and decreasing if $\beta_2 < 0$. The coordinate $\gamma_\RN{2}(t)$ follows a non-degenerate Gaussian process, which is supported on the whole real line for any $t > 0$. We conclude the following: 
    \begin{lemma}
    Under the consistent dynamic evolution of the Svensson family, a shape $\textsf{S}$ of the forward curve is attained with strictly positive probability at time $t > 0$, if and only if the vertical line defined by $\gamma_\RN{1}(t) = \frac{\beta_2}{\beta_3} e^{t/\tau_1}$ in the $(\gamma_\RN{1},\gamma_\RN{2})$-plane intersects the region $R_i$ associated to shape $\textsf{S}$.
    \end{lemma}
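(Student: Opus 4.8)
The plan is to exploit that, at any fixed time $t>0$, the shape of the forward curve is a deterministic function of the location of the pair $(\gamma_\RN{1}(t), \gamma_\RN{2}(t))$ within the segmented parameter space $\Theta'$. By the segmentation obtained from \cref{thm:winding}, the term-structure shape is constant on each of the \emph{open} regions $R_1, \dots, R_k$, while their common boundaries (segments of $\ell_0$, $\ell_\infty$, and the envelope $\eta$) form a Lebesgue-null curve in $\RR^2$. Since $\gamma_\RN{1}(t) = \frac{\beta_2}{\beta_3} e^{t/\tau_1}$ is deterministic whereas $\gamma_\RN{2}(t)$ is random, the random point $(\gamma_\RN{1}(t), \gamma_\RN{2}(t))$ is confined to the vertical line $L_t = \set{(\gamma_\RN{1}(t), y): y \in \RR}$, and the whole question reduces to a one-dimensional statement about the law of $\gamma_\RN{2}(t)$ along $L_t$.

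Next I would introduce the vertical slice $S_t = \set{y \in \RR : (\gamma_\RN{1}(t), y) \in R_i}$ of the region $R_i$ associated with shape $\textsf{S}$. Because the shape is constant on $R_i$, the event that the forward curve has shape $\textsf{S}$ at time $t$ coincides, up to the event $\set{\gamma_\RN{2}(t) \in \partial R_i}$, with $\set{\gamma_\RN{2}(t) \in S_t}$. Thus shape $\textsf{S}$ occurs with strictly positive probability at time $t$ if and only if $\P{\gamma_\RN{2}(t) \in S_t} > 0$, once the boundary event is shown to be null.

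The key analytic input is the distribution of $\gamma_\RN{2}(t)$. Since $\gamma_\RN{2}$ solves the linear SDE displayed above, with affine drift and a deterministic, strictly positive diffusion coefficient $\frac{1}{\tau_1}\sqrt{2/\beta_3(t)}$, its solution is a Gaussian process whose accumulated variance $\int_0^t \frac{2}{\tau_1^2 \beta_3(s)}\,ds$ is strictly positive for every $t > 0$. Hence $\gamma_\RN{2}(t)$ is normally distributed with nondegenerate variance and therefore possesses a density that is strictly positive on all of $\RR$. Consequently $\P{\gamma_\RN{2}(t) \in A} > 0$ holds for a set $A \subseteq \RR$ if and only if $A$ has positive Lebesgue measure; applied to $A = S_t$ this yields the probabilistic half of the equivalence.

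It then remains to translate this into the geometric intersection statement. As $R_i$ is open, its slice $S_t$ is an open subset of $\RR$, and an open set has positive Lebesgue measure precisely when it is nonempty; moreover $S_t \neq \emptyset$ is exactly the condition that the vertical line $L_t$ meets $R_i$. Chaining the equivalences gives the claim. I expect the main subtlety to lie not in any single computation but in the careful bookkeeping of boundaries. Two points need attention: first, one must read ``intersects the region'' as meeting the open region $R_i$, so that a line merely tangent to $\partial R_i$ (hitting only the null boundary) correctly corresponds to probability zero; second, to guarantee that the boundary event $\set{\gamma_\RN{2}(t) \in \partial R_i}$ is null one uses $\beta_2 \neq 0$, so that $L_t$ never coincides with the vertical line $\ell_\infty^{\rm f} = \set{\gamma_\RN{1} = 0}$, whence $\partial R_i$ meets $L_t$ in at most countably many points. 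The openness of the $R_i$, supplied by the envelope segmentation, is precisely what lets one pass from nonempty intersection to a positive-measure slice.
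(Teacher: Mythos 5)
Your proposal is correct and takes essentially the same route as the paper, which obtains the lemma directly from the observations that $\gamma_\RN{1}(t)$ moves deterministically while $\gamma_\RN{2}(t)$ is a nondegenerate Gaussian with strictly positive density on all of $\RR$; your vertical-slice and boundary-null-set bookkeeping just makes explicit what the paper leaves implicit. One cosmetic point: the quantity you call the accumulated variance, $\int_0^t \tfrac{2}{\tau_1^2 \beta_3(s)}\,ds$, is the quadratic variation of the martingale part rather than $\mathrm{Var}(\gamma_\RN{2}(t))$ itself (the latter carries an extra exponential factor from the linear drift, cf.\ the paper's $\sigma_t^2$ in \eqref{eq:mu_sigma}), but strict positivity, which is all your argument uses, holds either way.
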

    To obtain \cref{thm:main_consistency}, it remains to determine the possible intersections. We proceed separately for the forward curve and the yield curve.

        \subsubsection{Case of the forward curve} 
        \begin{proof}[Proof of \cref{thm:main_consistency}]
         Inserting $\tau_2 = \tfrac{\tau_1}{2}$ into equation \eqref{eq:envelope_explicit} for the envelope, we get the expression
         \begin{equation}\label{eq:envelope_consistent}
         \eta^{\rm f}(x) = \left(\frac{4}{\tau_1} e^{-x/\tau_1}\left(x - \frac{3\tau_1}{2}\right), -\frac{4}{\tau_1^2} e^{-x/\tau_1}\left(x^2 - \frac{3x}{2}\tau_1 + \tau_1^2\right)\right).
         \end{equation}
         
Taking a look at the parameter space segmentation in the scale-regular regime (see \cref{fig:b3g0}), we see that the regions visited by the process $\gamma(t)$ at a given $t > 0$ depend only on the location of the cusp in the case $\beta_2 > 0$ and on the location of $\eta^{\rm f}(0)$ in the case $\beta_2 < 0$. From \eqref{eq:cusp_time_fw} we see that the cusp takes place at $x_*^{\rm f} = \frac{5 \tau_1}{2}$. From \eqref{eq:envelope_consistent} we determine the location of the cusp to be 
\[        \eta^{\rm f}(x_*^{\rm f}) = \left(4 e^{-5/2}, -14 e^{-5/2}\right).\]
Equating the first coordinate to $\gamma_\RN{1}(t)$ and solving for $t$ we find \eqref{eq:T_dagger_fw}. Turning to the case $\beta_2 < 0$, we do the same analysis for the contact point $\eta^{\rm f}(0)$ and find
\[        \eta^{\rm f}(0) = \left(-6,-4\right).\]
Equating the first coordinate to $\gamma_\RN{1}(t)$, we obtain \eqref{eq:T_dagger_fw}, completing the proof. 

\end{proof}
        \subsubsection{Case of the yield curve}
        For the yield curve, the analogue of \cref{thm:main_consistency} is the following:
        
                    \begin{thm}\label{thm:consistency_yield} In the consistent dynamic Svensson-model with initial curve given by \eqref{eq:svensson} the following holds:
                \begin{enumerate}[(1)]
                    \item If $\beta_2 > 0$ then there is a time-horizon $T^{\rm y}_\dagger \in [0,T^{\rm f}_\dagger)$ such that the yield curve attains each of the shapes \texttt{inverse}, \texttt{humped}, \texttt{hd} and \texttt{normal} with strictly positive probability at any time $t < T^{\rm y}_\dagger$, and each of the shapes \texttt{inverse}, \texttt{humped}, \texttt{normal} with strictly positive probability at any time $t \ge T^{\rm y}_\dagger$.
                    \item If $\beta_2 < 0$ then there are time-horizons $T_{\star \star}^{\rm y} \le T_{\star}^{\rm y}$, given by 
                    \[T_{\star \star}^{\rm y} = \max\left(\log\left(\frac{5 \beta_3}{4 |\beta_2|}\right),0\right), \quad T_\star^{\rm y} = \max\left(\log\left(\frac{6 \beta_3}{|\beta_2|}\right),0\right)\]
                    such that the yield curve attains each of the following shapes with strictly positive probability: 
                        \begin{enumerate}[(a)]
                            \item \texttt{inverse}, \texttt{humped}, \texttt{hd} and \texttt{normal} if $t < T_{\star \star}^{\rm y}$
                            \item \texttt{inverse}, \texttt{hd} and \texttt{normal} if $t =  T_{\star \star}^{\rm y}$
                            \item \texttt{inverse}, \texttt{dipped}, \texttt{hd} and \texttt{normal} if $T_{\star \star}^{\rm y} < t < T_\star^{\rm y}$
                            \item \texttt{inverse}, \texttt{dipped} and \texttt{normal} if $t \ge T_\star^{\rm y}$.
                        \end{enumerate}
                \end{enumerate}
            \end{thm}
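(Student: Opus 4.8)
The plan is to reduce everything to the elementary geometry of a single vertical line sweeping across the yield-curve segmentation of \cref{fig:b3g0_y}, exactly as in the forward-curve proof. By the reduction Lemma stated just before \cref{thm:main_consistency}, a shape is attained with strictly positive probability at time $t$ if and only if the vertical line $\{\gamma_\RN{1} = \gamma_\RN{1}(t)\}$ meets the associated region, and $\gamma_\RN{1}(t) = \tfrac{\beta_2}{\beta_3}e^{t/\tau_1}$ moves monotonically (rightward if $\beta_2 > 0$, leftward if $\beta_2 < 0$). So I first specialize all segmentation data to the consistency constraint $\tau_2 = \tau_1/2$: from \eqref{eq:l0} and the second line of \eqref{eq:l_inf} the limiting lines become $\gamma_\RN{2} = \gamma_\RN{1} + 2$ and $\gamma_\RN{2} = -\gamma_\RN{1} - \tfrac12$; from \eqref{eq:M}, \eqref{eq:contact_zero} and \eqref{eq:contact_inf} the distinguished points are $M_{\rm y} = (-\tfrac54,\tfrac34)$, $\eta^{\rm y}(0) = (-6,-4)$ and $\eta^{\rm y}(\infty) = (0,-\tfrac12)$, while by \cref{lem:Wabcy} the yield envelope has a single cusp at some $x_*^{\rm y} > x_*^{\rm f} = \tfrac52\tau_1$.

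Next I read off, as a function of the abscissa $c = \gamma_\RN{1}$, which quadrants the line $\{\gamma_\RN{1} = c\}$ meets. Inspecting the leading behaviour of $(a_{\rm y},b_{\rm y},c_{\rm y})$ as $x\downarrow 0$ and $x\uparrow\infty$ shows that $\ell_0^+$ and $\ell_\infty^{{\rm y},+}$ are the half-planes \emph{below} the respective lines, so with $s=\sgn(\beta_3)=+1$ the quadrants $Q_n,Q_h,Q_i,Q_d$ are, respectively, below both lines, below $\ell_0$ and above $\ell_\infty^{\rm y}$, above both lines, and above $\ell_0$ and below $\ell_\infty^{\rm y}$. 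Since the two lines cross precisely at $M_{\rm y}$ (abscissa $-\tfrac54$), a vertical line at $c>-\tfrac54$ meets $Q_n,Q_h,Q_i$ (shapes \texttt{n},\texttt{h},\texttt{i}) and one at $c<-\tfrac54$ meets $Q_n,Q_d,Q_i$ (shapes \texttt{n},\texttt{d},\texttt{i}); thus crossing $c=-\tfrac54$ is exactly the \texttt{humped}$\leftrightarrow$\texttt{dipped} swap. The \texttt{hd} region is enclosed by the envelope, so it is met exactly when $c$ lies in the $\gamma_\RN{1}$-range of $\eta^{\rm y}$; by \cref{lem:envelope} the derivative $\eta^{\rm y}_1{}'=\tfrac{W(a,b,c)}{W(b,c)^2}\,c$ vanishes only at the cusp, so $\eta^{\rm y}_1$ is unimodal, interpolating from $\eta^{\rm y}_1(0)=-6$ up to its maximum $\gamma^* := \eta^{\rm y}_1(x_*^{\rm y})$ and back down to $\eta^{\rm y}_1(\infty)=0$; hence its range, and the \texttt{hd}-projection, is $[-6,\gamma^*]$ with $\gamma^*>0$.

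For $\beta_2<0$ the abscissa $\gamma_\RN{1}(t)=-\tfrac{|\beta_2|}{\beta_3}e^{t/\tau_1}$ decreases, and the two transitions occur when it reaches $-\tfrac54$ (losing \texttt{humped}, gaining \texttt{dipped}) and then $-6$ (leaving the envelope, losing \texttt{hd}). Solving $\gamma_\RN{1}(t)=-\tfrac54$ and $\gamma_\RN{1}(t)=-6$ gives the closed forms $T^{\rm y}_{\star\star}=\max(\log\tfrac{5\beta_3}{4|\beta_2|},0)$ and $T^{\rm y}_\star=\max(\log\tfrac{6\beta_3}{|\beta_2|},0)$, with $T^{\rm y}_{\star\star}\le T^{\rm y}_\star$ since $\tfrac54\le 6$; combined with the quadrant bookkeeping this yields stages (a)--(d). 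For $\beta_2>0$ the abscissa is positive and increasing, hence stays in the region $c>-\tfrac54$ (shapes \texttt{i},\texttt{h},\texttt{n} throughout), and the only event is leaving the envelope as $\gamma_\RN{1}(t)$ passes $\gamma^*$, which defines $T^{\rm y}_\dagger$ through $\tfrac{\beta_2}{\beta_3}e^{T^{\rm y}_\dagger/\tau_1}=\gamma^*$.

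It remains to prove $T^{\rm y}_\dagger\in[0,T^{\rm f}_\dagger)$, equivalently $0<\gamma^*<4e^{-5/2}$, where $4e^{-5/2}$ is the forward cusp abscissa read off from \eqref{eq:envelope_consistent}. The lower bound $\gamma^*>0$ follows from unimodality with endpoint values $-6$ and $0$, and the upper bound is the genuinely delicate step, since $x_*^{\rm y}$ (and hence $\gamma^*$) is transcendental with no closed form. I would obtain it by comparison: \cref{thm:fw_to_yield}(2) shows that wherever the yield curve has two extrema the forward curve has at least two, but on any vertical line with $c>0$ the only forward region with $\ge 2$ extrema is \texttt{hdh}, whose abscissa is at most $4e^{-5/2}$ by \eqref{eq:envelope_consistent}, attained only at the single forward cusp point $(4e^{-5/2},-14e^{-5/2})$; this forces $\gamma^*\le 4e^{-5/2}$. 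The strictness is the main obstacle: it should follow because $x_*^{\rm y}>x_*^{\rm f}$ places the yield cusp at a different plane point than the forward cusp, so the yield \texttt{hd} region cannot reach that unique abscissa-maximizing point. The safe fallback, which also fixes the sign of $T^{\rm y}_\dagger-T^{\rm f}_\dagger$ unconditionally, is to insert $\tau_2=\tau_1/2$ into \eqref{eq:envelope_explicit}, obtain $\eta^{\rm y}_1(x)$ explicitly, and verify $\max_{x>0}\eta^{\rm y}_1(x)<4e^{-5/2}$ by direct estimation.
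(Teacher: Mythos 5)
Your proposal follows the paper's proof of \cref{thm:consistency_yield} almost step for step: the same vertical-line reduction, the same specialization $\tau_2 = \tau_1/2$ giving $\eta_{\rm y}(0) = (-6,-4)$, $M_{\rm y} = (-\tfrac{5}{4}, \tfrac{3}{4})$ and $\eta_{\rm y}(\infty) = (0,-\tfrac{1}{2})$, the same identification of the $\beta_2 < 0$ transitions at the abscissae $-\tfrac{5}{4}$ and $-6$ (yielding $T^{\rm y}_{\star\star}$ and $T^{\rm y}_\star$), and the same reduction of the $\beta_2 > 0$ case to the abscissa of the yield cusp. Your quadrant bookkeeping and the unimodality of $\eta^{\rm y}_1$ via \cref{lem:envelope} are correct and in fact spelled out in more detail than in the paper.

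The one genuine soft spot is exactly where you flag it: strictness of $T^{\rm y}_\dagger < T^{\rm f}_\dagger$. Your inference that ``$x_*^{\rm y} > x_*^{\rm f}$ places the yield cusp at a different plane point than the forward cusp'' does not follow from the parameter inequality alone: $\eta_{\rm y}$ and $\eta_{\rm f}$ are different curves, so a priori $\eta_{\rm y}(x_*^{\rm y})$ could still coincide with the point $\eta_{\rm f}(x_*^{\rm f})$. What closes the gap --- and what the paper actually uses --- is \cref{lem:fw_to_yield}(c): the cusp of $\eta_{\rm y}$ is incident to $\eta_{\rm f}$ \emph{at the same parameter value}, i.e.\ $\eta_{\rm y}(x_*^{\rm y}) = \eta_{\rm f}(x_*^{\rm y})$. (This is because $a_{\rm y}(x) = \tfrac{1}{x^2}\int_0^x \zeta\, a_{\rm f}(\zeta)\, \dif\zeta$ and likewise for $b,c$, whence $a_{\rm f} = 2a_{\rm y} + x a_{\rm y}'$ and $a_{\rm f}' = 3a_{\rm y}' + x a_{\rm y}''$; the two envelope equations plus the regression equation for the yield data at the cusp therefore imply both forward envelope equations at the same $x$.) Since $\eta^{\rm f}_1$ attains its maximum $4e^{-5/2}$ uniquely at $x_*^{\rm f}$ and is strictly decreasing afterwards, $x_*^{\rm y} > x_*^{\rm f}$ (from \cref{lem:Wabcy}) gives $\eta^{\rm f}_1(x_*^{\rm y}) < 4e^{-5/2}$ at once. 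This makes your comparison argument via \cref{thm:fw_to_yield}(2) --- which only yields the weak inequality --- and your brute-force fallback estimate of $\max_{x>0}\eta^{\rm y}_1(x)$ unnecessary, although the fallback would indeed also work.
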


\begin{proof}       Taking a look at the parameter space segmentation in the scale-regular regime (see \cref{fig:b3g0_y}), we see that the regions visited by the process $\gamma(t)$ at a given $t > 0$ depend only on the location of the cusp in the case $\beta_2 > 0$ and on the location of $\eta^{\rm y}(0)$ and $M_{\rm y}$ in the case $\beta_2 < 0$. Inserting $\tau_2 = \frac{\tau_1}{2}$ into \cref{eq:contact_zero,eq:contact_inf,eq:M}, we compute their coordinates as
\begin{equation*}
                \eta_{\rm y}(0) = (-6, -4),\qquad 
                \eta_{\rm y}(\infty) = (0,-\tfrac{1}{2}),\qquad 
                M_{\rm y} = (-\tfrac{5}{4}, \tfrac{3}{4}).
\end{equation*}
Let $\beta_2 < 0$. Setting the first coordinates of  $\eta_{\rm y}(0)$ and $M_{\rm y}$ equal to $\gamma_\RN{1}(t)$ and solving for $t$ we obtain the time-points $T^{\rm y}_\star$ and $T^{\rm y}_{\star \star}$. Now consider $\beta_2 > 0$. The cusp point of $\eta^{\rm y}$ is not known explicitly, but we know from \cref{lem:Wabcy} that it occurs at a later time $x_*^{\rm y} > x_*^{\rm f}$ than the cusp point of $\eta^{\rm f}$, and -- from \cref{lem:fw_to_yield}(c) -- that it must be incident to $\eta^{\rm f}$. Together, this allows us to conclude that the time $T_\dagger^{\rm y}$ where the vertical at $\gamma_\RN{1}(t)$ meets the cusp must satisfy $T_\dagger^{\rm y} < T_\dagger^{\rm f}$.
\end{proof}
                
    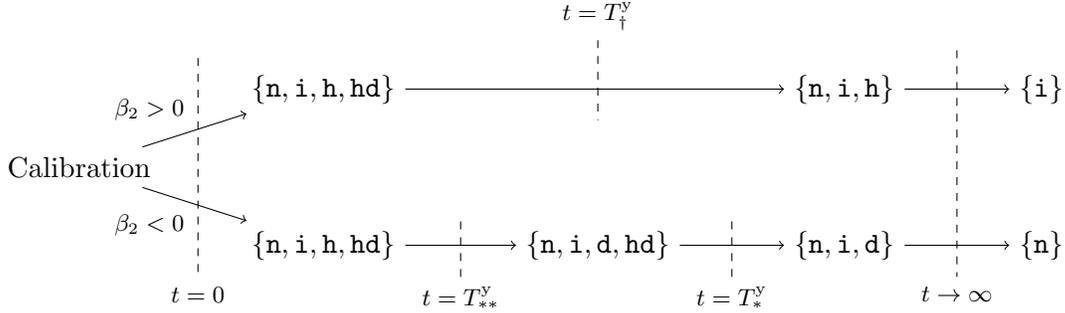
\begin{figure}

    	\begin{tikzpicture}\hspace{-3em}
  \matrix[row sep=1mm,column sep=0.1cm] {

     \node[right] (h1) {\footnotesize $\phantom{t=0}$}; & & & \node[above] (h2) {\footnotesize $t = T_\dagger^{\rm y}$}; &  & & \node (h3) {}; & \\
      & \node (n1)  {$\{\texttt{n}, \texttt{i},\texttt{h},\texttt{hd}\}$}; &  & &  &  \node (n2) {$\{\texttt{n}, \texttt{i},\texttt{h}\}$}; & & \node (n3)  {$\{\texttt{i}\}$};\\
 & & & \node (h2up) {}; & & & &  \\
    \node[left] (n0)  {Calibration}; &  & & & &  & & \\
 & & \node (h5down) {}; & & \node (h6down) {}; & & & \\    
        & \node (n4)  {$\{\texttt{n}, \texttt{i},\texttt{h},\texttt{hd}\}$};  &  & \node (n5) {$\{\texttt{n}, \texttt{i},\texttt{d},\texttt{hd}\}$};  &  & \node (n6) {$\{\texttt{n}, \texttt{i},\texttt{d}\}$}; & & \node (n7)  {$\{\texttt{n}\}$};\\
           \node[below right] (h4) {\footnotesize $t=0$}; & & \node[below] (h5) {\footnotesize $t = T_{**}^{\rm y}$}; & & \node[below] (h6) {\footnotesize $t = T_*^{\rm y}$}; & & \node[below] (h7) {\footnotesize $t \to \infty$}; & \\
  };
  \graph {
  (n0) ->["{\footnotesize $\beta_2 > 0$}"] (n1) -> (n2) -> (n3);
  (n0) ->["{\footnotesize $\beta_2 < 0$}" below left] (n4) -> (n5) -> (n6) -> (n7);
  (h1) --[dashed] (h4); 
  (h2) --[dashed] (h2up); 
  (h5) --[dashed] (h5down); 
  (h6) --[dashed] (h6down); 
  (h3) --[dashed] (h7); 
  };
\end{tikzpicture}

    	\caption{This diagram shows the shapes that are attained with strictly positive probability for the yield curve in the consistent Svensson model as time $t$ progresses, illustrating Thms.~\ref{thm:consistency_yield} and \ref{thm:ergodic}. See Figure~\ref{fig:shape-flow_chart_fw} for further comments.}
    	\label{fig:shape-flow_chart_y}
    \end{figure}

    \subsection{Exact shape probabilities}\label{sub:frequency}
    Our next goal is to compute the exact risk-neutral probabilities with which a given shape $\textsf{S}$ is observed at time $t > 0$. In light of our previous result, this problem reduces to computing the probability with which the process $\gamma_\RN{2}(t)$ hits the region corresponding to $\textsf{S}$. The process $\gamma_\RN{2}(t)$ is Gaussian, hence its distribution is completely characterized by its mean and variance, which are given as follows:
\begin{lemma}The SDE \eqref{eq:SDE} has the explicit solution
\begin{equation}\label{eq:gamma_explicit}
\gamma_\RN{2}(t) = e^{t/\tau_1} \left(\tfrac{\beta_2}{\beta_3 \tau_1} t + \frac{\beta_1}{\beta_3} + 2\right) - 2 + \frac{e^{t/\tau_1}}{\tau_1}\sqrt{\frac{2}{\beta_3}}W_t^*.
\end{equation}
The mean and variance of $\gamma_\RN{2}(t)$ are given by     
        \begin{equation}\label{eq:mu_sigma}
        	\mu_t = e^{t/\tau_1} \left(\tfrac{\beta_2}{\beta_3 \tau_1} t + \frac{\beta_1}{\beta_3} + 2\right) - 2
	\quad \text{and} \quad 
        	\sigma_t^2 = \frac{2t e^{2t/\tau_1}}{\beta_3 \tau_1^2} 
        \end{equation}
\end{lemma}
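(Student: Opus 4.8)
The plan is to solve the linear stochastic differential equation for $\gamma_\RN{2}$ directly, rather than first solving \eqref{eq:SDE} for $\beta_1$ and then dividing. Recall from the computation preceding the lemma that, after substituting the deterministic paths $\gamma_\RN{1}(t) = \tfrac{\beta_2}{\beta_3}e^{t/\tau_1}$ and $\beta_3(t) = \beta_3 e^{-2t/\tau_1}$, the coordinate $\gamma_\RN{2}$ obeys a scalar linear SDE of the form $\dif \gamma_\RN{2}(t) = \tfrac{1}{\tau_1}\big(\gamma_\RN{2}(t) + g(t)\big)\dif t + h(t)\,\dif W_t^*$ with deterministic coefficients $g(t) = 2 + \tfrac{\beta_2}{\beta_3}e^{t/\tau_1}$ and $h(t) = \tfrac{1}{\tau_1}\sqrt{2/\beta_3(t)} = \tfrac{1}{\tau_1}\sqrt{2/\beta_3}\,e^{t/\tau_1}$, the last equality being the crucial simplification coming from $\beta_3(t) = \beta_3 e^{-2t/\tau_1}$. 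This is an Ornstein--Uhlenbeck-type equation with constant growth rate $1/\tau_1$ and time-dependent affine drift and diffusion, which I would solve by the method of variation of constants.

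Concretely, I would introduce the integrating factor $e^{-t/\tau_1}$ and set $Y_t := e^{-t/\tau_1}\gamma_\RN{2}(t)$. Since this factor is deterministic and of bounded variation, It\^o's formula produces no quadratic-variation correction, and the $\tfrac{1}{\tau_1}\gamma_\RN{2}(t)$ part of the drift is exactly cancelled, leaving $\dif Y_t = \tfrac{1}{\tau_1}e^{-t/\tau_1}g(t)\,\dif t + e^{-t/\tau_1}h(t)\,\dif W_t^*$. The second key simplification is that $e^{-t/\tau_1}h(t) = \tfrac{1}{\tau_1}\sqrt{2/\beta_3}$ is constant, so the stochastic integral collapses to $\tfrac{1}{\tau_1}\sqrt{2/\beta_3}\,W_t^*$. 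The deterministic integral $\int_0^t \tfrac{1}{\tau_1}e^{-s/\tau_1}g(s)\,\dif s$ splits into an elementary exponential integral and a linear-in-$t$ term, giving $-2e^{-t/\tau_1} + 2 + \tfrac{\beta_2}{\beta_3\tau_1}t$. Adding the initial value $Y_0 = \gamma_\RN{2}(0) = \beta_1/\beta_3$ and multiplying through by $e^{t/\tau_1}$ to recover $\gamma_\RN{2}(t) = e^{t/\tau_1}Y_t$ yields \eqref{eq:gamma_explicit}; in particular the term $e^{t/\tau_1}\cdot(-2e^{-t/\tau_1}) = -2$ accounts for the additive constant.

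For the mean and variance, once \eqref{eq:gamma_explicit} is established the process $\gamma_\RN{2}(t)$ is an affine function of the single Gaussian variable $W_t^*$, hence Gaussian. Taking expectations and using $\EE[W_t^*] = 0$ immediately gives $\mu_t$, while $\sigma_t^2$ follows from $\operatorname{Var}(W_t^*) = t$ together with the squared prefactor $\big(\tfrac{e^{t/\tau_1}}{\tau_1}\sqrt{2/\beta_3}\big)^2 = \tfrac{2 e^{2t/\tau_1}}{\beta_3\tau_1^2}$, reproducing \eqref{eq:mu_sigma}.

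There is no genuine obstacle here: the statement is a routine application of the variation-of-constants formula for linear SDEs, and the only point requiring care is the bookkeeping of the two exponential cancellations --- substituting $\beta_3(t) = \beta_3 e^{-2t/\tau_1}$ into the diffusion coefficient so that the stochastic integral reduces to a constant multiple of $W_t^*$ (which is precisely what makes the variance so clean), and checking that the integrating factor removes the linear drift in $\gamma_\RN{2}$ exactly. Alternatively, one may solve the Ornstein--Uhlenbeck equation \eqref{eq:SDE} for $\beta_1(t)$ by the same integrating-factor argument and then divide by $\beta_3(t)$; I expect the $\gamma_\RN{2}$-route above to be marginally shorter, as it avoids the final division.
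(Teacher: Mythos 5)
Your proposal is correct and takes essentially the same route as the paper: the paper's proof likewise sets $X_t = e^{-t/\tau_1}\gamma_\RN{2}(t)$, applies the It\^o product rule so that the drift in $\gamma_\RN{2}$ cancels and the diffusion coefficient $\tfrac{1}{\tau_1}\sqrt{2/\beta_3(t)}\,e^{-t/\tau_1} = \tfrac{1}{\tau_1}\sqrt{2/\beta_3}$ becomes constant, integrates, and reads off the mean and variance from the resulting affine function of $W_t^*$. Your bookkeeping of the two exponential cancellations and the final formulas \eqref{eq:gamma_explicit} and \eqref{eq:mu_sigma} all check out.
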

\begin{proof}
Setting $X_t = e^{-t/\tau_1} \gamma_\RN{2}(t)$ and applying the product rule for Ito processes, we see that
\[dX_t = \frac{1}{\tau_1} \left(2 e^{-s/\tau_1} + \frac{\beta_2}{\beta_3}\right) + \frac{1}{\tau_1} \sqrt{\frac{2}{\beta_3}}dW_t^*, \quad X_0 = \frac{\beta_1}{\beta_3}.\]
This SDE can be integrated, which leads to \eqref{eq:gamma_explicit}. The mean and the variance can now be read off from this equation.
\end{proof}        

To calculate the risk-neutral shape probabilities, we again proceed separately for forward- and yield-curve.
        \subsubsection{Case of the forward curve}
        	\begin{lemma}\label{lem:bounds_f}
        		Let $T_\dagger^{\rm f}$ and $T_\star^{\rm f}$ be given by \cref{eq:T_dagger_fw,eq:T_star_fw} and set 
        		\begin{align*}
        			I_0(t) &:= 2 + \frac{\beta_2}{\beta_3} e^{t/\tau_1}\\
        			x_0^I(t) &:= \tau_1\left(\frac{3}{2} - \mathcal{W}_0\left(\frac{\beta_2}{4\beta_3}\exp\left(\frac{t}{\tau_1} + \frac{3}{2}\right)\right)\right),\quad t \le T_\dagger^{\rm f}\\
        			x_{-1}^I(t) &:= \tau_1\left(\frac{3}{2} - \mathcal{W}_{-1}\left(\frac{\beta_2}{4\beta_3}\exp\left(\frac{t}{\tau_1} + \frac{3}{2}\right)\right)\right), \quad t \le T_\star^{\rm f}
        		\end{align*}
        		where $\mathcal{W}_0$ and $\mathcal{W}_{-1}$ denote the real branches of the Lambert $\mathcal{W}$ function (cf. \cite{CG96} and \cref{lem:Lambert}).
        		\begin{enumerate}
        			\item If $\beta_2 > 0$ and $t < T_\dagger$, then the forward curve is 
        			\begin{enumerate}
        				\item \texttt{inverse} if $\gamma_\RN{2}(t) \ge I_0(t)$
        				\item \texttt{humped} if $\eta^{\rm f}(x_{-1}^I(t)) \le \gamma_\RN{2}(t) < I_0(t)$ or $\gamma_\RN{2}(t) \le \eta^{\rm f}(x_{0}^I(t))$
        				\item \texttt{hdh} if $\eta^{\rm f}(x_{0}^I(t)) < \gamma_\RN{2}(t) < \eta^{\rm f}(x_{-1}^I(t))$
        			\end{enumerate}
        			\item If $\beta_2 > 0$ and $t \ge T_\dagger$, then the forward curve is 
        			\begin{enumerate}
        				\item \texttt{inverse} if $\gamma_\RN{2}(t) \ge I_0(t)$
        				\item \texttt{humped} if $\gamma_\RN{2}(t) < I_0(t)$
        			\end{enumerate}
        			\item If $\beta_2 < 0$ and $t < T_\star$, then the forward curve is 
        			\begin{enumerate}
        				\item \texttt{dipped} if $\gamma_\RN{2}(t) \ge I_0(t)$
        				\item \texttt{hd} if $\eta^{\rm f}(x_{0}^I(t)) < \gamma_\RN{2}(t) < I_0(t)$
        				\item \texttt{normal} if $\gamma_\RN{2}(t) \le \eta^{\rm f}(x_{0}^I(t))$
        			\end{enumerate}
        			\item If $\beta_2 < 0$ and $t \ge T_\star$, then the forward curve is 
        			\begin{enumerate}
        				\item \texttt{dipped} if $\gamma_\RN{2}(t) > I_0(t)$
        				\item \texttt{normal} if $\gamma_\RN{2}(t) \le I_0(t)$
        			\end{enumerate}
        		\end{enumerate}
		The risk-neutral probability of each shape can be calculated from the fact that $\gamma_\RN{2}(t)$ is normally distributed under $\mathbb{Q}$ with parameters given by \eqref{eq:mu_sigma}.
        	\end{lemma}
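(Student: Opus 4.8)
The plan is to translate the already-established scale-regular parameter-space segmentation (\cref{fig:b3g0}, \cref{def:quadrants1}, \cref{thm:winding}) into explicit inequalities along a vertical line. Since $\tau_2 = \tau_1/2$ we are in the scale-regular regime with $\beta_3 > 0$, so the regions of constant forward-curve shape are bounded by the line $\ell_0$, the line $\ell_\infty^{\rm f}$ (i.e. $\gamma_\RN{1} = 0$), and the envelope $\eta^{\rm f}$ of \eqref{eq:envelope_consistent}. At a fixed time $t$ the first coordinate $\gamma_\RN{1}(t)$ is deterministic while only $\gamma_\RN{2}(t)$ is random; hence the point $(\gamma_\RN{1}(t), \gamma_\RN{2}(t))$ lies on the vertical line $\gamma_\RN{1} = \gamma_\RN{1}(t)$, and the whole classification reduces to locating the intersections of this vertical line with the region boundaries and ordering them in the $\gamma_\RN{2}$-direction. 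The shape on each resulting interval is then read off from the winding-number formula \eqref{eq:main_formula}, together with the initial-slope sign that distinguishes humps from dips.

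First I would compute the two relevant crossings. The crossing with $\ell_0$ comes from substituting $\tau_2 = \tau_1/2$ into \eqref{eq:l0} and solving the resulting linear equation $2 + \gamma_\RN{1} - \gamma_\RN{2} = 0$ for $\gamma_\RN{2}$, which gives exactly the threshold $I_0(t) = 2 + \gamma_\RN{1}(t) = 2 + \tfrac{\beta_2}{\beta_3}e^{t/\tau_1}$. The crossing with the envelope is obtained by setting the first coordinate of $\eta^{\rm f}$ in \eqref{eq:envelope_consistent} equal to $\gamma_\RN{1}(t)$; after the substitution $m = x/\tau_1 - \tfrac{3}{2}$ this transcendental equation reduces to the form $m\,e^{-m} = \kappa$ with $\kappa = \tfrac{\beta_2}{4\beta_3}\exp(t/\tau_1 + \tfrac{3}{2})$, which is inverted by the two real branches $\mathcal{W}_0, \mathcal{W}_{-1}$ of the Lambert $\mathcal{W}$ function (\cref{lem:Lambert}), yielding the roots $x_0^I(t)$ and $x_{-1}^I(t)$. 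Because the first coordinate of $\eta^{\rm f}$ increases on $(0, 5\tau_1/2)$ to its maximal value $4e^{-5/2}$ at the cusp and decreases thereafter, for $\beta_2 > 0$ two roots exist exactly while $\gamma_\RN{1}(t) < 4e^{-5/2}$, i.e. $t < T_\dagger^{\rm f}$ (cf.\ \eqref{eq:T_dagger_fw}); for $\beta_2 < 0$ the vertical line lies left of $\ell_\infty^{\rm f}$ and meets only the rising arm of the envelope, giving a single root that persists while $\gamma_\RN{1}(t) > -6$, the first coordinate of $\eta^{\rm f}(0) = (-6,-4)$, i.e. $t < T_\star^{\rm f}$ (cf.\ \eqref{eq:T_star_fw}). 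The corresponding $\gamma_\RN{2}$-thresholds are the second coordinates $\eta^{\rm f}(x_0^I(t))$ and $\eta^{\rm f}(x_{-1}^I(t))$.

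With the crossings in hand, the case distinctions follow by sweeping $\gamma_\RN{2}$ from $+\infty$ downward. For $\beta_2 > 0$ and $t < T_\dagger^{\rm f}$ one passes through \texttt{inverse} (above $\ell_0$), then \texttt{humped}, then the \texttt{hdh} region enclosed by the envelope loop between the two crossings, then \texttt{humped} again below; matching the loop's lower and upper contact points to $x_0^I$ and $x_{-1}^I$ yields parts (1a)--(1c), while for $t \ge T_\dagger^{\rm f}$ the line misses the loop and only \texttt{inverse} and \texttt{humped} survive, giving (2). For $\beta_2 < 0$ the same sweep, now to the left of $\ell_\infty^{\rm f}$, produces \texttt{dipped}, \texttt{hd} and \texttt{normal}, with the single envelope crossing $x_0^I$ separating \texttt{hd} from \texttt{normal} until it disappears at $T_\star^{\rm f}$; this gives (3) and (4). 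Finally the probability claim is immediate: by the preceding lemma $\gamma_\RN{2}(t)$ is normally distributed under $\QQ$ with mean $\mu_t$ and variance $\sigma_t^2$ given by \eqref{eq:mu_sigma}, so the probability of each shape is the Gaussian measure of the corresponding $\gamma_\RN{2}$-interval, a difference of two normal distribution-function values.

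I expect the main obstacle to be the Lambert-$\mathcal{W}$ bookkeeping together with the geometric ordering: one must correctly assign the principal branch $\mathcal{W}_0$ and the branch $\mathcal{W}_{-1}$ (and their domains of definition, governed by $T_\dagger^{\rm f}$ and $T_\star^{\rm f}$) to the lower and upper envelope crossings, verify the ordering $\eta^{\rm f}(x_0^I(t)) < \eta^{\rm f}(x_{-1}^I(t)) < I_0(t)$ claimed in the statement, and confirm that the open interval strictly between the two envelope contact points is exactly the \texttt{hdh} region (winding number one and membership in the set $D$ of \cref{thm:winding}) rather than an adjacent region. The remaining steps are direct consequences of results already established.
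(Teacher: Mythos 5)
Your proposal is correct and follows essentially the same route as the paper's own proof: intersect the vertical line $\gamma_\RN{1} = \tfrac{\beta_2}{\beta_3}e^{t/\tau_1}$ with $\ell_0$ to get $I_0(t)$, and with the envelope \eqref{eq:envelope_consistent} to get the equation $\left(x - \tfrac{3}{2}\tau_1\right)e^{-x/\tau_1} = \tfrac{\tau_1\beta_2}{4\beta_3}e^{t/\tau_1}$, which is then inverted via \cref{lem:Lambert}, with the shapes read off from the scale-regular segmentation already established. Your additional bookkeeping (monotonicity of the envelope's first coordinate up to the cusp at $x_*^{\rm f} = \tfrac{5\tau_1}{2}$ with maximum $4e^{-5/2}$, branch assignment, existence of one versus two roots governed by $T_\dagger^{\rm f}$ and $T_\star^{\rm f}$, and the Gaussian measure of the resulting $\gamma_\RN{2}$-intervals) is exactly what the paper leaves implicit, so no gap remains.
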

        	\begin{proof}
        		We obtain $I_0(t)$ by calculating the $\gamma_\RN{2}$-coordinate of the intersection of the line $\ell_0$ with the vertical line $v_t$ defined by $\gamma_\RN{1}(t) = \tfrac{\beta_2}{\beta_3}e^{t/\tau_1}$.  For the other bounds we have to find the intersection points of the envelope $\eta_{\rm f}$ with $v_t$, i.e. to solve
        		\begin{align*}
        			\gamma_\RN{1}(t) = \eta^{\rm f}(x).
        		\end{align*}
Writing the two sides in explicit form, this yields
        		\begin{align*}
        			\left(x - \frac{3}{2}\tau_1\right)\exp\left(-\frac{1}{\tau_1}x\right) = \frac{\tau_1\beta_2}{4\beta_3}\exp\left(\frac{1}{\tau_1}t\right).
        		\end{align*}
        		By applying \cref{lem:Lambert} we obtain the desired bounds.
         	\end{proof}
	
We are now prepared to show \cref{thm:ergodic} on the long-term behaviour of yield curves and forward curves under the consistent dynamics of the Svensson family.
	
         	\begin{proof}[Proof of \cref{thm:ergodic} for the forward curve] Let $\beta_2 > 0$. 
        Using \cref{lem:bounds_f} and \eqref{eq:gamma_explicit}, we have the equality, for any $\omega \in \Omega$, 
        \begin{equation}\label{eq:set_equality}
        \set{S_t(\omega) \text{ is } \texttt{inverse}} = \set{\gamma_\RN{2}(t, \omega) \ge I_0(t)} = \set{\frac{W_t^*(\omega) - \alpha_t}{t} \ge - \frac{\beta_2}{\sqrt{2 \beta_3}}},
        \end{equation}
        where $\alpha_t$ can be computed from \eqref{eq:gamma_explicit} as
        $\alpha_t = \tau_1 \left(\sqrt{2\beta_3}(2e^{-t/\tau_1} - 1) - \frac{\beta_1}{\sqrt{2 \beta_3}}\right)$.
        The only relevant property of $\alpha_t$ is that it is bounded. Setting $\epsilon =  \frac{\beta_2}{\sqrt{2 \beta_3}} > 0$ and using the strong law of large numbers for Brownian motion, there exists $A \in \cA$ with $\QQ(A) = 1$, such that, for every $\omega \in A$ there is a number $K(\omega) > 0$ such that
        \[\left| \frac{W_t^* - \alpha_t}{t}\right| \le \epsilon =  \frac{\beta_2}{\sqrt{2 \beta_3}} \qquad \text{for all $t \ge K(\omega)$.}\]
        Hence, we have
\[	\mathbf{1}\set{\mathsf{S}_t \text{ is } \texttt{inverse}}  = 1 \qquad \text{for all $t \ge K(\omega)$.}\]
Sending $t \to \infty$ this shows \eqref{eq:ergodic1}. In the case $\beta_2 < 0$, the argument is analogous, but with equality signs in \eqref{eq:set_equality} reversed. \end{proof}
        
        \subsubsection{The case of the yield curve}
        The analogous result to \cref{lem:bounds_f} is the following:
        
        	\begin{lemma}\label{lem:bounds_y}
        		Let
        		\begin{align*}
        			I_\infty(t) := -\frac{\beta_2}{\beta_3}e^{t/\tau_1} - \frac{1}{2}.
        		\end{align*}
        		\begin{enumerate}
        			\item If $\beta_2 > 0$ and $t \ge T_\dagger^{\rm y}$, then the yield curve is
        			\begin{enumerate}
        				\item \texttt{inverse} if $\gamma_\RN{2}(t) \ge I_0(t)$,
        				\item \texttt{humped} if $I_\infty(t) < \gamma_\RN{2}(t) < I_0(t)$,
        				\item \texttt{normal} if $\gamma_\RN{2}(t) \le I_\infty(t)$.
        			\end{enumerate}
        			\item If $\beta_2 < 0$ and $t \ge T_{\star \star}^{\rm y}$, then the yield curve is
        			\begin{enumerate}
        				\item \texttt{inverse} if $\gamma_\RN{2}(t) \ge I_\infty(t)$,
        				\item \texttt{dipped} if $I_0(t) < \gamma_\RN{2}(t) < I_\infty(t)$,
        				\item \texttt{normal} if $\gamma_\RN{2}(t) \le I_0(t)$.
        			\end{enumerate}
        		\end{enumerate}
        	\end{lemma}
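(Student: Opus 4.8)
The plan is to follow the proof of \cref{lem:bounds_f} in structure, replacing the forward boundary curves by their yield counterparts. At a fixed time $t$ the coordinate $\gamma_\RN{1}(t)=\tfrac{\beta_2}{\beta_3}e^{t/\tau_1}$ is deterministic, so the shape of the yield curve is a function of the single Gaussian variable $\gamma_\RN{2}(t)$ alone, namely of its position among the heights at which the vertical line $v_t=\{\gamma_\RN{1}=\gamma_\RN{1}(t)\}$ meets the boundary curves $\ell_0$, $\ell_\infty^{\rm y}$ and the envelope $\eta^{\rm y}$ of the scale-regular segmentation (recall that consistency forces $\tau_2=\tau_1/2$, hence $r=2>1$ and $\beta_3>0$; see \cref{fig:b3g0_y} and the row ``sr, $\beta_3>0$'' of \cref{tab:regimes}).

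First I would record the two linear thresholds. Substituting $\tau_2=\tau_1/2$ into \eqref{eq:l0} and \eqref{eq:l_inf} gives $\ell_0:\ \gamma_\RN{2}=\gamma_\RN{1}+2$ and $\ell_\infty^{\rm y}:\ \gamma_\RN{2}=-\gamma_\RN{1}-\tfrac12$, so that $v_t$ meets them at heights $I_0(t)=2+\tfrac{\beta_2}{\beta_3}e^{t/\tau_1}$ and $I_\infty(t)=-\tfrac12-\tfrac{\beta_2}{\beta_3}e^{t/\tau_1}$, the quantities named in the statement. Their difference $I_0(t)-I_\infty(t)=\tfrac52+2\tfrac{\beta_2}{\beta_3}e^{t/\tau_1}$ vanishes exactly when $v_t$ runs through the intersection point $M_{\rm y}$ of \eqref{eq:M}: for $\beta_2>0$ this never occurs and $I_0>I_\infty$ for all $t$, whereas for $\beta_2<0$ it occurs at $t=T_{\star\star}^{\rm y}$, so that $I_0\le I_\infty$ precisely for $t\ge T_{\star\star}^{\rm y}$. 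This pins down the vertical ordering of the two thresholds on $v_t$ in each case.

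Next I would identify the shapes from the quadrant labelling of \cref{def:quadrants1} together with \cref{thm:winding}. Using the signs of the half-spaces $\ell_0^\pm$ and $\ell_\infty^{{\rm y},\pm}$, the band above both lines is $Q_i$, the band below both is $Q_n$, and the intermediate band is $Q_h$ when $\gamma_\RN{1}(t)$ lies to the right of $M_{\rm y}$ (the case $\beta_2>0$) and $Q_d$ when it lies to the left ($\beta_2<0,\ t>T_{\star\star}^{\rm y}$). Combined with the ordering from the previous step and the vanishing winding number $\wind_{\hat\eta}=0$ off the envelope's loop, \eqref{eq:main_formula} gives $E=0$ in $Q_i,Q_n$ and $E=1$ in $Q_h,Q_d$; reading the types off against the initial slope (encoded in the quadrant labels) yields the stated triples \texttt{inverse}/\texttt{humped}/\texttt{normal} for $\beta_2>0$ and \texttt{inverse}/\texttt{dipped}/\texttt{normal} for $\beta_2<0$.

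The step I expect to be the main obstacle is to control the envelope, i.e. to guarantee that $\wind_{\hat\eta}$ is indeed constant (zero) on each of the three bands, so that no higher-order shape sneaks in. For $\beta_2>0$ this is exactly what $T_\dagger^{\rm y}$ encodes: since the tangent of $\eta^{\rm y}$ vanishes only at its cusp (\cref{lem:singular0}, via \cref{lem:Wabcy}), the $\gamma_\RN{1}$-coordinate of $\eta^{\rm y}$ attains its maximum there, and by construction $v_{T_\dagger^{\rm y}}$ passes through that cusp; hence for $t\ge T_\dagger^{\rm y}$ the line $v_t$ lies to the right of the whole envelope and cannot meet it transversally. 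For $\beta_2<0$ the analogous control comes from the fact that the $\gamma_\RN{1}$-coordinate of $\eta^{\rm y}$ has a unique interior critical point (the cusp) and is therefore bounded below by its left-endpoint value $-6$ at the contact point $\eta_{\rm y}(0)$ of \eqref{eq:contact_zero}; once $v_t$ has passed this point the envelope lies strictly to its right and the clean three-band description is exact. The delicate bookkeeping the proof must carry out lies in the intermediate window, where $v_t$ meets $\eta^{\rm y}$ once and thereby refines the lowest band into a \texttt{normal} and an \texttt{hd} sub-band separated by an envelope threshold $\eta^{\rm y}(x_0(t))$, to be inverted through the Lambert-$\mathcal W$ function exactly as in \cref{lem:bounds_f}. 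Finally, since $\gamma_\RN{2}(t)$ is Gaussian with the mean and variance \eqref{eq:mu_sigma}, each shape event is a half-line or band in $\gamma_\RN{2}(t)$ whose risk-neutral probability is an explicit normal integral.
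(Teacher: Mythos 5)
Your proposal follows the paper's route exactly: the paper proves this lemma in one line (``analogous to the proof of \cref{lem:bounds_f}''), i.e.\ by intersecting the vertical line $v_t$ with $\ell_0$ and $\ell_\infty^{\rm y}$ after substituting $\tau_2 = \tau_1/2$. Your computations of $I_0(t)$ and $I_\infty(t)$, the flip of their vertical ordering precisely when $v_t$ passes $M_{\rm y} = (-\tfrac54,\tfrac34)$, the quadrant labelling, and the control of the envelope via the abscissa $\eta_1^{\rm y}$ being increasing up to the cusp and decreasing after (since $\eta' \propto W(a,b,c)\cdot(c,-b)$ with $c<0$) are all correct; in particular your argument that $v_t$ clears the entire envelope for $t \ge T_\dagger^{\rm y}$, resp.\ after passing $\eta_{\rm y}(0) = (-6,-4)$, is exactly the geometric step the paper leaves implicit.

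Two caveats. First, your own (correct) analysis shows that in case 2 the clean three-band description holds only once $v_t$ has passed $\eta_{\rm y}(0)$, i.e.\ for $t \ge T_\star^{\rm y}$: on the window $T_{\star\star}^{\rm y} \le t < T_\star^{\rm y}$ the band $\gamma_\RN{2}(t) \le I_0(t)$ still splits into an \texttt{hd} sub-band (inside the winding-one loop, between the envelope and the $\ell_0$-segment) and a \texttt{normal} sub-band. This agrees with \cref{thm:consistency_yield}(2)(c), which lists \texttt{hd} as attainable exactly there, but it contradicts claim 2(c) of the lemma as literally stated; your write-up should say explicitly that parts (a) and (b) of case 2 hold for $t \ge T_{\star\star}^{\rm y}$ while part (c) requires $t \ge T_\star^{\rm y}$, rather than leaving the discrepancy with the stated hypothesis unaddressed. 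Second, your remark that the envelope threshold in the intermediate window can be ``inverted through the Lambert-$\mathcal{W}$ function exactly as in \cref{lem:bounds_f}'' does not go through: unlike $\eta^{\rm f}$, the yield envelope is built from the Wronskians underlying \eqref{eq:large_Wronskian_y}, which mix the exponentials $e^{-x/\tau_1}$, $e^{-x/\tau_2}$ and $e^{-x(1/\tau_1+1/\tau_2)}$ with rational prefactors, so the equation $\eta_1^{\rm y}(x) = \gamma_\RN{1}(t)$ admits no Lambert-$\mathcal{W}$ closed form; the paper instead obtains bounds on the pre-threshold yield-shape probabilities by combining \cref{lem:bounds_f} with \cref{thm:fw_to_yield}.
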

        	\begin{proof}
        		Analogous to the proof of \cref{lem:bounds_f}.
        	\end{proof}
	Note that bounds for the shape probabilities before the time $T_\dagger^{\rm y}$ (resp.\ $T_{\star \star}^{\rm y}$) can be obtained by combining \cref{lem:bounds_f} with \cref{thm:fw_to_yield}. To conclude, we note that also the proof of \cref{thm:ergodic} for the yield curve is completely analogous to the given proof for the forward curve, such that it can be omitted.
	
	\appendix
	\section{Auxiliary results}\label{app}
	\begin{lemma}\label{lem:tscheb}Let $\tau_1 \neq \tau_2$ be positive numbers. The system
	\begin{equation}
	\Big(f_1(x), f_2(x), f_3(x), f_4(x)\Big)=\left(\tfrac{1}{\tau_1}e^{-\frac{x}{\tau_1}}, \tfrac{x}{\tau_1^2} e^{-\frac{x}{\tau_1}}, \tfrac{1}{\tau_2}e^{-\frac{x}{\tau_2}}, \tfrac{x}{\tau_2^2}e^{-\frac{x}{\tau_2}}\right)
	\end{equation}
	and its subsystems $(f_1, f_2)$ and $(f_1, f_3, f_4)$ are Tchebycheff systems on $\RR$.
	\end{lemma}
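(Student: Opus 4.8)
The plan is to verify the determinant condition \eqref{eq:Phi} for each of the three systems, and I will do this through the nested-Wronskian (extended complete Tchebycheff) criterion. First I would observe that the positive prefactors $\tau_1^{-1},\tau_1^{-2},\tau_2^{-1},\tau_2^{-2}$ merely rescale individual functions by positive constants, hence multiply every determinant in \eqref{eq:Phi} by a positive factor; they may therefore be dropped, and it suffices to treat the exponential polynomials $e^{\lambda_1 x},\,x e^{\lambda_1 x},\,e^{\lambda_2 x},\,x e^{\lambda_2 x}$ with distinct exponents $\lambda_1=-1/\tau_1$ and $\lambda_2=-1/\tau_2$. Recall (P\'olya's criterion) that a system whose nested Wronskians $W(u_1),W(u_1,u_2),\dots$ are all strictly positive on $\RR$ is an ECT-system, and in particular a Tchebycheff system in the sense of \eqref{eq:Phi}.

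The computational engine is the closed-form (confluent Vandermonde) expression for Wronskians of exponential polynomials: grouping the functions by exponent, with multiplicity $m_1$ for $\lambda_1$ and $m_2$ for $\lambda_2$, one has $W=c\,e^{(m_1\lambda_1+m_2\lambda_2)x}(\lambda_2-\lambda_1)^{m_1 m_2}$ with a positive constant $c$ (and the obvious simplification when a single exponent is present). For the full system $(f_1,f_2,f_3,f_4)$ the nested Wronskians correspond to the multiplicity pairs $(1,0),(2,0),(2,1),(2,2)$, so the exponent-difference factors are $(\lambda_2-\lambda_1)^0,(\lambda_2-\lambda_1)^0,(\lambda_2-\lambda_1)^2,(\lambda_2-\lambda_1)^4$, all nonnegative even powers; each Wronskian is thus strictly positive for every $\tau_1\neq\tau_2$. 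The same computation gives $W(f_1)=f_1>0$ and $W(f_1,f_2)=\tau_1^{-3}e^{-2x/\tau_1}>0$, so both the full system and the subsystem $(f_1,f_2)$ are ECT-systems on $\RR$.

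The subsystem $(f_1,f_3,f_4)$ requires one extra remark, since here the middle Wronskian $W(f_1,f_3)=\tfrac{\tau_2-\tau_1}{\tau_1^2\tau_2^2}e^{(\lambda_1+\lambda_2)x}$ carries the odd factor $(\lambda_2-\lambda_1)^1$ and therefore changes sign with $\sgn(\tau_2-\tau_1)$; the ordering $(f_1,f_3,f_4)$ is ECT only in the regime $\tau_1<\tau_2$. To cover both regimes I would reorder the three functions by increasing exponent, i.e. use $(f_3,f_4,f_1)$ when $\tau_1>\tau_2$, which is an ECT-system by the same Wronskian computation (the surviving exponent-difference powers are again even) and hence has positive determinant \eqref{eq:Phi}. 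Since passing from this ECT-ordering back to $(f_1,f_3,f_4)$ is an even (cyclic) permutation of the columns, the determinant \eqref{eq:Phi} for $(f_1,f_3,f_4)$ is positive in both regimes. Alternatively, one may invoke directly the classical fact, recorded in \cite{KS66,Kar68}, that a nontrivial linear combination of $n$ functions of the form $x^{j}e^{\lambda_i x}$ with distinct $\lambda_i$ has at most $n-1$ real zeros, which is the only consequence of the Tchebycheff property actually used later.

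The main obstacle is purely the sign bookkeeping: establishing the confluent-Wronskian sign formula (or carrying out the $3\times3$ and $4\times4$ determinants by hand) and, above all, reconciling the regime-dependent sign of $W(f_1,f_3)$ with the strict-positivity convention of \eqref{eq:Phi} for the non-nested subsystem $(f_1,f_3,f_4)$. Everything else reduces to routine differentiation.
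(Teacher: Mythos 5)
Your proof is correct and takes essentially the same approach as the paper: both establish the ECT property through strictly positive nested Wronskians (yours via the confluent-Vandermonde formula, the paper's by direct computation, yielding identical expressions, e.g. $W(f_1,f_2,f_3,f_4)=\tfrac{(\tau_2-\tau_1)^4}{\tau_1^7\tau_2^7}e^{-x(2/\tau_1+2/\tau_2)}$). The only difference is cosmetic: to handle the regime-dependent sign of $W(f_1,f_3)$ you reorder to $(f_3,f_4,f_1)$ and use that a $3$-cycle is an even column permutation, whereas the paper passes to $(f_1,-f_3,-f_4)$ and uses that negating two columns leaves the determinant in \eqref{eq:Phi} unchanged --- both are valid and equivalent fixes.
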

	\begin{proof}
	  By \cite[Ch.~XI, Thm. 1.1]{KS66}, the system $(f_1, f_2, f_3, f_4)$ forms an extended complete Tchebycheff system (ECT-system) on $\RR$, since
        \begin{align*}
            W(f_1)(x) &= \frac{1}{\tau_1}\exp\left(-\frac{x}{\tau_1}\right) > 0\\
            W(f_1, f_2)(x) &= \frac{1}{\tau_1^3}\exp\left(-x\frac{2}{\tau_1}\right) > 0\\
            W(f_1, f_2, f_3)(x) &= \frac{(\tau_2 - \tau_1)^2}{\tau_1^5 \tau_2^3}\exp\left(-x\left(\frac{2}{\tau_1} + \frac{1}{\tau_2}\right)\right) > 0\\
            W(f_1, f_2, f_3, f_4)(x) &= \frac{(\tau_2 - \tau_1)^4}{\tau_1^7 \tau_2^7}\exp\left(-x\left(\frac{2}{\tau_1} + \frac{2}{\tau_2}\right)\right) > 0.
        \end{align*}
        It is therefore in particular a Tchebycheff system. From the calculations we conclude that also $(f_1, f_2)$ is a Tchebycheff system. It furthermore holds that
        \begin{align*}
        	W(f_1, \pm f_3)(x) &= \pm\frac{(\tau_2 - \tau_1)}{\tau_1^2\tau_2^2}\exp\left(-x\left(\frac{1}{\tau_1} + \frac{1}{\tau_2}\right)\right)\\
        	W(f_1, \pm f_3, \pm f_4)(x) &= \frac{(\tau_2 - \tau_1)^2}{\tau_1^3 \tau_2^4}\exp\left(-x\left(\frac{2}{\tau_1} + \frac{1}{\tau_2}\right)\right) > 0,
        \end{align*}
        hence (in dependency on the sign of $\tau_2 - \tau_1$) one of the systems $(f_1, f_3, f_4)$ and $(f_1, -f_3, -f_4)$ is an ECT-system on $\RR$. Again, it follows that $(f_1, f_3, f_4)$ is a Tchebycheff system.
	\end{proof}
	The following Lemma on the relationship between the envelopes and the Wronskians associated to the forward curve and to the yield curve was shown (with exception of claims (d) and (f)) in \cite{KRS23} in the context of the two-dimensional Vasicek model. The proof, however, was only based on the family of lines defined through \eqref{eq:line} and the assumptions discussed in Section~\ref{sub:assumptions}
	    \begin{lemma}[See also Lemma~A.1 in \cite{KRS23}]\label{lem:fw_to_yield}
    Suppose that \ref{item:nonzero}--\ref{item:Wbc} hold true for $\cF^{\rm f}_{(0,T)}$ with $T \in (0,\infty]$. Then the following holds for the envelopes $\eta_{\rm f}$ and $\eta_{\rm y}$ and for the Wronskians associated to the forward curve and to the yield curve: 
        \begin{enumerate}[(a)]
    \item The envelopes $\eta_{\rm f}$ and $\eta_{\rm y}$ start in the same point, i.e., $\eta_{\rm f}(0) = \eta_{\rm y}(0)$;
    \item The endpoint of $\eta_{\rm y}(T)$ is the intersection of $\ell_T^{\rm f}$ and $\ell_T^{\rm y}$;
     \item Every cusp point of $\eta_{\rm y}$ is incident to $\eta_{\rm f}$.
    \item The Wronskian $W(b_{\rm{y}}, c_{\rm{y}})$ has no zeroes on $(0,T)$ and has the same sign as $W(b_{\rm{f}}, c_{\rm{f}})$;
    \item The number of zeros of $W(a_{\rm{y}}, b_{\rm{y}}, c_{\rm{y}})$ is less or equal than the number of zeros of $W(a_{\rm{f}}, b_{\rm{f}}, c_{\rm{f}})$ on $(0,T)$
    \item $W(a_{\rm{y}}, b_{\rm{y}}, c_{\rm{y}})(0) = \tfrac{1}{24} W(a_{\rm{f}}, b_{\rm{f}}, c_{\rm{f}})(0)$
    \end{enumerate}
    \end{lemma}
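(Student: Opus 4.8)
My whole approach rests on a single identity linking the differentiated yield- and forward-basis functions. Since the yield curve is the running average of the forward curve, see \eqref{eq:forward_to_yield}, and both depend linearly on $\beta$, each of $a_{\rm y},b_{\rm y},c_{\rm y}$ is obtained from its forward counterpart as $w_{\rm y}=(\mathcal Ag)'$ whenever $w_{\rm f}=g'$, where $\mathcal Ag(x)=\tfrac1x\int_0^x g$ and $g$ runs over the three antiderivatives in \eqref{eq:svensson_2}. Writing $x\,g(x)-\int_0^x g=\int_0^x\!\int_s^x g'(t)\,dt\,ds$ and exchanging the order of integration, I would first prove the master identity
\[
 w_{\rm y}(x)=\frac{1}{x^{2}}\int_0^{x}t\,w_{\rm f}(t)\,dt,\qquad w\in\{a,b,c\},
\]
with differential reformulations $x\,w_{\rm y}'=w_{\rm f}-2w_{\rm y}$ and, inductively, $x\,w_{\rm y}^{(k)}=w_{\rm f}^{(k-1)}-(k+1)w_{\rm y}^{(k-1)}$ for $k\ge1$. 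This uses only \eqref{eq:forward_to_yield} and drives all six claims.

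The claims (a)--(c), already contained in \cite{KRS23}, come out directly. Expanding the master identity at $x=0$ gives $w_{\rm y}(0)=\tfrac12 w_{\rm f}(0)$ and $w_{\rm y}'(0)=\tfrac13 w_{\rm f}'(0)$, so the two equations \eqref{eq:envelope} determining $\eta_{\rm y}(0)$ are proportional to those determining $\eta_{\rm f}(0)$ and share their solution, which is (a). Next, substituting $x\,w_{\rm y}'=w_{\rm f}-2w_{\rm y}$ into the second equation of \eqref{eq:envelope} for the yield curve and using that the first vanishes on the envelope shows that every point $\eta_{\rm y}(x)$ also satisfies the first \emph{forward} equation and hence lies on $\ell_x^{\rm f}$; at the endpoint this is exactly $\eta_{\rm y}(T)\in\ell_T^{\rm f}\cap\ell_T^{\rm y}$, giving (b). At a cusp the extra regression condition \eqref{eq:regression}, combined with $x\,w_{\rm y}''=w_{\rm f}'-3w_{\rm y}'$ and the vanishing of the second equation, forces the second forward equation too, so the cusp solves both forward envelope equations and coincides with $\eta_{\rm f}(x)$, which is (c).

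For the new claims I specialise the same identity. Expanding $w_{\rm y}=x^{-2}\int_0^x t\,w_{\rm f}$ in a Taylor series at $0$ gives $w_{\rm y}^{(n)}(0)=\tfrac1{n+2}w_{\rm f}^{(n)}(0)$; hence the rows of the Wronskian matrix of $(a_{\rm y},b_{\rm y},c_{\rm y})$ at $x=0$, of derivative orders $0,1,2$, equal the forward rows scaled by $\tfrac12,\tfrac13,\tfrac14$, and multilinearity of the determinant yields $W(a_{\rm y},b_{\rm y},c_{\rm y})(0)=\tfrac1{24}W(a_{\rm f},b_{\rm f},c_{\rm f})(0)$, which is (f). For (d) I substitute $x\,w_{\rm y}'=w_{\rm f}-2w_{\rm y}$ into the $2\times2$ Wronskian; the cross terms cancel, leaving $W(b_{\rm y},c_{\rm y})=\tfrac1x(b_{\rm y}c_{\rm f}-b_{\rm f}c_{\rm y})$, and the integral form turns this into
\[
 W(b_{\rm y},c_{\rm y})(x)=\frac{1}{x^{3}}\int_0^{x}t\,\bigl(b_{\rm f}(t)c_{\rm f}(x)-c_{\rm f}(t)b_{\rm f}(x)\bigr)\,dt .
\]
The kernel factors as $c_{\rm f}(t)c_{\rm f}(x)\bigl(s_{\rm f}(x)-s_{\rm f}(t)\bigr)$ with slope $s_{\rm f}=-b_{\rm f}/c_{\rm f}$. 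By \ref{item:nonzero} the product $c_{\rm f}(t)c_{\rm f}(x)$ is strictly positive, and by \ref{item:Wbc} together with \cref{lem:envelope} the slope $s_{\rm f}$ is strictly monotone with $\sgn s_{\rm f}'=\sgn W(b_{\rm f},c_{\rm f})$; thus for all $0<t<x$ the kernel has the constant sign $\sgn W(b_{\rm f},c_{\rm f})$. The integral therefore never vanishes on $(0,T)$ and inherits that sign, proving (d).

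I expect the genuine obstacle to be (e), the comparison of the numbers of zeros of the two third-order Wronskians, where the clean sign factorisation used for (d) is unavailable. The same identity only gives $x^{4}W(a_{\rm y},b_{\rm y},c_{\rm y})(x)=\int_0^x t\,G(t,x)\,dt$ with $G(t,x)=\det\bigl(R_{\rm f,0}(t);R_{\rm f,0}(x);R_{\rm f,1}(x)\bigr)$, where $R_{\rm f,0}=(a_{\rm f},b_{\rm f},c_{\rm f})$ and $R_{\rm f,1}=(a_{\rm f}',b_{\rm f}',c_{\rm f}')$; and although $G(\cdot,x)$ has a double zero at $t=x$ whose leading Taylor coefficient is $\tfrac12 W(a_{\rm f},b_{\rm f},c_{\rm f})(x)$, converting this into a zero-count bound is delicate. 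Since (e) is already established in \cite{KRS23} inside exactly this line-family framework, I would simply cite it there rather than reproduce the variation-diminishing argument.
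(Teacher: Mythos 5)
Your proposal is correct, and its computational core coincides with the paper's: the paper proves only parts (d) and (f) itself (citing (a)--(c) and (e) from Lemma~A.1 in \cite{KRS23}), and for those two parts your argument is essentially theirs. Your master identity $w_{\rm y}(x)=x^{-2}\int_0^x t\,w_{\rm f}(t)\,dt$ is exactly what underlies the paper's formula \eqref{eq:Wronskian_y}, and your Taylor scaling $w_{\rm y}^{(n)}(0)=w_{\rm f}^{(n)}(0)/(n+2)$ reproduces verbatim the relations $g_{\rm y}(0)=g_{\rm f}(0)/2$, $g_{\rm y}'(0)=g_{\rm f}'(0)/3$, $g_{\rm y}''(0)=g_{\rm f}''(0)/4$ from which the paper derives the factor $\tfrac{1}{24}$ in (f) by row-multilinearity, just as you do. The one genuine divergence is in (d): where the paper concludes the constant sign of the kernel $b_{\rm f}(\zeta)c_{\rm f}(x)-c_{\rm f}(\zeta)b_{\rm f}(x)$ by invoking \cite[Ch.2 \S2, Thm.~2.3]{Kar68}, you factor it as $c_{\rm f}(\zeta)c_{\rm f}(x)\bigl(s_{\rm f}(x)-s_{\rm f}(\zeta)\bigr)$ and use \ref{item:nonzero} together with the strict monotonicity of the slope $s=-b/c$ from \cref{lem:envelope}; this is an elementary, self-contained substitute for the Karlin citation and is valid under exactly the hypotheses \ref{item:nonzero}--\ref{item:Wbc} of the lemma. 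Your direct proofs of (a)--(c) from the recursion $x\,w_{\rm y}^{(k)}=w_{\rm f}^{(k-1)}-(k+1)w_{\rm y}^{(k-1)}$ go beyond the paper (which cites these claims) and check out; in fact your argument for (b) establishes the stronger statement that $\eta_{\rm y}(x)\in\ell_x^{\rm f}\cap\ell_x^{\rm y}$ for \emph{every} $x$, of which the endpoint claim and the cusp incidence in (c) are immediate consequences, modulo a routine limiting argument at $x=0$ and $x=T$ where the envelope points are defined as limits. Finally, your decision to cite \cite{KRS23} for the zero-counting claim (e) is precisely what the paper does, so no gap arises there.
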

     \begin{proof}
     All parts, except (d) and (f) have been shown in Lemma~A.1 in \cite{KRS23}. Regarding part (d), we obtain from the calculations in the proof of \cite[Lem. A.1]{KRS23} and \cref{eq:forward_to_yield} the relationship
    	\begin{align}\label{eq:Wronskian_y}
    		W(b_{\rm y}, c_{\rm y})(x) = \frac{1}{x}\begin{vmatrix} b_{\rm y}(x) & c_{\rm y}(x) \\ b_{\rm f}(x) & c_{\rm f}(x)\end{vmatrix} = \frac{1}{x^3}\int_0^x\zeta\begin{vmatrix} b_{\rm f}(\zeta) & c_{\rm f}(\zeta) \\ b_{\rm f}(x) & c_{\rm f}(x)\end{vmatrix}\dif \zeta.
    	\end{align}
    	From \cite[Ch.2 §2, Thm. 2.3]{Kar68} and the assumptions that $c_{\rm f}$ and $W(b_{\rm f}, c_{\rm f})$ have no zero in $(0,T)$ we deduce that the determinant under the integral also has no zero and the same sign as $W(b_{\rm f}, c_{\rm f})$. It follows that the same must hold for $W(b_{\rm y}, c_{\rm y})$. Regarding (f), it follows from the calculations in the proof of \cite[Lem. A.1]{KRS23} that 
	\[g_{\rm y}(0) = \frac{g_{\rm f}(0)}{2}, \qquad g'_{\rm y}(0) = \frac{g'_{\rm f}(0)}{3}, \qquad g''_{\rm y}(0) = \frac{g''_{\rm f}(0)}{4}\]
	holds for any choice of $g \in \set{a,b,c}$. Applying this to the Wronskian $W(a_{\rm{y}}, b_{\rm{y}}, c_{\rm{y}})(0)$, claim (f) follows.
    \end{proof}

           	\begin{lemma}\label{lem:Lambert} Let $a, b,c \in \RR$. If $bc \ge 0$, then the equation $(x+a){\rm e}^{bx} = c$ is solved by 
        		\begin{align*}
        			x = \frac{\mathcal{W}_0(bc\cdot{\rm e}^{ab})}{b} - a, 
        		\end{align*}
 where $\mathcal{W}_0$ is the $0$-th branch of the Lambert $\mathcal{W}$-function. If $bc < 0$ and $|bc| \le e^{ab - 1}$, then the solution takes the same form, but using the branch $\mathcal{W}_{-1}$ of the Lambert $\mathcal{W}$-function. In all other cases, there is no real solution.
 \end{lemma}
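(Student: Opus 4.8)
The plan is to reduce the transcendental equation $(x+a)e^{bx}=c$ to the defining equation $v\,e^{v}=z$ of the Lambert $\mathcal{W}$-function by two elementary substitutions, and then to read off both the solution formula and the correct branch from the well-understood graph of $v \mapsto v\,e^{v}$. The substitutions and the inversion are routine; the only genuine care is needed in the branch selection and the solvability threshold.

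First I would absorb the additive constant by setting $u = x + a$. Using $e^{bx} = e^{bu}e^{-ab}$, the equation becomes $u\,e^{bu} = c\,e^{ab}$. Assuming $b \neq 0$ (for $b = 0$ the equation degenerates to the linear $x + a = c$, where no $\mathcal{W}$ is needed), I then rescale by $v = bu$, so $u = v/b$, which puts the equation in the canonical form $v\,e^{v} = bc\,e^{ab} =: z$. By the very definition of the Lambert $\mathcal{W}$-function, any real root satisfies $v = \mathcal{W}(z)$, and undoing the two substitutions yields $x = v/b - a = \mathcal{W}(z)/b - a = \mathcal{W}(bc\,e^{ab})/b - a$, which is exactly the claimed formula.

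The decisive step is then to pin down which branch of $\mathcal{W}$ to use and when a real solution exists at all, which I would settle by analysing $\psi(v) = v\,e^{v}$. This function decreases on $(-\infty,-1)$ from $0^-$ down to its global minimum $\psi(-1) = -1/e$, and increases on $(-1,\infty)$ up to $+\infty$; hence $\psi(v) = z$ has a real root if and only if $z \ge -1/e$. Since $e^{ab} > 0$, the sign of $z = bc\,e^{ab}$ equals the sign of $bc$. When $bc \ge 0$ we have $z \ge 0$, so there is a unique real root, namely $\mathcal{W}_0(z) \ge -1$, giving the first case. When $bc < 0$ we have $z < 0$, and a real root exists precisely when $z \ge -1/e$; rewriting $bc\,e^{ab} \ge -1/e$ as the bound on $|bc|$ stated in the lemma isolates this regime, in which the root lying on the branch $v \le -1$ is $v = \mathcal{W}_{-1}(z)$, giving the second case. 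Finally, if $z < -1/e$ (equivalently the bound on $|bc|$ fails) then $\psi(v) = z$ has no real solution, covering the last assertion.

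I expect the only real obstacle to be the sign-and-threshold bookkeeping in the last paragraph, namely correctly tracking that $\operatorname{sgn}(z) = \operatorname{sgn}(bc)$ and that the critical value $-1/e$ of $v \mapsto v\,e^{v}$ translates faithfully into the condition on $|bc|$, together with the fact that for $-1/e \le z < 0$ there are two real roots so that the lemma is selecting one specific branch. Everything else — the reduction to canonical form and the inversion via $\mathcal{W}$ — is mechanical.
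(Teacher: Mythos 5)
Your reduction is essentially the paper's own proof: the paper uses the chain of equivalences $(x+a)e^{bx}=c \Leftrightarrow (bx+ab)e^{bx+ab}=bc\,e^{ab} \Leftrightarrow bx+ab=\mathcal{W}(bc\,e^{ab})$ and reads the branch off the sign of the argument, and your substitutions $u=x+a$, $v=bu$ perform the identical computation. Your additional analysis of $\psi(v)=v e^{v}$ (the degenerate case $b=0$, existence iff $z\ge -1/e$, uniqueness for $z\ge 0$, and the fact that for $z\in(-1/e,0)$ there are \emph{two} real roots so the lemma is selecting the $\mathcal{W}_{-1}$ one) makes explicit bookkeeping that the paper leaves implicit, so in that respect your write-up is more complete than the original.

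However, the one step you yourself flagged as the only delicate point is precisely the step you did not carry out, and it conceals a discrepancy. For $bc<0$, the solvability condition $bc\,e^{ab}\ge -1/e$ rewrites as $|bc|\,e^{ab}\le e^{-1}$, i.e. $|bc|\le e^{-ab-1}$ --- \emph{not} the bound $|bc|\le e^{ab-1}$ stated in the lemma; the two differ whenever $ab\neq 0$. The stated bound is in fact a sign typo in the paper: in the lemma's application (proof of \cref{lem:bounds_f}) one has $a=-\tfrac{3}{2}\tau_1$ and $b=-\tfrac{1}{\tau_1}$, hence $ab=\tfrac{3}{2}$, and the threshold $e^{-5/2}=e^{-ab-1}$ is exactly what produces the constant $-\tfrac{5}{2}$ in $T^{\rm f}_\dagger$ in \eqref{eq:T_dagger_fw}, confirming $|bc|\le e^{-ab-1}$ as the correct condition. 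So your underlying mathematics is right, but your sentence asserting that the rewriting ``isolates the bound stated in the lemma'' is false as written; performing that one line of algebra would have exposed the mismatch and let you state (and correct) the threshold explicitly.
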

        	\begin{proof}
        		The Lambert $\mathcal{W}$-function \cite{CG96} is implicitly defined for all $x \ge 1/e$ by
        		\begin{align*}
        			\mathcal{W}(x){\rm e}^{\mathcal{W}(x)} = x,
        		\end{align*}
        		where $x \ge 0$ yields the branch $\mathcal{W}_0$ and $x \in [-1/e,0)$ the branch $\mathcal{W}_{-1}$. Using the chain of equivalences
        		\begin{align*}
        			&(x+a){\rm e}^{bx} = c\\
        			\Leftrightarrow\quad & (bx+ab){\rm e}^{bx + ab} = bc\cdot {\rm e}^{ab}\\
        			\Leftrightarrow\quad & bx + ab = \mathcal{W}(bc\cdot{\rm e}^{ab}),
        		\end{align*}
the desired result follows.
        	\end{proof}

\bibliographystyle{alpha}
\bibliography{references}

\end{document}